\numberwithin{equation}{section}
\title{Linear and angular momentum spaces for Majorana spinors}
\author{Leonardo Pedro\\ Centro de Fisica Teorica de Particulas,
  Portugal\\ leonardo@cftp.ist.utl.pt}
\date{\today}
\theoremstyle{plain}% default
\newtheorem{thm}{Theorem}[section]
\newtheorem{lem}[thm]{Lemma}
\newtheorem{prop}[thm]{Proposition}
\newtheorem{rem}[thm]{Remark}
\newtheorem*{cor}{Corollary}
\theoremstyle{definition}
\newtheorem{defn}[thm]{Definition}
\theoremstyle{remark}
\begin{document}
\maketitle
\begin{abstract}
In a Majorana basis, the Dirac equation for a free spin one-half
particle is a 4x4 real matrix differential equation. The solution can
be a Majorana spinor, a 4x1 real column matrix, whose entries are real
functions of the space-time.

Can a Majorana spinor, whose entries are real functions of the
space-time, describe the energy, linear and angular momentums of a
free spin one-half particle? We show that it can.

We show that the Majorana spinor is an irreducible
representation of the double cover of the proper
orthochronous Lorentz group and of the full Lorentz group.
The Fourier-Majorana and Hankel-Majorana
transforms are defined and related to the linear and angular momentums
of a free spin one-half particle.
\end{abstract}

\tableofcontents
\pagebreak
\section{Introduction}

In 1928 Paul Dirac published \textit{``The Quantum Theory of the
  Electron''} \cite{Dirac}, in which he introduced a relativistic
equation for the electron in interaction with an electromagnetic
potential, consisting of a complex 4x4 matrix
differential equation whose solution is a complex 4x1 column matrix,
called Dirac spinor, whose entries are complex functions of the
space-time. Using the algebra of the 4x4 matrices, he related the
electron's spin with the Lorentz group. He also noticed the existence
of negative-energy solutions which he used later in the prediction of
the existence of the anti-electron, the positron.

In 1937 Ettore Majorana published \textit{``A symmetric
theory of electrons and positrons''} \cite{majorana}, in which he
noted that \textit{``it is perfectly, and most
naturally, possible to formulate a theory of elementary neutral
particles which do not have negative (energy) states''}.
His work was based on the fact that there is a basis where the Dirac
equation for the free electron is a real, instead of complex, 4x4
matrix differential equation whose solution can be a real 4x1 column
matrix, called Majorana spinor, whose entries are real functions of
the space-time.
The existence of both positive and negative energy
solutions is a consequence of the extension, through the use of
complex numbers, of the free Dirac equation to include the
electromagnetic interaction. For neutral particles, the free Dirac
equation do not have to be extended in the same way it is when including
the electromagnetic interaction and, therefore, it
is possible to have a theory without negative energy solutions. 
Ettore Majorana disappeared in 1938.

There are applications of the Majorana's discovery in
theories trying to explain phenomena in neutrino physics, dark
matter searches, the fractional quantum Hall effect and superconductivity
\cite{solidstate}. 
There are good references on spinors \cite{pal, todorov, dreiner} and on its
relation with the Lorentz group \cite{pin}.  It is known (section 5 of
\cite{irreducible}) that the Majorana spinor is an irreducible
representation of the double cover of the proper orthochronous Lorentz
group. However, we could not find a study (without second quantization
operators) of the Majorana spinor solutions of the free Dirac
equation.

In the context of Clifford Algebras, the generalization of the Dirac
matrices algebra to other dimensions and metrics, there is
work on the geometric square roots of -1 
\cite{squareroot, hestenes_recent} and on the generalizations of the
Fourier transform \cite{clifford}, with applications to image
processing.

Our goal is to show that (without second quantization operators)
all the kinematic properties of a free spin 1/2
particle are present in the real solutions of the
real free Dirac equation.
In chapter 2 we define the Majorana matrices and spinors.
In chapter 3 we show that the Majorana spinor is an
irreducible representation of the double cover of the proper
orthochronous Lorentz group and of the full Lorentz group.
In chapter 4 we show the invariance of the free Dirac equation under
the action of the Lorentz group.
In 5 and 6 we define the Fourier-Majorana and Hankel-Majorana
transforms of a Majorana spinor whose entries are Lebesgue square
integrable real functions of the space coordinates.
In 7, by comparison with the particle/anti-particle solutions of the free
Dirac equation, we show that the Majorana transforms are related with the
linear and angular momentums of a free spin 1/2 particle.
In 8, we extend the Majorana transforms to include the energy.

\section{Majorana Matrices and Spinors}
The Majorana matrices, $i\gamma^\mu$ with $\mu=0,1,2,3$,
are the Dirac Gamma matrices, $\gamma^\mu$, times the imaginary unit.
The notation maintains explicit the relation between the Majorana and
Dirac Gamma matrices. 

\begin{defn}
$\mathbf{M}(m,n,\mathbb{F})$ is the set of $m\times n$ matrices whose
entries are elements of the field $\mathbb{F}$.
\end{defn}

\begin{defn}
The Majorana matrices, $i\gamma^\mu\in \mathbf{M}(4,4,\mathbb{C})$, are $4\times 4$ complex matrices
with anti-commutator $\{i\gamma^\mu,i\gamma^\nu\}$:
\begin{align}
(i\gamma^\mu)(i\gamma^\nu)+(i\gamma^\nu)(i\gamma^\mu)=-2g^{\mu\nu},\ \mu,\nu=0,1,2,3
\end{align}
Where $g=diag(1,-1,-1,-1)$ is the Minkowski metric.
The pseudo-scalar is $i\gamma^5\equiv
-\gamma^0\gamma^1\gamma^2\gamma^3$.
\end{defn}

The product of 2 Dirac Gamma
matrices is minus the
product of 2 corresponding Majorana matrices:
$\gamma^\mu\gamma^\nu=-i\gamma^\mu i\gamma^\nu$.

\begin{defn}
$\Gamma_-\equiv\{i\gamma^0,i\gamma^5,\gamma^0\gamma^5,i\gamma^5\gamma^0\gamma^j:
\ j=1,2,3\}$

$\Gamma_+\equiv\{1, \gamma^0\gamma^j,i\gamma^j,\gamma^5\gamma^j:\
j=1,2,3\}$

$\Gamma\equiv \Gamma_-\cup\Gamma_+$
\end{defn}

From the anti-commutator of the Majorana matrices, the
matrices in $\Gamma_\pm$ square respectively to $\pm 1$, and all
matrices in $\Gamma$ either commute or anti-commute with each other. 

\begin{defn}
The sets of matrices that (anti-)commute with a matrix $A\in \Gamma$ are:
$\Omega_\pm(A)=\{B\in\Gamma: AB=\pm BA\}$.
\end{defn}

\begin{prop}
\label{prop:anticommute}
The sets $\Omega_\pm(A)\cap \Gamma_+$ and $\Omega_\pm(A) \cap \Gamma_-$ are not
empty for all $A\in\Gamma\setminus\{1\}$.
\end{prop}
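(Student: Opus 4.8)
The plan is to treat the four sets $\Omega_+(A)\cap\Gamma_+$, $\Omega_+(A)\cap\Gamma_-$, $\Omega_-(A)\cap\Gamma_+$, $\Omega_-(A)\cap\Gamma_-$ separately, using throughout that every element of $\Gamma$ either commutes or anticommutes with $A$. The first set is free: $1\in\Gamma_+$ commutes with every matrix, so $1\in\Omega_+(A)\cap\Gamma_+$, and this uses nothing about $A$.

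For $\Omega_+(A)\cap\Gamma_-$ I would single out the three elements $i\gamma^0$, $i\gamma^5$, $\gamma^0\gamma^5$, all of which lie in $\Gamma_-$, and use the identity $(i\gamma^0)(i\gamma^5)(\gamma^0\gamma^5)=1$, which follows from $(i\gamma^0)(i\gamma^5)=-\gamma^0\gamma^5$ together with $(\gamma^0\gamma^5)^2=-1$. Writing $A\cdot 1=1\cdot A$ and commuting $A$ to the right past these three factors, each step contributing the sign $\varepsilon_k=\pm 1$ determined by $A q_k=\varepsilon_k q_k A$, forces $\varepsilon_1\varepsilon_2\varepsilon_3=+1$; hence $A$ anticommutes with zero or two of the three, so it commutes with at least one $q_k\in\Gamma_-$. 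That $q_k$ is the desired witness, and again the argument does not even use $A\neq 1$.

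For the two anticommuting sets I would argue by centrality. First one checks that the complex algebra generated by $\Gamma_+$ is all of $\mathbf{M}(4,4,\mathbb{C})$: from $i\gamma^j\in\Gamma_+$ one recovers $\gamma^j=-i(i\gamma^j)$ for $j=1,2,3$, and then $\gamma^0=-(\gamma^0\gamma^1)\gamma^1$ using $\gamma^0\gamma^1\in\Gamma_+$, so the algebra contains $\gamma^0,\gamma^1,\gamma^2,\gamma^3$ and hence their $16$ products, which span $\mathbf{M}(4,4,\mathbb{C})$. Likewise the algebra generated by $\Gamma_-$ contains $\gamma^0=-i(i\gamma^0)$, $\gamma^5=-i(i\gamma^5)$, and $\gamma^0\gamma^1=-(i\gamma^5)(i\gamma^5\gamma^0\gamma^1)$, whence $\gamma^1=\gamma^0(\gamma^0\gamma^1)$ and similarly $\gamma^2,\gamma^3$, so it too is all of $\mathbf{M}(4,4,\mathbb{C})$. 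Now if $A$ commuted with every element of $\Gamma_+$ it would commute with the whole algebra they generate, so $A$ would be a scalar matrix; but each element of $\Gamma\setminus\{1\}$ has zero trace and is nonzero (it squares to $\pm 1$), hence is not scalar, contradicting $A\neq 1$. Therefore some element of $\Gamma_+$ anticommutes with $A$, i.e. $\Omega_-(A)\cap\Gamma_+\neq\emptyset$, and the identical argument applied to $\Gamma_-$ gives $\Omega_-(A)\cap\Gamma_-\neq\emptyset$.

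I expect no genuine obstacle: the real content is the two small algebraic facts (the product identity for $\{i\gamma^0,i\gamma^5,\gamma^0\gamma^5\}$ and that each of $\Gamma_+$ and $\Gamma_-$ generates the full matrix algebra), together with the observation that $1$ is the only scalar matrix in $\Gamma$; the only thing to be careful about is the sign and phase bookkeeping in those computations. If one preferred to avoid even that much structure, a completely mechanical alternative is to pick one representative $A$ in each orbit of $\Gamma\setminus\{1\}$ under spatial rotations and exhibit by inspection the required members of $\Omega_\pm(A)\cap\Gamma_\pm$.
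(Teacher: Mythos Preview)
The paper states Proposition~\ref{prop:anticommute} without proof, so there is no argument to compare yours against directly; your outline is sound and the four cases are handled cleanly. The product identity $(i\gamma^0)(i\gamma^5)(\gamma^0\gamma^5)=1$ and the sign-counting for $\Omega_+(A)\cap\Gamma_-$ are correct, as are the two generating claims for $\Gamma_+$ and $\Gamma_-$.

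There is, however, one genuine gap in the logical order. In the centrality step you conclude that $A$ is not a scalar matrix by invoking ``each element of $\Gamma\setminus\{1\}$ has zero trace''. In this paper that tracelessness is precisely the Corollary that immediately follows Proposition~\ref{prop:anticommute} and is \emph{derived from it} (via the existence of $B\in\Omega_-(A)\cap\Gamma_+$); Proposition~\ref{prop:basis} in turn relies on that Corollary. So as written your argument for $\Omega_-(A)\cap\Gamma_\pm$ is circular within the paper's framework. The repair is easy and keeps your structure intact: instead of appealing to the trace, note directly that every $A\in\Gamma\setminus\{1\}$ is, up to a power of $i$, a product $\gamma^{\mu_1}\cdots\gamma^{\mu_k}$ of $k\ge 1$ distinct Dirac matrices, and such a product anticommutes with $\gamma^\nu$ whenever $\nu\notin\{\mu_i\}$ and $k$ is odd, or $\nu\in\{\mu_i\}$ and $k$ is even; in every case some such $\nu$ exists, so $A$ is not central and hence not scalar. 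With that substitution your centrality argument for $\Omega_-(A)\cap\Gamma_+$ and $\Omega_-(A)\cap\Gamma_-$ is complete and independent of the Corollary.
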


\begin{cor}
The matrices in $\Gamma\setminus\{1\}$ have null trace and
determinant 1.
\end{cor}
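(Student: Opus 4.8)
The plan is to use Proposition~\ref{prop:anticommute} to attach to each $A\in\Gamma\setminus\{1\}$ an \emph{invertible} matrix anticommuting with it, extract $\operatorname{tr}(A)=0$ from a conjugation argument, and then pin down $\det(A)$ from the fact that every element of $\Gamma$ squares to $\pm 1$.

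First I would fix $A\in\Gamma\setminus\{1\}$. By Proposition~\ref{prop:anticommute} the set $\Omega_-(A)\cap\Gamma_+$ is non-empty, so I can choose $B\in\Gamma_+$ with $AB=-BA$. Since $B\in\Gamma_+$ we have $B^2=1$, so $B$ is invertible with $B^{-1}=B$, and hence $BAB^{-1}=BAB=-B^2A=-A$. Cyclicity of the trace then gives $\operatorname{tr}(A)=\operatorname{tr}(BAB^{-1})=\operatorname{tr}(-A)=-\operatorname{tr}(A)$, and because we work over $\mathbb{C}$ this forces $\operatorname{tr}(A)=0$. It is exactly here that it matters that the anticommuting partner can be taken in $\Gamma_+$ (hence a square root of $1$, hence invertible), which is why the proposition is phrased for $\Gamma_\pm$ separately and not merely for $\Gamma$.

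For the determinant I would recall from the discussion after the definition of $\Gamma_\pm$ that every $A\in\Gamma$ satisfies $A^2=\varepsilon$ with $\varepsilon=+1$ when $A\in\Gamma_+$ and $\varepsilon=-1$ when $A\in\Gamma_-$. Thus the minimal polynomial of $A$ divides $x^2-\varepsilon$, a polynomial with distinct roots, so $A$ is diagonalizable over $\mathbb{C}$ with all eigenvalues in $\{+1,-1\}$ (if $A\in\Gamma_+$) or in $\{+i,-i\}$ (if $A\in\Gamma_-$). Since $A$ is $4\times4$ and $\operatorname{tr}(A)=0$, in the first case the eigenvalues must be two $+1$'s and two $-1$'s, and in the second case two $+i$'s and two $-i$'s; in either case $\det(A)$, being the product of the eigenvalues, equals $1$. (Equivalently, in the $\Gamma_-$ case one can apply the $\Gamma_+$ argument to $iA$, which is an involution of trace $0$, and note $\det(iA)=i^4\det(A)=\det(A)$.)

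There is no genuine obstacle in this argument; it is essentially bookkeeping. The only points needing a word of care are the invertibility of $B$ (handled by choosing it in $\Gamma_+$) and the use of characteristic zero, so that $\operatorname{tr}(A)=-\operatorname{tr}(A)$ really yields $\operatorname{tr}(A)=0$, together with the standard fact that a matrix killed by a polynomial with simple roots is diagonalizable.
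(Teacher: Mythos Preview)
Your proof is correct. The trace argument is identical to the paper's: pick $B\in\Omega_-(A)\cap\Gamma_+$, use $B^2=1$ so that $B^{-1}=B$, conjugate, and conclude $\operatorname{tr}(A)=0$. For the determinant, however, you take a genuinely different route. The paper treats the two halves of $\Gamma$ separately: for $A\in\Gamma_-$ it writes $A=e^{\frac{\pi}{2}A}$ (valid since $A^2=-1$) and invokes $\det(e^M)=e^{\operatorname{tr}(M)}$ to get $\det(A)=e^0=1$; for $A\in\Gamma_+\setminus\{1\}$ it then picks a \emph{commuting} partner $B\in\Omega_+(A)\cap\Gamma_-$ (the other half of Proposition~\ref{prop:anticommute}), so that $(AB)^2=-1$ and the previous case yields $\det(AB)=1$, whence $\det(A)=\det(AB)/\det(B)=1$. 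Your eigenvalue-counting argument is more self-contained and uniform---it needs only $A^2=\pm1$, diagonalizability over $\mathbb{C}$, and the trace constraint, and handles both $\Gamma_+$ and $\Gamma_-$ in one stroke---whereas the paper's approach exercises both conclusions of Proposition~\ref{prop:anticommute} and the exponential identity, which is in keeping with its later systematic use of matrix exponentials.
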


\begin{proof}
If $A\in\Gamma\setminus\{1\}$. Since there is $B\in\Omega_-(A)\cap
\Gamma_+$, we have $tr(A)=tr(BAB)=-tr(A)$.

Let $A\in \Gamma_-$. Since $A^2=-1$, then
$A=e^{\frac{\pi}{2}A}$ and $det(A)=e^{\frac{\pi}{2}tr(A)}=1$.

Let $A\in \Gamma_S\setminus\{1\}$.  Since there is $B\in\Omega_+(A)\cap
\Gamma_-$, we have $(AB)\in \Gamma_-$ and so $det(AB)=1$.
Since $det(B)=1$, then $det(A)=1$.
\end{proof}

\begin{prop}
\label{prop:basis}
$\Gamma$ is a basis
for the space of $4\times 4$ complex matrices.
\end{prop}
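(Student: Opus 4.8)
The plan is to use trace orthogonality. Since $\mathbf{M}(4,4,\mathbb{C})$ has complex dimension $16$, and $\Gamma$ has exactly $16$ elements ($\Gamma_-$ contributes $1+1+1+3=6$ matrices, $\Gamma_+$ contributes $1+3+3+3=10$), it is enough to show that the matrices in $\Gamma$ are linearly independent over $\mathbb{C}$. For this I would establish the orthogonality relation: for $A,B\in\Gamma$, $\operatorname{tr}(AB)=0$ whenever $A\neq B$, while $\operatorname{tr}(A^2)=\pm 4\neq 0$ because $A^2=\pm 1$ (the matrices in $\Gamma_\pm$ square to $\pm 1$).

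The second statement is immediate. For the first I would split into two cases, using Proposition \ref{prop:anticommute} together with the observation that any two matrices in $\Gamma$ either commute or anticommute. If $A$ and $B$ anticommute, cyclicity of the trace gives $\operatorname{tr}(AB)=\operatorname{tr}(BA)=-\operatorname{tr}(AB)$, so $\operatorname{tr}(AB)=0$. If $A$ and $B$ commute with $A\neq B$, I would invoke that $\Gamma$ is closed under multiplication up to a scalar factor in $\{1,-1,i,-i\}$: this follows from the anticommutator $\{i\gamma^\mu,i\gamma^\nu\}=-2g^{\mu\nu}$, which lets one reorder any product of Majorana matrices and cancel repeated factors, reducing it to a unit scalar times an ordered product $i\gamma^{\mu_1}\cdots i\gamma^{\mu_k}$ with $\mu_1<\cdots<\mu_k$; there are exactly $16$ such ordered products, matching the $16$ elements of $\Gamma$ up to scalars. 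Hence $AB=\lambda C$ with $C\in\Gamma$ and $\lambda\in\{1,-1,i,-i\}$, and $C\neq 1$, since $C=1$ would make $A$ a scalar multiple of $B$, which is impossible for two distinct elements of the list $\Gamma$. Then $\operatorname{tr}(AB)=\lambda\operatorname{tr}(C)=0$ by the preceding corollary, which says every matrix in $\Gamma\setminus\{1\}$ has null trace.

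Granting the orthogonality relation, linear independence is routine: suppose $\sum_{A\in\Gamma}c_A A=0$; for a fixed $B\in\Gamma$, multiply on the right by $B$ and take the trace to get $\sum_{A\in\Gamma}c_A\operatorname{tr}(AB)=0$. Every term with $A\neq B$ vanishes, leaving $c_B\operatorname{tr}(B^2)=\pm 4c_B=0$, so $c_B=0$. Since $B$ was arbitrary, $\Gamma$ is linearly independent, and by the dimension count it is a basis for $\mathbf{M}(4,4,\mathbb{C})$.

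I expect the main obstacle to be the commuting case of the orthogonality relation — namely making precise that the product of two distinct elements of $\Gamma$ is a scalar multiple of an element of $\Gamma\setminus\{1\}$, and in particular not a scalar multiple of the identity. Once the closure-up-to-a-phase property of $\Gamma$ is pinned down from the anticommutator relations, the rest is bookkeeping with cyclicity of the trace and the null-trace corollary.
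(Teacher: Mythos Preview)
Your proposal is correct and follows essentially the same route as the paper: count dimensions, then use a trace-orthogonality argument together with the null-trace corollary to get linear independence. The only cosmetic difference is that the paper pairs via $\operatorname{tr}(A_j^\dagger B)$ rather than $\operatorname{tr}(A_j B)$; since in a unitary basis each $A_j\in\Gamma$ satisfies $A_j^\dagger=\pm A_j$, the two pairings agree up to sign, and the Hermitian version has the mild advantage that $A_j^\dagger A_j=1$ always gives trace $+4$, so no $\pm$ bookkeeping is needed and your commuting/anticommuting case split collapses to the single observation that $A_j^\dagger A_i\in\Gamma_2\setminus\{\pm 1\}$ for $i\neq j$.
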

\begin{proof}
There are only 16 linearly independent $4\times 4$ complex matrices.

Let $B\equiv \sum_{i=1}^{16} a_iA_i$, where $a_i\in \mathbb{C}$ and
$A_i\in\Gamma$ are different elements of the set for each $i$.
We have $tr(A^\dagger_j B)=4a_j$, for $j=1,...,16$.
Then, $B=0$ implies that all the scalars $a_i$ are null 
and so all the elements in $\Gamma$ are linearly independent.
\end{proof}

\begin{prop}
\label{prop:commute}
For all commuting matrices
$A,B\in\Gamma\setminus\{1\}$, $AB=BA$: all matrices in
$\Gamma\setminus\{1,A,B,AB\}$ anti-commute with $A$ or
$B$. That is, $\Omega_-(A)\cup
\Omega_-(B)=\Gamma\setminus\{1,A,B,AB\}$.
\end{prop}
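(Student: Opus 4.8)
The plan is to prove the equivalent statement $\Omega_+(A)\cap\Omega_+(B)=\{1,A,B,AB\}$, where (here and in the statement) $AB$ denotes the unique element of $\Gamma$ equal to $\pm AB$; taking complements in $\Gamma$, and using that $\Gamma=\Omega_+(C)\sqcup\Omega_-(C)$ for any $C\in\Gamma$, this gives $\Omega_-(A)\cup\Omega_-(B)=\Gamma\setminus\{1,A,B,AB\}$. Implicit in the hypothesis is $A\neq B$ (for $A=B$ the claimed equality fails, since half of $\Gamma$ commutes with $A$), which I assume. The inclusion $\{1,A,B,AB\}\subseteq\Omega_+(A)\cap\Omega_+(B)$ is routine: a product of two elements of $\Gamma$ is $\pm$ an element of $\Gamma$ by the anticommutator relations, and since $A,B$ commute the $\Gamma$-representative of $AB$ commutes with both (the sign cancels), while $1,A,B$ obviously do. These four are distinct: $A,B\neq1$ by hypothesis, $A\neq B$ by assumption, and the $\Gamma$-representative of $AB$ is none of $1,A,B$ because that would force $B=\pm A$, $B=\pm1$, or $A=\pm1$, all impossible (recall $-C\notin\Gamma$ for $C\in\Gamma$, since $C+(-C)=0$ would contradict the linear independence of $\Gamma$ from Proposition \ref{prop:basis}).

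The content is the reverse inclusion, which I would obtain by packaging the signs into a finite group. Let $G=\{\pm C:C\in\Gamma\}\subseteq\mathbf{M}(4,4,\mathbb{C})$; by the anticommutator relations $G$ is closed under products, hence a group, and by the linear independence of $\Gamma$ (Proposition \ref{prop:basis}) the $16$ elements of $\Gamma$ form a transversal of $\{\pm1\}$ in $G$, so $|G|=32$. Its center is exactly $\{\pm1\}$: anything commuting with all of $\Gamma$ commutes with every $4\times4$ matrix by Proposition \ref{prop:basis}, hence is scalar, hence $\pm1$. Therefore $V\equiv G/\{\pm1\}$ has order $16$; since every element of $\Gamma$ squares to $\pm1$, each element of $V$ has order dividing $2$, so $V$ is an $\mathbb{F}_2$-vector space of dimension $4$ with $\Gamma$ as a set of coset representatives. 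Since the commutator on $G$ takes values in $\{\pm1\}$, it descends to a well-defined alternating bilinear form $\omega:V\times V\to\mathbb{F}_2$ with $\omega(\bar C,\bar D)=0\iff CD=DC$; the center computation says $\omega$ is non-degenerate.

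Now it is symplectic linear algebra. With $a=\bar A$, $b=\bar B\in V$: these are nonzero and distinct (as $A,B\neq1$ and $A\neq\pm B$), hence linearly independent over $\mathbb{F}_2$, and $\omega(a,b)=0$ by hypothesis. An element $C\in\Gamma$ commutes with both $A$ and $B$ iff $\bar C\in\{a,b\}^\perp$, and distinct $C$ have distinct $\bar C$, so it suffices to show $\{a,b\}^\perp=\mathrm{span}(a,b)$, whose representatives in $\Gamma$ are $1,A,B,AB$. But $\omega(a,a)=\omega(b,b)=\omega(a,b)=0$ makes $\mathrm{span}(a,b)$ totally isotropic, so $\mathrm{span}(a,b)\subseteq\{a,b\}^\perp$, while non-degeneracy gives $\dim\{a,b\}^\perp=4-\dim\mathrm{span}(a,b)=2$; equality follows. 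Hence $\Omega_+(A)\cap\Omega_+(B)=\{1,A,B,AB\}$, which is the claim.

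The only delicate point, I expect, is the bookkeeping of the middle step — that $G$ is a group of order $32$ with $\Gamma$ as a transversal, that $\omega$ is well defined and biadditive, and that the center is precisely $\{\pm1\}$ — all of which is elementary but relies on the linear independence of $\Gamma$ and on keeping track of the signs hidden in the symbol $AB$. One can also avoid naming $V$ and argue by counting: the map $X\mapsto DX$ for a fixed $D\in\Omega_-(A)$ (which exists by Proposition \ref{prop:anticommute}) is, after a sign correction, a bijection $\Omega_+(A)\to\Omega_-(A)$, so $|\Omega_+(A)|=8$; the commutator sign with $B$ yields a homomorphism from the order-$16$ subgroup $\{\pm X:X\in\Omega_+(A)\}$ of $G$ to $\{\pm1\}$ whose kernel corresponds to $\Omega_+(A)\cap\Omega_+(B)$; this kernel contains $\{1,A,B,AB\}$ and so has $4$ or $8$ elements, and the value $8$ would give $\Omega_+(A)=\Omega_+(B)$, hence $A=B$ by non-degeneracy — a contradiction — so it has exactly $4$.
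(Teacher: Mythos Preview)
Your argument is correct. The paper states this proposition without proof, so there is no approach to compare against; you have supplied what the paper omits. The group $G$ you construct is exactly the group $\Gamma_2$ that the paper defines immediately after this proposition, and your identification of the quotient $V=G/\{\pm1\}$ with a $4$-dimensional symplectic space over $\mathbb{F}_2$ is the standard and cleanest way to handle commutation questions of this type. The key input---that the center of $G$ is $\{\pm1\}$, hence $\omega$ is non-degenerate---you obtain correctly from Proposition~\ref{prop:basis}, and the dimension count $\dim\{a,b\}^\perp=4-2=2$ then forces $\{a,b\}^\perp=\mathrm{span}(a,b)$.

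Two minor remarks. First, your care about the symbol $AB$ is warranted: for commuting $A,B\in\Gamma\setminus\{1\}$ the matrix product $AB$ need not lie in $\Gamma$ (e.g.\ $A=\gamma^0\gamma^1$, $B=i\gamma^2$ gives $AB=-\gamma^5\gamma^3$), so the statement only makes sense with your reading of $AB$ as the $\Gamma$-representative of $\pm AB$. Second, the alternative counting argument you sketch at the end---using Proposition~\ref{prop:anticommute} to get $|\Omega_+(A)|=8$ and then ruling out $|\Omega_+(A)\cap\Omega_+(B)|=8$ via non-degeneracy---is also sound and is perhaps closer to a bare-hands verification the author may have had in mind, but the symplectic formulation is more transparent.
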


\begin{defn}
$\Gamma_2$ is the group of 32 Majorana matrices
products:
\begin{align}
\Gamma_2\equiv\{\pm 1,\ \pm i\gamma^\mu,\ \pm \gamma^0\gamma^j,
\ \pm i\gamma^5\gamma^0\gamma^j,\ \pm \gamma^\mu\gamma^5,\ \pm
i\gamma^5: \mu=0,1,2,3,\ j=1,2,3\}
\end{align}

\end{defn}

\begin{defn}
A $4\times 4$ representation of the
Majorana matrices, $M$, is a map from the Majorana matrices to the
space of $4\times 4$ complex matrices, verifying:
\begin{align}
\{M(i\gamma^\mu),M(i\gamma^\nu)\}&=-2g^{\mu\nu},\  
\mu,\nu=0,1,2,3
\end{align}
\end{defn}

\begin{prop}
\label{prop:similar}
Two $4\times 4$ representations of the Majorana matrices are
related by a similarity transformation, unique up to a complex factor \cite{diracmatrices}.
\end{prop}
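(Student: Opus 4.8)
The plan is to establish the classical ``fundamental theorem of Dirac matrices'' (Pauli's lemma) \cite{diracmatrices} by averaging over a finite group and then applying Schur's lemma, leaning on Proposition \ref{prop:basis} and the Corollary above. First I would observe that any representation $M$ of the Majorana matrices is automatically defined on the whole group $\Gamma_2$: every element of $\Gamma_2$ is, up to sign, a product of the $i\gamma^\mu$, and the relations $\{M(i\gamma^\mu),M(i\gamma^\nu)\}=-2g^{\mu\nu}$ determine all such products and force $M(\mathbf 1)=\mathbf 1$, $M(-\mathbf 1)=-\mathbf 1$. Because the proof of the Corollary used nothing about the Majorana matrices beyond their squares and their mutual (anti-)commutation --- all of which $M$ preserves --- it carries over to show $tr\big(M(A)\big)=0$ for every $A\in\Gamma_2\setminus\{\pm\mathbf 1\}$, while $M(A)^{-1}=\pm M(A)$. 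Since for distinct $A_i,A_j\in\Gamma$ the product $A_j^{-1}A_i$ lies in $\Gamma_2\setminus\{\pm\mathbf 1\}$, the trace argument of Proposition \ref{prop:basis} applies verbatim with $M(A_j)^{-1}$ in place of $A_j^\dagger$, so $\{M(A):A\in\Gamma\}$ is again a basis of $\mathbf{M}(4,4,\mathbb{C})$. Consequently $\mathbb{C}^4$ is an \emph{irreducible} $\Gamma_2$-module under $M$, and likewise under a second representation $M'$, since any invariant subspace would be invariant under a spanning set of matrices and hence trivial.

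Next I would construct an intertwiner by averaging. For an arbitrary $F\in\mathbf{M}(4,4,\mathbb{C})$ put $S\equiv\sum_{A\in\Gamma_2}M'(A)\,F\,M(A)^{-1}$. For any $B\in\Gamma_2$ the map $A\mapsto BA$ is a bijection of $\Gamma_2$, so $M'(B)\,S\,M(B)^{-1}=\sum_{A\in\Gamma_2}M'(BA)\,F\,M(BA)^{-1}=S$; that is, $M'(B)\,S=S\,M(B)$ for every $B\in\Gamma_2$, in particular for every Majorana matrix.

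The step I expect to be the main obstacle is showing that some choice of $F$ makes $S$ nonzero. Here I would compute the trace of the linear endomorphism $F\mapsto S$ of $\mathbf{M}(4,4,\mathbb{C})$: vectorizing $F$, this endomorphism equals $\sum_{A\in\Gamma_2}\big(M(A)^{-1}\big)^{T}\otimes M'(A)$, whose trace is $\sum_{A\in\Gamma_2}tr\big(M(A)^{-1}\big)\,tr\big(M'(A)\big)$. By the vanishing of traces on $\Gamma_2\setminus\{\pm\mathbf 1\}$, only $A=\pm\mathbf 1$ contribute, giving $4\cdot 4+(-4)(-4)=32\neq 0$. An endomorphism with nonzero trace is not identically zero, so $S\neq 0$ for at least one $F$.

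Finally I would invoke Schur's lemma: for such an $F$, $S$ is a nonzero intertwiner between the irreducible $\Gamma_2$-modules $(\mathbb{C}^4,M)$ and $(\mathbb{C}^4,M')$, so $\ker S$ and $S(\mathbb{C}^4)$ are submodules, $\ker S=\{0\}$ by nonvanishing, and $S$ is invertible; thus $M'(B)=S\,M(B)\,S^{-1}$ for all $B$, in particular for the Majorana matrices. For uniqueness, if $S'$ also conjugates $M$ into $M'$ then $S^{-1}S'$ commutes with every $M(B)$, $B\in\Gamma$, hence with all of $\mathbf{M}(4,4,\mathbb{C})$ since those span it, so $S^{-1}S'=c\,\mathbf 1$ for some $c\in\mathbb{C}\setminus\{0\}$; that is, $S'=cS$, which is the asserted uniqueness up to a complex factor.
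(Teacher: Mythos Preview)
Your proof is correct and follows the same overall strategy as the paper: extend a representation to the group $\Gamma_2$, average an arbitrary matrix over $\Gamma_2$ to produce an intertwiner $S$, and use that the images of $\Gamma$ span $\mathbf{M}(4,4,\mathbb{C})$ to conclude uniqueness up to scalar. The differences are tactical. For the nonvanishing of $S$, the paper introduces a second averaged matrix $T=\sum_{g}A(g^{-1})T'B(g)$, argues that $TS$ commutes with every $A(h)$ and is therefore a scalar, and then shows that $T',S'$ can be chosen so that this scalar is nonzero (after which one normalizes $T=S^{-1}$); your trace computation on the linear map $F\mapsto S$ is a cleaner way to reach the same conclusion, and it avoids the somewhat murky step in the paper where $TS$ is rewritten as a single sum. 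For invertibility, the paper gets it from $TS=c\neq 0$, whereas you first deduce irreducibility (immediate from the spanning property) and then apply Schur's lemma; this is more conceptual and also makes the uniqueness step transparent, since the commutant of an irreducible set is scalars. Both arguments are standard incarnations of Pauli's fundamental lemma, and yours would be a perfectly acceptable replacement for the paper's.
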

\begin{proof}
Given a $4\times 4$ representation of the Majorana matrices,
$M$, we extend the domain from the Majorana matrices to $\Gamma$,
recursively, in such a way that for $k_1,k_2\in \Gamma_2$, if we know
$M(k_1)$ and $M(k_2)$, then $M(k_1k_2)\equiv M(k_1)M(k_2)$.

Let $A$ and $B$ be $4\times 4$ representations of the Majorana matrices.

We define the matrix $S$ as:
\begin{align}
S\equiv\sum_{g\in\Gamma_2}B(g^{-1}) S' A(g)
\end{align}
Where $S'$ will be defined later.

For all $h\in\Gamma$, it verifies $S A(h)=B(h)S$:
\begin{align}
&SA(h)=\sum_{g\in\Gamma_2}B(g^{-1})S' A(gh)\\
&=\sum_{l\in\Gamma_2}B(h l^{-1})S' A(l)=B(h)S
\end{align}
We define the matrix $T$ as:
\begin{align}
T\equiv \sum_{g\in\Gamma_2}A(g^{-1})T' B(g)
\end{align}

Where $T'$ will be defined later.
For all $h\in\Gamma$, it verifies $T B(h)=A(h)T$. Consequently, $T S
A(h)=A(h)T S$. 

Since $\gamma^\mu$ and $A(\gamma^\mu)$ obey to the same commutation
relations, the set $\{A(k), k\in\Gamma\}$ is also a basis for the
space of $4\times 4$ matrices. Therefore, $T S$ is equal to the identity matrix times a
coefficient. To check what the coefficient is:
\begin{align}
TS=\sum_{g\in\Gamma_2}A(g^{-1})T'S' A(g)
\end{align}

We choose $T'$, $S'$ such that $TS$ is non-null. Suppose
that such $T'$,$S'$ do not exist, then for all indexes $i,j$
$\sum_{g\in\Gamma_2}A_{ij}(g^{-1})A(g)=0$ which implies that the
matrices $A(g)$ are linear dependent, in contradiction with
Proposition \ref{prop:basis}. With a proper normalization, we can make $T=S^{-1}$.

Suppose that for all $h\in\Gamma$, $S'$ is invertible and also
verifies $S' A(h)=B(h)S'$. Then  $S'^{-1} SA(h)=A(h)S'^{-1} S$ and
again $S'^{-1} S$ must be proportional to the identity. 
Let $c\in \mathbb{C}$ be such that $S'^{-1}
S=c$. 
Multiplying on the left by $S'$, we get $S=c S'$.
\end{proof}

The Majorana matrices are themselves a $4\times 4$ representation of
the Majorana matrices. Therefore, choosing a $4\times 4$
representation of the Majorana matrices is the same as choosing a basis.

\begin{prop}
\label{prop:realsimilar}
Two $4\times 4$ unitary representations of the Majorana matrices are
related by an unitary similarity transformation, unique up to a phase.
\end{prop}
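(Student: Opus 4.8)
The plan is to take the similarity matrix $S$ constructed in the proof of Proposition \ref{prop:similar}, which satisfies $SA(h)=B(h)S$ for every $h\in\Gamma$, and to show that it can be normalized to a unitary matrix. The ingredient not available for general representations is that, when $A$ and $B$ are unitary, the matrix $A(h)$ is unitary for every $h\in\Gamma$: each such $h$ is, up to a sign, a product of the generators $i\gamma^\mu$, so $A(h)$ is a signed product of the unitary matrices $A(i\gamma^\mu)$, hence $A(h)^\dagger=A(h)^{-1}$; the same holds for $B$.

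First I would conjugate the intertwining relation. Taking the Hermitian adjoint of $SA(h)=B(h)S$ gives $A(h)^\dagger S^\dagger=S^\dagger B(h)^\dagger$, and using unitarity this rearranges to $A(h)S^\dagger=S^\dagger B(h)$ for all $h\in\Gamma$. Multiplying $SA(h)=B(h)S$ on the left by $S^\dagger$ and substituting,
\begin{align}
(S^\dagger S)A(h)=S^\dagger B(h)S=A(h)(S^\dagger S).
\end{align}
Thus $S^\dagger S$ commutes with every $A(h)$, $h\in\Gamma$, and since $\{A(h):h\in\Gamma\}$ is a basis of $\mathbf{M}(4,4,\mathbb{C})$ (as noted in the proof of Proposition \ref{prop:similar}), $S^\dagger S$ is proportional to the identity, $S^\dagger S=\lambda$. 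Because $S$ is invertible, $S^\dagger S$ is positive definite, so $\lambda>0$, and $U\equiv S/\sqrt{\lambda}$ is unitary while still satisfying $UA(h)=B(h)U$ for all $h\in\Gamma$.

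For the uniqueness statement I would appeal to Proposition \ref{prop:similar} directly: any two intertwiners differ by a complex factor, so if $U$ and $U'$ are both unitary then $U'=cU$ and $1=U'^\dagger U'=|c|^2 U^\dagger U=|c|^2$, forcing $|c|=1$, i.e. $c$ is a phase. I do not anticipate a genuine obstacle; the only points needing a little care are the observation that a unitary representation extends to unitary values on all of $\Gamma$ (and $\Gamma_2$), and the use of positive-definiteness of $S^\dagger S$ to guarantee a positive $\lambda$, so that $\sqrt{\lambda}$ makes sense and the rescaled $U$ is genuinely unitary rather than merely a scalar multiple of a unitary.
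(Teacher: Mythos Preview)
Your proof is correct and follows essentially the same strategy as the paper: both use the intertwiner $S$ from Proposition~\ref{prop:similar}, pass to the adjoint using unitarity of $A(h)$ and $B(h)$, and invoke Schur's lemma (you phrase it as $S^\dagger S$ commuting with the basis $\{A(h)\}$, the paper as $S^\dagger$ being a reverse intertwiner so that $S^\dagger=cS^{-1}$). The only cosmetic difference is that you deduce $\lambda>0$ from positive-definiteness of $S^\dagger S$, whereas the paper argues via the determinant; your version is arguably the cleaner of the two.
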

\begin{proof}
Let $A$ and $B$ be unitary representations of the Majorana matrices.
Then there is an invertible matrix $S$, unique up to a complex scalar,
such that $A(\gamma^\mu)S=S B(\gamma^\mu)$. Multiplying on the left by
$A^\dagger$ and on the right by $B^\dagger$ and making the hermitian
conjugate of the equation, we get $B(\gamma^\mu)S^\dagger=S^\dagger
A(\gamma^\mu)$.

So, for some complex $c$, $S^\dagger=c S^{-1}$. Applying the
determinant, we get $c=|det(S)|^2$ is real and positive. 
So, $(c^{-\frac{1}{2}}S)^\dagger(c^{-\frac{1}{2}}S)=1$.

Let both $S$ and $S'\equiv c S$, for some complex $c$, be
unitary. Then, $(cS)^\dagger(c S)=|c|^2=1$,
so $c=e^{i\theta}$ for some real $\theta$.
\end{proof}

In the Majorana bases, the Majorana matrices are $4\times 4$ real
orthogonal matrices. An example of the Majorana matrices in a particular Majorana basis is:
\begin{align}
\begin{array}{llllll}
\label{basis}
i\gamma^1=&\left[ \begin{smallmatrix}
+1 & 0 & 0 & 0 \\
0 & -1 & 0 & 0 \\
0 & 0 & -1 & 0 \\
0 & 0 & 0 & +1 \end{smallmatrix} \right]&
i\gamma^2=&\left[ \begin{smallmatrix}
0 & 0 & +1 & 0 \\
0 & 0 & 0 & +1 \\
+1 & 0 & 0 & 0 \\
0 & +1 & 0 & 0 \end{smallmatrix} \right]&
i\gamma^3=\left[ \begin{smallmatrix}
0 & +1 & 0 & 0 \\
+1 & 0 & 0 & 0 \\
0 & 0 & 0 & -1 \\
0 & 0 & -1 & 0 \end{smallmatrix} \right]\\
\\
i\gamma^0=&\left[ \begin{smallmatrix}
0 & 0 & +1 & 0 \\
0 & 0 & 0 & +1 \\
-1 & 0 & 0 & 0 \\
0 & -1 & 0 & 0 \end{smallmatrix} \right]&
i\gamma^5=&\left[ \begin{smallmatrix}
0 & -1 & 0 & 0 \\
+1 & 0 & 0 & 0 \\
0 & 0 & 0 & +1 \\
0 & 0 & -1 & 0 \end{smallmatrix} \right]&
=-\gamma^0\gamma^1\gamma^2\gamma^3
\end{array}
\end{align}
\begin{prop}
\label{prop:realsimilar}
Two $4\times 4$ real representations of the Majorana matrices are
related by a real similarity transformation, unique up to a real factor.
\end{prop}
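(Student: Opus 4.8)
The plan is to bootstrap from Proposition~\ref{prop:similar}, which already supplies a complex intertwiner unique up to a nonzero complex factor, and then exploit the reality of the two representations to fix that factor up to a real number.

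Step one: given real $4\times 4$ representations $A$ and $B$ of the Majorana matrices, apply Proposition~\ref{prop:similar} to obtain an invertible $S\in\mathbf{M}(4,4,\mathbb{C})$ with $A(\gamma^\mu)S=SB(\gamma^\mu)$ for all $\mu$, unique up to a nonzero complex scalar. Take the complex conjugate of this identity: since $A(\gamma^\mu)$ and $B(\gamma^\mu)$ are real, the conjugate $\overline S$ also intertwines $A$ and $B$, so uniqueness forces $\overline S=cS$ for some $c\in\mathbb{C}$. Conjugating once more gives $S=\overline c\,\overline S=|c|^2S$, hence $|c|=1$, i.e.\ $c=e^{i\theta}$ for some real $\theta$.

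Step two: put $R\equiv e^{i\theta/2}S$. As a nonzero complex multiple of $S$ it is invertible and still intertwines $A$ and $B$, and $\overline R=e^{-i\theta/2}\overline S=e^{-i\theta/2}e^{i\theta}S=e^{i\theta/2}S=R$, so $R$ is a \emph{real} similarity transformation; this proves existence. For uniqueness, if $R$ and $R'$ are both real invertible intertwiners, Proposition~\ref{prop:similar} gives $R'=cR$ for some $c\in\mathbb{C}$; conjugating, $R'=\overline{R'}=\overline c\,\overline R=\overline c R$, whence $c=\overline c\in\mathbb{R}$. Conversely any nonzero real multiple of $R$ is again such an intertwiner, so $R$ is determined up to a real factor.

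The argument is essentially mechanical once Proposition~\ref{prop:similar} is in hand; the only point needing a moment's care — the ``hard part,'' such as it is — is the phase extraction in step two, where one must observe that choosing a square root $e^{i\theta/2}$ of $e^{i\theta}$ is harmless because the two square roots differ by the real factor $-1$, consistent with the claimed uniqueness.
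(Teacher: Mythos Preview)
Your proof is correct and follows essentially the same route as the paper: invoke Proposition~\ref{prop:similar} to obtain an intertwiner $S$ unique up to a complex scalar, observe that $\overline S$ (the paper writes $S^*$) intertwines as well so $\overline S=cS$, deduce $|c|=1$, and rescale by $e^{i\theta/2}$ to obtain a real intertwiner. The only cosmetic difference is that the paper infers $|c|=1$ by taking the modulus of the determinant of $\overline S=cS$, whereas you conjugate twice to get $S=|c|^2S$; your argument is slightly cleaner, and you also spell out the uniqueness-up-to-real-factor conclusion that the paper leaves implicit.
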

\begin{proof}
Let $A$ and $B$ be real representations of the Majorana matrices.
Then there is an invertible matrix $S$, unique up to a complex factor,
such that $A(\gamma^\mu)S=S B(\gamma^\mu)$. Conjugating the equation,
we get that, for some complex $c$, $S^*=c S$. Applying the
module of the determinant, we get $c=e^{i\theta}$ for some real $\theta$ and $(e^{i\frac{\theta}{2}}S)^*=(e^{i\frac{\theta}{2}}S)$. 
\end{proof}

\begin{defn}
The Dirac spinor is a $4\times 1$ complex column matrix, $\mathbf{M}(4,1,\mathbb{C})$.
\end{defn}

The space of Dirac spinors is a 4 dimensional complex vector space.

\begin{defn}
Let $S$ be an invertible matrix such that $S i\gamma^\mu S^{-1}$ is
real, for $\mu=0,1,2,3$.

The set of Majorana spinors, $Pinor$, is the set of Dirac spinors
verifying the Majorana condition:
\begin{align}
Pinor\equiv \{u\in \mathbf{M}(4,1,\mathbb{C}): S^{-1} S^* u^*=u\}
\end{align}
Where $^*$ denotes complex conjugation.
\end{defn}

\begin{rem}
Let $W$ be a subset of a vector space $V$ over $\mathbb{C}$. 
$W$ is a real vector space iff:

1) $0\in W$;

2) If $u,v\in W$, then $u+v\in W$;

3) If $u\in W$ and $c\in \mathbb{R}$, then $c u\in W$.
\end{rem}

From the previous remark, the set of Majorana spinors is a 4
dimensional real vector space. Note that the linear combinations of
Majorana spinors with complex scalars do not verify the Majorana
condition. The Majorana spinor, in the Majorana bases, is a $4\times 1$
real column matrix.

\begin{defn}
$End(Pinor)$ is the set of endomorphisms of Majorana spinors, that is,
the set of linear maps from and to Majorana spinors.
\end{defn}

$End(Pinor)$ is a 16
dimensional real vector space, generated by the linear
combinations with real scalars of the 16 matrices in the basis
$\Gamma$. In the Majorana bases, $End(Pinor)=\mathbf{M}(4,1,\mathbb{R})$.

\section{Majorana representation of the Lorentz group}
\subsection{Double cover of the Lorentz group}
We define some symbols for the sets we will use: 
\begin{defn}
$GL(n,\mathbb{F})$ is the group of $n\times
n$ invertible matrices over the field $\mathbb{F}$. 

$SL(n,\mathbb{F})$ is the group of $n\times n$ invertible matrices
over the field $\mathbb{F}$ with determinant $1$.

$O(n)$ is the group of $n\times
n$ real orthogonal matrices.

$SO(n)$ is the group of $n\times
n$ real orthogonal matrices with determinant 1.

$SPD(n)$ is the set  of $n\times
n$ real symmetric positive definite matrices.
\end{defn}

\begin{defn} The set of Lorentz matrices, $O(1,3)\equiv\{\Lambda \in
  \mathbf{M}(4,4,\mathbb{R}): \Lambda^T g \Lambda=g \}$, is the set of
  real matrices that leave the metric, $g=diag(1,-1,-1,-1)$, invariant.
\end{defn}

\begin{defn}
In a basis where the Majorana matrices are unitary, the set $Maj$ is
defined as:
\begin{align}
Maj\equiv\{M\in End(Pinor): (i\gamma^5) M (-i\gamma^5)=-M,\ (i\gamma^0) M (-i\gamma^0) &=-M^\dagger\}
\end{align}
\end{defn}

The only matrices in $\Gamma$ that are also in $Maj$ are the Majorana
matrices, $i\gamma^\mu$, therefore $Maj$ is the 4 dimensional real space of the linear
combinations with real coefficients of Majorana matrices.

\begin{defn}
$Pin(3,1)$ \cite{pin} is the set of endomorphisms of Majorana
spinors which leave the space $Maj$ invariant, that is:
\begin{align}
Pin(3,1)\equiv \Big\{S\in End(Pinor):\ |det(S)|=1,\ S^{-1}(i\gamma^\mu)S\in Maj,\
\mu=0,1,2,3 \Big\}
\end{align}
\end{defn}

\begin{prop}
\label{prop:map}
The map $\Lambda:Pin(3,1)\to O(1,3)$ defined by:
\begin{align}
(\Lambda(S))^\mu_{\ \nu}i\gamma^\nu\equiv S^{-1}(i\gamma^\mu)S
\end{align}
is two-to-one and surjective.
\end{prop}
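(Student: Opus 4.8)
The plan is to check, in order, that $\Lambda$ is well defined and takes values in $O(1,3)$, that it is a group homomorphism, that its kernel is $\{1,-1\}$ (which forces it to be two-to-one onto its image), and finally that its image is all of $O(1,3)$. For well-definedness, the defining condition $S^{-1}(i\gamma^\mu)S\in Maj$ in $Pin(3,1)$ together with the fact (stated after the definition of $Maj$) that $Maj$ is the real span of the four Majorana matrices guarantees that the coefficients $(\Lambda(S))^\mu_{\ \nu}$ are real, and they are uniquely determined because $\Gamma$ is a basis (Proposition \ref{prop:basis}). To see $\Lambda(S)\in O(1,3)$, conjugate the anticommutator $\{i\gamma^\mu,i\gamma^\nu\}=-2g^{\mu\nu}$ by $S$ and expand both factors: this gives $(\Lambda(S))^\mu_{\ \alpha}(\Lambda(S))^\nu_{\ \beta}g^{\alpha\beta}=g^{\mu\nu}$, i.e.\ $\Lambda(S)\,g^{-1}\Lambda(S)^T=g^{-1}$, which is equivalent to $\Lambda(S)^Tg\Lambda(S)=g$. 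The homomorphism property follows by rewriting $(S_1S_2)^{-1}(i\gamma^\mu)(S_1S_2)=S_2^{-1}\big((\Lambda(S_1))^\mu_{\ \nu}i\gamma^\nu\big)S_2=(\Lambda(S_1))^\mu_{\ \nu}(\Lambda(S_2))^\nu_{\ \rho}i\gamma^\rho$.

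Next I would compute the kernel. If $\Lambda(S)=1$, then $S$ commutes with every Majorana matrix, hence with every product in $\Gamma$, hence with a basis of $\mathbf{M}(4,4,\mathbb{C})$ by Proposition \ref{prop:basis}, so $S=c\,1$ for some $c\in\mathbb{C}$. Since $S\in End(Pinor)$ is real in a Majorana basis, $c$ is real, and $|det(S)|=|c|^4=1$ gives $c=\pm1$. Thus $\ker\Lambda=\{1,-1\}$, and because the nonempty fibres of a group homomorphism are cosets of the kernel, $\Lambda$ is exactly two-to-one onto its image once surjectivity is known.

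For surjectivity, which is the substantive part, it suffices — the image being a subgroup — to produce preimages of a generating set of $O(1,3)$. For $\mu\neq\nu$ set $\omega_{\mu\nu}\equiv\tfrac12(i\gamma^\mu)(i\gamma^\nu)$; in a Majorana basis this is real, and it is traceless by the Corollary, so $S_{\mu\nu}(\phi)\equiv e^{\phi\,\omega_{\mu\nu}}$ satisfies $|det|=1$. Using $[\omega_{\mu\nu},i\gamma^\rho]=0$ for $\rho\notin\{\mu,\nu\}$ and $[\omega_{\mu\nu},i\gamma^\mu]=g^{\mu\mu}\,i\gamma^\nu$, $[\omega_{\mu\nu},i\gamma^\nu]=-g^{\nu\nu}\,i\gamma^\mu$, one expands $S_{\mu\nu}(\phi)^{-1}(i\gamma^\rho)S_{\mu\nu}(\phi)$ and finds a real combination of Majorana matrices, so $S_{\mu\nu}(\phi)\in Pin(3,1)$ and $\Lambda(S_{\mu\nu}(\phi))$ is the rotation (when $\mu,\nu$ are both spatial) or boost (when one index is $0$) in the corresponding plane; these one-parameter subgroups generate the proper orthochronous subgroup $SO^{+}(1,3)$. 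For the other three components, $i\gamma^0$, $i\gamma^5$ and $i\gamma^0 i\gamma^5$ all lie in $Pin(3,1)$ (determinant condition by the Corollary, and $S^{-1}(i\gamma^\mu)S=\pm i\gamma^\mu\in Maj$ from the (anti)commutation relations), and a direct computation gives $\Lambda(i\gamma^0)=\mathrm{diag}(1,-1,-1,-1)$, $\Lambda(i\gamma^5)=-1$, $\Lambda(i\gamma^0 i\gamma^5)=\mathrm{diag}(-1,1,1,1)$. Hence the image meets every coset of $SO^{+}(1,3)$ in $O(1,3)$ and therefore equals $O(1,3)$.

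I expect the only real obstacle to be the surjectivity step, and within it the bookkeeping in the exponential conjugation formula for $S_{\mu\nu}(\phi)$ together with the (standard) fact that the rotation and boost one-parameter subgroups — with the rotations having period $4\pi$ in $\phi$ at the spinor level — generate all of $SO^{+}(1,3)$; the discrete generators $i\gamma^0$, $i\gamma^5$, $i\gamma^0 i\gamma^5$ and the kernel computation are immediate by comparison.
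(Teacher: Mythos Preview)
Your argument is correct, but the route differs from the paper's. You prove the homomorphism property, compute $\ker\Lambda=\{\pm1\}$ by a Schur-type argument, and then establish surjectivity by exhibiting explicit preimages: the exponentials $e^{\phi\,\omega_{\mu\nu}}$ covering $SO^{+}(1,3)$ together with $i\gamma^0$, $i\gamma^5$, $\gamma^0\gamma^5$ reaching the other three components. The paper instead handles surjectivity and two-to-one in a single stroke via Proposition~\ref{prop:realsimilar}: given any $\lambda\in O(1,3)$, the matrices $M(i\gamma^\mu)\equiv\lambda^\mu_{\ \nu}i\gamma^\nu$ form a real representation of the Majorana matrices (because $\lambda$ preserves the metric), so by the uniqueness of real $4\times4$ representations up to a real scalar there is a real $S_\lambda$, unique up to sign once $|\det|=1$ is imposed, with $S_\lambda^{-1}(i\gamma^\mu)S_\lambda=\lambda^\mu_{\ \nu}i\gamma^\nu$. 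This gives the preimage and the two-valuedness simultaneously, without appealing to the homomorphism property (which the paper proves only later) or to the fact that one-parameter rotation/boost subgroups generate $SO^{+}(1,3)$. Your approach has the virtue of producing explicit lifts and of making the double-cover structure visible through the kernel; the paper's approach is shorter and avoids the external ``standard fact'' about generation of the identity component.
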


\begin{proof}
1) Let $S\in Pin(3,1)$. Since the Majorana matrices are a basis of the
real vector space $Maj$, there is an unique real matrix $\Lambda(S)$ such that:
\begin{align}
(\Lambda(S))^\mu_{\ \nu}i\gamma^\nu=S^{-1}(i\gamma^\mu)S
\end{align}
Therefore, $\Lambda$ is a map with domain $Pin(3,1)$. Now we can check
that $\Lambda(S)\in O(1,3)$:
\begin{align}
&(\Lambda(S))^\mu_{\ \alpha}g^{\alpha\beta}(\Lambda(S))^\nu_{\
  \beta}=-\frac{1}{2}(\Lambda(S))^\mu_{\
  \alpha}\{i\gamma^\alpha,i\gamma^\beta\}(\Lambda(S))^\nu_{\
  \beta}=\\
&=-\frac{1}{2}S\{i\gamma^\mu,i\gamma^\nu\}S^{-1}=Sg^{\mu\nu}S^{-1}=g^{\mu\nu}
\end{align}
We have proved that $\Lambda$ is a map from $Pin(3,1)$ to $O(1,3)$.

2) Since any $\lambda\in O(1,3)$ conserve the metric, the matrices
$M(i\gamma^\mu)\equiv \lambda^\mu_{\ \nu} i\gamma^\nu$ are a representation of the Majorana matrices:
\begin{align}
\{M(i\gamma^\mu),M(i\gamma^\nu)\}=-2\lambda^\mu_{\ \alpha}g^{\alpha\beta}\lambda^\nu_{\ \beta}=-2(\lambda g\lambda^T)^{\mu\nu}=-2g^{\mu\nu}
\end{align}
In a basis where the Majorana matrices are real, from
\ref{prop:realsimilar} there is a real invertible matrix $S_\Lambda$, 
unique up to a real factor, such that $\lambda^\mu_{\ \nu} i\gamma^\nu=S^{-1}_\lambda
(i\gamma^\mu)S_\lambda$.
Setting $|det(S)|=1$ we fix the real factor up to a signal $\pm
1$.
Therefore, $\pm S_\lambda\in Pin(3,1)$ and we proved that the map
$\Lambda:Pin(3,1)\to O(1,3)$ is two-to-one and surjective.
\end{proof}

\begin{lem}
$Pin(3,1)=Pin'(3,1)$, where $Pin'(3,1)$ is, in a basis where the
Majorana matrices are unitary:
\begin{align}
Pin'(3,1)\equiv \Big\{S\in End(Pinor):& (i\gamma^5)S=a
S(i\gamma^5),\\
&(i\gamma^0)S=b S^{-1\dagger}(i\gamma^0),\\
& |det(S)|=1;\ a,b\in \{-1,1\}\Big\}
\end{align}
\end{lem}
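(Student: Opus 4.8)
The plan is to establish the two inclusions $Pin(3,1)\subseteq Pin'(3,1)$ and $Pin'(3,1)\subseteq Pin(3,1)$ by unwinding the clause ``$S^{-1}(i\gamma^\mu)S\in Maj$ for $\mu=0,1,2,3$'' into its two defining pieces. Throughout I work in a basis where the Majorana matrices are unitary, so $(i\gamma^\mu)^\dagger=(i\gamma^\mu)^{-1}$, and I abbreviate $\chi\equiv i\gamma^5$, $\beta\equiv i\gamma^0$, noting $\chi^2=\beta^2=-1$, that $\chi$ anticommutes with every $i\gamma^\mu$, and (checking the cases $\mu=0$ and $\mu=j$ separately, using $(i\gamma^0)^2=-1$, $(i\gamma^j)^2=+1$ and $\{i\gamma^0,i\gamma^j\}=0$) that $(i\gamma^\mu)^\dagger=-\beta(i\gamma^\mu)\beta^{-1}$. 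Since $-i\gamma^5=\chi^{-1}$ and $-i\gamma^0=\beta^{-1}$, the definition of $Maj$ says that $M\in Maj$ means $\chi M\chi^{-1}=-M$ together with $\beta M\beta^{-1}=-M^\dagger$; thus, setting $N^\mu\equiv S^{-1}(i\gamma^\mu)S$, membership in $Pin(3,1)$ is the conjunction, over all $\mu$, of
\begin{align}
\chi N^\mu\chi^{-1}=-N^\mu\qquad\text{and}\qquad \beta N^\mu\beta^{-1}=-(N^\mu)^\dagger .
\end{align}
The only structural fact I need beyond direct manipulation in $End(Pinor)$ is that \emph{any matrix commuting with all $i\gamma^\mu$ is a scalar multiple of the identity} — which holds because $\Gamma$ is a basis of $\mathbf{M}(4,4,\mathbb{C})$ (Proposition \ref{prop:basis}) and lies in the algebra generated by the $i\gamma^\mu$, so such a matrix is central.

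For the $\chi$-equations: $\chi N^\mu\chi^{-1}=-N^\mu$ for all $\mu$ says exactly that $P\equiv S\chi S^{-1}$ anticommutes with every $i\gamma^\mu$; since $\chi$ also does, $P\chi^{-1}$ commutes with every $i\gamma^\mu$ and is hence a scalar $c$, so $P=c\chi$, and $P^2=\chi^2=-1$ forces $c=\pm1$. Rearranging $S\chi S^{-1}=c\chi$ gives $(i\gamma^5)S=aS(i\gamma^5)$ with $a=c\in\{-1,1\}$, and conversely a one-line computation (moving $\chi$ past $S^{-1}$, then past $i\gamma^\mu$, then past $S$) shows this relation with $a=\pm1$ returns $\chi N^\mu\chi^{-1}=-N^\mu$. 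For the $\beta$-equations: substituting $N^\mu=S^{-1}(i\gamma^\mu)S$ and $(i\gamma^\mu)^\dagger=-\beta(i\gamma^\mu)\beta^{-1}$ into $\beta N^\mu\beta^{-1}=-(N^\mu)^\dagger$ and then multiplying on the left by $(S^\dagger)^{-1}$ and on the right by $S^\dagger$ turns the whole family into $V(i\gamma^\mu)V^{-1}=\beta(i\gamma^\mu)\beta^{-1}$ for all $\mu$, where $V\equiv(S^\dagger)^{-1}\beta S^{-1}$; hence $\beta^{-1}V$ commutes with all $i\gamma^\mu$ and is a scalar, which rearranges to the statement that $S^\dagger(i\gamma^0)S$ is a scalar multiple of $i\gamma^0$. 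Taking the Hermitian conjugate of that identity (using $(i\gamma^0)^\dagger=-i\gamma^0$) shows the scalar is real, and taking $|\det(\cdot)|$ with $|\det S|=1$ and $|\det(i\gamma^0)|=1$ shows it has modulus $1$, so it equals some $b\in\{-1,1\}$; finally $S^\dagger(i\gamma^0)S=b(i\gamma^0)$ is an algebraic rearrangement of $(i\gamma^0)S=bS^{-1\dagger}(i\gamma^0)$. Since the two definitions share the clause $|\det S|=1$ and all the steps above are reversible, collecting these equivalences proves both inclusions — for $Pin'(3,1)\subseteq Pin(3,1)$ one simply runs the computations backward with $a,b\in\{-1,1\}$ now given.

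The work is essentially bookkeeping, and the one place demanding care is the tracking of minus signs: $\chi^2=\beta^2=-1$ rather than $+1$, $(i\gamma^j)^2=+1$, and the sign in $(i\gamma^\mu)^\dagger=-\beta(i\gamma^\mu)\beta^{-1}$. These are exactly the signs responsible for $a$ and $b$ landing in $\{-1,1\}$ instead of being pinned to $+1$, so a dropped sign would make $Pin(3,1)$ and $Pin'(3,1)$ fail to coincide. Apart from that, I expect no genuine obstacle once the centrality observation is in hand — only the need to write the rearrangements carefully.
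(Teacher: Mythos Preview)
Your argument is correct. Both inclusions go through as you describe: the centrality fact (anything in $End(Pinor)$ commuting with every $i\gamma^\mu$ is a real scalar, since $\Gamma$ is a real basis) pins down $S\chi S^{-1}$ and $(S^\dagger)^{-1}\beta S^{-1}$ up to scalar, and the squaring/determinant constraints force those scalars into $\{\pm1\}$. Your check that the scalar in the $\beta$-computation is real is in fact redundant --- membership in $End(Pinor)$ already guarantees it --- but it does no harm.

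The paper takes a genuinely different route for the inclusion $Pin(3,1)\subset Pin'(3,1)$. Rather than isolating a central element directly, it observes that conjugating $S^{-1}(i\gamma^\mu)S$ by $i\gamma^5$ (resp.\ $i\gamma^0$) produces $(-\Lambda(S))^\mu{}_\nu\,i\gamma^\nu$ (resp.\ $(\Lambda(S)g)^\mu{}_\nu\,i\gamma^\nu$), and then invokes Proposition~\ref{prop:map}: the map $\Lambda:Pin(3,1)\to O(1,3)$ is two-to-one, so the two elements of $Pin(3,1)$ mapping to $-\Lambda$ (namely $(i\gamma^5)S$ and $S(i\gamma^5)$) must differ by a sign, and similarly for $\Lambda g$. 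In effect the paper routes the uniqueness-up-to-scalar statement through the Lorentz group and Proposition~\ref{prop:realsimilar}, whereas you apply it directly at the level of the Clifford algebra. Your approach is more self-contained (it does not need the surjectivity half of Proposition~\ref{prop:map}); the paper's approach has the virtue of making the connection to the Lorentz transformations $-\Lambda$ and $\Lambda g$ explicit, which fits the surrounding narrative.
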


\begin{proof}
1) For all $S\in Pin'(3,1)$, $S^{-1}(i\gamma^\mu)S\in Maj$ and so
$Pin'(3,1)\subset Pin(3,1)$.
 
2) In a basis where the Majorana matrices are unitary, 
for all $S\in Pin(3,1)$, since $S^{-1}(i\gamma^\mu)S\in Maj$, we have:
\begin{align}
(i\gamma^5) S^{-1}(i\gamma^\mu)S (-i\gamma^5)
&=-S(i\gamma^\mu)S=S^{-1}(i\gamma^5)(i\gamma^\mu)(-i\gamma^5)S\\
(i\gamma^0) S^{-1}(i\gamma^\mu)S (-i\gamma^0)
&=-S^\dagger(i\gamma^\mu)^\dagger S^{-1\dagger}=S^\dagger(i\gamma^0)(i\gamma^\mu)(-i\gamma^0) S^{-1\dagger}
\end{align}
On the other hand:
\begin{align}
(i\gamma^5) S^{-1}(i\gamma^\mu)S (-i\gamma^5)
&=(-\Lambda(S))^\mu_{\ \nu}(i\gamma^\nu)\\
(i\gamma^0) S^{-1}(i\gamma^\mu)S (-i\gamma^0)
&=(\Lambda(S)g)^\mu_{\ \nu} (i\gamma^\nu)
\end{align}
We can easily check that $(-\Lambda),(\Lambda g)\in O(1,3)$. 
From proposition \ref{prop:map}, we get that the matrices in
$Pin(3,1)$ corresponding to $(-\Lambda),(\Lambda g)\in O(1,3)$ are
unique up to a sign. Therefore, $Pin(3,1)\subset Pin'(3,1)$.
\end{proof}

\begin{defn}
In a basis where the Majorana matrices are unitary, the subset
$Spin^+(3,1)\subset Pin(3,1)$ is:
\begin{align}
Spin^+(3,1)\equiv \Big\{S\in End(Pinor):& (i\gamma^5)S=S(i\gamma^5),\\
&(i\gamma^0)S=S^{-1\dagger}(i\gamma^0),\\
&|det(S)|=1 \Big\}
\end{align}
\end{defn}

\begin{prop}
1) $Pin(3,1)$ and $Spin^+(3,1)$ are groups.

2) In a basis where the Majorana matrices are unitary, if 
$S\in Pin(3,1)$ ($Spin^+(3,1)$) then $S^\dagger\in Pin(3,1)$ ($Spin^+(3,1)$).
\end{prop}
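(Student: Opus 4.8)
The plan is to verify the group axioms directly for part~1, and for part~2 to reduce to a real Majorana basis and then use the characterization $Pin(3,1)=Pin'(3,1)$ of the preceding lemma. \emph{Part 1.} For $Pin(3,1)$: the identity belongs to it, since $|\det 1|=1$ and $1^{-1}(i\gamma^\mu)1=i\gamma^\mu\in Maj$. If $S,T\in Pin(3,1)$ then $|\det(ST)|=|\det S|\,|\det T|=1$, and writing $S^{-1}(i\gamma^\mu)S=(\Lambda(S))^\mu_{\ \nu}(i\gamma^\nu)$ we get $(ST)^{-1}(i\gamma^\mu)(ST)=(\Lambda(S))^\mu_{\ \nu}\,T^{-1}(i\gamma^\nu)T$, which lies in $Maj$ because each $T^{-1}(i\gamma^\nu)T$ does and $Maj$ is a real vector space; hence $ST\in Pin(3,1)$. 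For inverses, $|\det S^{-1}|=1$, and since $\Lambda(S)\in O(1,3)$ is invertible the relation $S^{-1}(i\gamma^\mu)S=(\Lambda(S))^\mu_{\ \nu}(i\gamma^\nu)$ can be solved for $S(i\gamma^\nu)S^{-1}$ as a real combination of the $i\gamma^\mu$, so $S(i\gamma^\nu)S^{-1}\in Maj$ and $S^{-1}\in Pin(3,1)$. For $Spin^+(3,1)$, which is contained in $Pin(3,1)$, it is enough to check that the two extra defining relations are preserved under products and inverses: from $(i\gamma^0)S=S^{-1\dagger}(i\gamma^0)$ and $(i\gamma^0)T=T^{-1\dagger}(i\gamma^0)$ one gets $(i\gamma^0)(ST)=S^{-1\dagger}T^{-1\dagger}(i\gamma^0)=(ST)^{-1\dagger}(i\gamma^0)$, and from $(i\gamma^0)S=S^{-1\dagger}(i\gamma^0)$ one gets $(i\gamma^0)S^{-1}=S^\dagger(i\gamma^0)=(S^{-1})^{-1\dagger}(i\gamma^0)$; the analogous identities for $i\gamma^5$ are immediate, the identity matrix satisfies both relations, and $|\det|$ is multiplicative.

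\emph{Part 2.} Because the adjoint transforms covariantly under a unitary change of basis, and any two unitary bases of the Majorana matrices are related by a unitary similarity (the proposition on unitary representations of the Majorana matrices), it suffices to prove the statement in a real Majorana basis. There $End(Pinor)=\mathbf{M}(4,4,\mathbb{R})$, so any $S\in Pin(3,1)$ is a real matrix and $S^\dagger=S^T\in\mathbf{M}(4,4,\mathbb{R})=End(Pinor)$; moreover $i\gamma^0,i\gamma^5\in\Gamma_-$ are real orthogonal matrices squaring to $-1$ and hence antisymmetric, $(i\gamma^0)^T=-i\gamma^0$ and $(i\gamma^5)^T=-i\gamma^5$. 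By the lemma, $S\in Pin(3,1)=Pin'(3,1)$ means $(i\gamma^5)S=aS(i\gamma^5)$, $(i\gamma^0)S=bS^{-1T}(i\gamma^0)$ and $|\det S|=1$ with $a,b\in\{-1,1\}$. Transposing these two relations and using $a^{-1}=a$, $b^{-1}=b$, a short calculation gives $(i\gamma^5)S^T=aS^T(i\gamma^5)$ and $(i\gamma^0)S^T=b(S^T)^{-1T}(i\gamma^0)$, while $|\det S^T|=|\det S|=1$; hence $S^T\in Pin'(3,1)=Pin(3,1)$, with the same $a,b$. In particular, when $a=b=1$ these are precisely the defining relations of $Spin^+(3,1)$, so $S\in Spin^+(3,1)$ implies $S^T\in Spin^+(3,1)$.

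The only point that is more than routine bookkeeping is the reduction in part~2 to a real Majorana basis: one must check that $S\mapsto S^\dagger$ commutes with unitary changes of basis and that the orthogonal Majorana matrices of the example basis constitute a unitary representation to which every unitary representation of the Majorana matrices is unitarily equivalent --- both of which follow from the earlier propositions. (Alternatively, one can work in a general unitary basis throughout, using the anti-hermiticity of $i\gamma^0$ and $i\gamma^5$ in place of their antisymmetry, but then one additionally has to argue that the matrix implementing the Majorana condition is symmetric and unitary in order to see that $End(Pinor)$ is stable under the adjoint.) Once this is settled, and $Pin(3,1)=Pin'(3,1)$ is invoked, the rest is a short manipulation of the two defining relations.
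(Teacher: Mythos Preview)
Your proof is correct. The paper organizes things differently: it handles parts~1 and~2 together, working throughout in an arbitrary unitary basis via the characterization $Pin(3,1)=Pin'(3,1)$. From the two defining relations for $S_\pm$ it takes both the inverse and the hermitian conjugate (using $(i\gamma^0)^\dagger=-i\gamma^0$ and $(i\gamma^5)^\dagger=-i\gamma^5$ in any unitary basis) to obtain the relations for $S_\pm^{-1}$ and $S_\pm^\dagger$ in one stroke; closure of $Pin(3,1)$ under $S_+S_-^{-1}$ then follows by combining these, and the $Spin^+$ case is the specialization $a_\pm=b_\pm=1$. Your Part~1 argument for $Pin(3,1)$ is more elementary, going back to the original definition and using invertibility of $\Lambda(S)\in O(1,3)$ rather than invoking the lemma, which is arguably cleaner at that point. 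For Part~2, the detour through a real Majorana basis is unnecessary: the paper's route --- precisely the alternative you sketch at the end --- works directly, and your concern about stability of $End(Pinor)$ under the adjoint is easily dispelled since in a unitary basis every element of the real basis $\Gamma$ is unitary and squares to $\pm 1$, hence is hermitian or anti-hermitian.
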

\begin{proof}
$Pin(3,1)$ and $Spin^+(3,1)$ are subsets of the group $SL(4,\mathbb{C})$.
They include the identity matrix, $1\in Pin(3,1),Spin^+(3,1)$.

Let $S_\pm\in Pin(3,1)$. Then, in a basis where the Majorana
matrices are unitary, for some $a_\pm,b_\pm\in \{-1,1\}$:
\begin{align}
(i\gamma^5)S_\pm=a_\pm S_\pm(i\gamma^5),\ (i\gamma^0)S_\pm=b_\pm S_\pm^{-1\dagger}(i\gamma^0)
\end{align}
Making the inverse (hermitian conjugate) of the equation on the left and the
hermitian conjugate (inverse) of the equation on the right we get:
\begin{align}
-S^{-1}_\pm(i\gamma^5)=-a_\pm(i\gamma^5)S_\pm^{-1}&,\
-S^{\dagger}_\pm(i\gamma^0)=-b_\pm(i\gamma^0)S_\pm^{-1}\\
-S^{\dagger}_\pm(i\gamma^5)=-a_\pm(i\gamma^5)S_\pm^{\dagger}&,\
-S^{-1}_\pm(i\gamma^0)=-b_\pm(i\gamma^0)S_\pm^{\dagger}
\end{align}
Therefore, $S_\pm^\dagger\in Pin(3,1)$ and the product $S_+ S_-^{-1}\in  Pin(3,1)$:
\begin{align}
(i\gamma^5)S_+ S_-^{-1}&=(a_+a_-)S_+ S_-^{-1}(i\gamma^5)\\
(i\gamma^0)S_+ S_-^{-1}&=(b_+b_-)S_+^{-1\dagger} S_-^{\dagger}(i\gamma^0)
\end{align}

In the particular case $S_\pm\in Spin^+(3,1)$, we have
$a_\pm,b_\pm=1$. Then $S_\pm^\dagger\in Spin^+(3,1)$ and the product $S_+ S_-^{-1}\in  Spin^+(3,1)$. 
\end{proof}

\begin{defn}
The discrete pin subgroup $\Delta\subset Pin(3,1)$ is:
\begin{align}
\Delta\equiv \{\pm 1,\pm i\gamma^0,\pm \gamma^0\gamma^5,\pm
i\gamma^5\}
\end{align}
\end{defn}

\begin{lem}
For all $S\in Pin(3,1)$, there are only two factors $\pm d\in
\Delta$ and 
correspondingly only two $\pm S'\in Spin^+(3,1)$, such that
$S=(\pm d)(\pm S')$.
\end{lem}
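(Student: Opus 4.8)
The plan is to reduce the statement to a counting argument about a single group homomorphism. By the Lemma identifying $Pin(3,1)$ with $Pin'(3,1)$, every $S\in Pin(3,1)$ carries, in a basis where the Majorana matrices are unitary, a pair of signs $(a,b)\in\{-1,1\}^2$ determined by $(i\gamma^5)S=aS(i\gamma^5)$ and $(i\gamma^0)S=bS^{-1\dagger}(i\gamma^0)$; these signs are unique because $S$ is invertible, so we get a well-defined map $\phi\colon Pin(3,1)\to\{-1,1\}^2$, $\phi(S)=(a,b)$. The computation in the preceding Proposition (the one showing $S_+S_-^{-1}\in Pin(3,1)$ with signs $(a_+a_-,b_+b_-)$) says precisely that $\phi$ is a group homomorphism, and by definition $Spin^+(3,1)=\phi^{-1}((1,1))=\ker\phi$. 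Thus the Lemma will follow once I show that $\phi$ restricted to $\Delta$ is surjective onto $\{-1,1\}^2$ with kernel exactly $\{\pm 1\}$.

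Next I would evaluate $\phi$ on the four classes of $\Delta$, using only that in a unitary basis $(i\gamma^0)^{-1\dagger}=i\gamma^0$ and $(i\gamma^5)^{-1\dagger}=i\gamma^5$, together with $\{i\gamma^0,i\gamma^5\}=0$ (while each commutes with itself). Plugging $S=i\gamma^5$ gives $a=1$ (it commutes with $i\gamma^5$) and $b=-1$ (it anticommutes with $i\gamma^0$), so $\phi(i\gamma^5)=(1,-1)$. Plugging $S=i\gamma^0$ gives $a=-1$, so $\phi(i\gamma^0)$ has first component $-1$, hence differs from both $(1,1)$ and $\phi(i\gamma^5)$. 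Since $(i\gamma^0)(i\gamma^5)=-\gamma^0\gamma^5$ and $\phi(-1)=(1,1)$, the homomorphism property gives $\phi(\gamma^0\gamma^5)=\phi(i\gamma^0)\phi(i\gamma^5)$, which is then the remaining element of $\{-1,1\}^2$. With $\phi(1)=(1,1)$, this shows $\phi$ maps $\{1,i\gamma^0,\gamma^0\gamma^5,i\gamma^5\}$ bijectively onto $\{-1,1\}^2$; equivalently $\phi|_\Delta$ is onto and $\ker(\phi|_\Delta)=\Delta\cap Spin^+(3,1)=\{\pm 1\}$.

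Then I would finish with existence and uniqueness. Existence: given $S\in Pin(3,1)$, take the unique class in $\Delta/\{\pm1\}$ with $\phi$-value $\phi(S)$, i.e.\ two elements $\pm d\in\Delta$; set $S'\equiv d^{-1}S$. Then $\phi(S')=\phi(d)^{-1}\phi(S)=(1,1)$ and $|\det S'|=|\det d|^{-1}|\det S|=1$, so $S'\in Spin^+(3,1)$ and $S=dS'$; replacing $d$ by $-d$ replaces $S'$ by $-S'$, so $S=(\pm d)(\pm S')$ for these two pairs. Uniqueness: if $S=d_1S_1'=d_2S_2'$ with $d_i\in\Delta$, $S_i'\in Spin^+(3,1)$, apply $\phi$ to get $\phi(d_1)=\phi(S)=\phi(d_2)$, hence $d_1^{-1}d_2\in\ker(\phi|_\Delta)=\{\pm1\}$, so $d_2=\pm d_1$ and then $S_2'=d_2^{-1}S=\pm S_1'$. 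Thus the only decompositions are $(d,S')$ and $(-d,-S')$, as claimed.

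The only real work is the middle step — pinning down the four values of $\phi$ on $\Delta$ — and I expect it to be routine bookkeeping once the reduction to unitarity and the relation $\{i\gamma^0,i\gamma^5\}=0$ is in place; there is no genuine obstacle. The one point worth stating carefully is that $\phi$ is well defined and is a homomorphism on \emph{all} of $Pin(3,1)$, which is exactly what the preceding Lemma and Proposition supply.
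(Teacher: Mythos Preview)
Your proof is correct and follows essentially the same approach as the paper: the paper's proof also reads off the signs $(a,b)$ from the $Pin'(3,1)$ characterization and lists, case by case, which pair $\pm d\in\Delta$ neutralizes them so that $d^{-1}S\in Spin^+(3,1)$. Your version simply packages this case analysis as the homomorphism $\phi\colon Pin(3,1)\to\{-1,1\}^2$ with $\ker\phi=Spin^+(3,1)$ and $\phi|_\Delta$ surjective with kernel $\{\pm1\}$, which makes the uniqueness half explicit where the paper leaves it implicit.
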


\begin{proof}
Let $S\in Pin(3,1)$ and  $a,b\in\{-1,1\}$ be such that, in a basis where the Majorana
matrices are unitary:
\begin{align}
(i\gamma^5)S=a S(i\gamma^5),\ (i\gamma^0)S=b S^{-1\dagger}(i\gamma^0)
\end{align}

There are always only two factors $\pm d\in \Delta$, such that $d^{-1}S \in Spin^+(3,1)$:
\begin{align}
a=b=1,\ &d=\pm 1\\
a=-b=1,\ &d=\pm (i\gamma^5)\\
-a=b=1,\ &d=\pm (i\gamma^0)\\
-a=-b=1,\ &d=\pm (\gamma^0\gamma^5)
\end{align}
\end{proof}

\begin{rem}
\label{rem:factor}
1) Every real invertible matrix can be uniquely factored as the product
of an orthogonal matrix and a symmetric positive definite matrix.

2) For all real symmetric positive definite matrix $\Pi$, there is an unique
symmetric matrix $B$ such that $\Pi=e^B$.

3) For all real orthogonal matrix with determinant $1$, $\Theta$, there is a
skew-symmetric matrix $A$ such that $\Theta=e^A$.
\end{rem}

\begin{lem}
\label{lem:SpinProperties}
$Spin^+(3,1)=Spin'^+(3,1)$, where:
\begin{align}
Spin'^+(3,1)\equiv\{e^{\theta^j
  i\gamma^5\gamma^0\gamma^j}e^{b^j\gamma^0\gamma^j}: \theta^j,b^j\in
\mathbb{R},\ j=1,2,3\}
\end{align}
Note that there is a sum in the index $j$.
\end{lem}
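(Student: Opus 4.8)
\emph{Plan.} I would prove the two inclusions $Spin'^+(3,1)\subseteq Spin^+(3,1)$ and $Spin^+(3,1)\subseteq Spin'^+(3,1)$ separately, working throughout in a basis where the Majorana matrices are real orthogonal. This costs nothing: by the uniqueness up to unitary similarity of the unitary representations of the Majorana matrices, the three relations defining $Spin^+(3,1)$ are unchanged under a unitary change of representation, and a real orthogonal representation is in particular unitary. In such a basis $i\gamma^5=-(i\gamma^0)(i\gamma^1)(i\gamma^2)(i\gamma^3)$ is again real orthogonal, every element of $\Gamma$ is a real matrix, $End(Pinor)$ consists of the real $4\times4$ matrices (so $\dagger$ is transposition on it), and for $A\in\Gamma$ one has $A^T=A$ if $A^2=1$ and $A^T=-A$ if $A^2=-1$; in particular $\gamma^0\gamma^j=-(i\gamma^0)(i\gamma^j)$ is symmetric, while $i\gamma^0$, $i\gamma^5$ and $i\gamma^5\gamma^0\gamma^j$ are skew-symmetric.

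For $Spin'^+(3,1)\subseteq Spin^+(3,1)$: since $Spin^+(3,1)$ is a group, it suffices to check that $e^{X}\in Spin^+(3,1)$ when $X=\theta^j\,i\gamma^5\gamma^0\gamma^j$ and when $X=b^j\,\gamma^0\gamma^j$. For $S=e^X$ the conditions $(i\gamma^5)S=S(i\gamma^5)$, $(i\gamma^0)S=S^{-1\dagger}(i\gamma^0)$, $|\det S|=1$ follow from $[X,i\gamma^5]=0$, $(i\gamma^0)X(-i\gamma^0)=-X^\dagger$, $\mathrm{tr}\,X=0$; all three are linear in $X$, so I only need them on the single matrices $i\gamma^5\gamma^0\gamma^j$ and $\gamma^0\gamma^j$. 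Each is traceless (matrices of $\Gamma\setminus\{1\}$ have null trace) and commutes with $i\gamma^5$; moreover $i\gamma^5\gamma^0\gamma^j$ commutes with $i\gamma^0$ and is anti-Hermitian, whereas $\gamma^0\gamma^j$ anticommutes with $i\gamma^0$ and is Hermitian, so $(i\gamma^0)X(-i\gamma^0)=-X^\dagger$ in both cases. Hence $e^{\theta^j i\gamma^5\gamma^0\gamma^j}$ and $e^{b^j\gamma^0\gamma^j}$ lie in $Spin^+(3,1)$, and so does their product.

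For $Spin^+(3,1)\subseteq Spin'^+(3,1)$: take $S\in Spin^+(3,1)$; it is a real matrix, so by Remark~\ref{rem:factor} it has a unique polar decomposition $S=\Theta\Pi$ with $\Theta\in O(4)$ and $\Pi\in SPD(4)$. Conjugation by the orthogonal matrices $i\gamma^5$ and $i\gamma^0$ sends a polar decomposition to a polar decomposition, so from $(i\gamma^5)S=S(i\gamma^5)$ and, after rewriting $(i\gamma^0)S=S^{-1\dagger}(i\gamma^0)$ as $(i\gamma^0)(\Theta\Pi)(i\gamma^0)^{-1}=\Theta\Pi^{-1}$, uniqueness gives $(i\gamma^5)\Theta(i\gamma^5)^{-1}=\Theta$, $(i\gamma^5)\Pi(i\gamma^5)^{-1}=\Pi$, $(i\gamma^0)\Theta(i\gamma^0)^{-1}=\Theta$, $(i\gamma^0)\Pi(i\gamma^0)^{-1}=\Pi^{-1}$. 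Writing $\Pi=e^{B}$ with $B$ symmetric (Remark~\ref{rem:factor}) and using the uniqueness of that symmetric logarithm, $B$ commutes with $i\gamma^5$ and anticommutes with $i\gamma^0$; expanding $B$ in the basis $\Gamma$ (Proposition~\ref{prop:basis}) and using that every element of $\Gamma$ commutes or anticommutes with $i\gamma^5$ and with $i\gamma^0$, only $i\gamma^5$ and the $\gamma^0\gamma^j$ can appear, and since $B$ is symmetric and $i\gamma^5$ is not, $B=b^j\gamma^0\gamma^j$, i.e. $\Pi=e^{b^j\gamma^0\gamma^j}$. The same bookkeeping identifies the skew-symmetric matrices commuting with both $i\gamma^0$ and $i\gamma^5$ as exactly $\mathrm{span}\{i\gamma^5\gamma^0\gamma^j:j=1,2,3\}$ (the only $\Gamma$-elements commuting with both are $1$ and the $i\gamma^5\gamma^0\gamma^j$, and skew-symmetry drops $1$); these span a Lie algebra isomorphic to $\mathfrak{su}(2)$. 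The real orthogonal matrices commuting with $i\gamma^0$ and $i\gamma^5$ then form a compact connected group $H$ — commuting with the orthogonal complex structure $i\gamma^0$ puts $\Theta$ in $U(2)$ acting on $(\mathbb{R}^4,i\gamma^0)\cong\mathbb{C}^2$, and commuting with the conjugate-linear $i\gamma^5$, $(i\gamma^5)^2=-1$, cuts this down to $Sp(1)\cong SU(2)$ — with Lie algebra $\mathrm{span}\{i\gamma^5\gamma^0\gamma^j\}$. Since $\Theta\in H$ and the exponential of a compact connected group is onto, $\Theta=e^{\theta^j i\gamma^5\gamma^0\gamma^j}$, whence $S=e^{\theta^j i\gamma^5\gamma^0\gamma^j}e^{b^j\gamma^0\gamma^j}\in Spin'^+(3,1)$.

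The hard part will be the treatment of $\Theta$: it is not enough that $\Theta=e^{A}$ for \emph{some} skew-symmetric $A$ (which Remark~\ref{rem:factor} already gives); one needs $A$ in the specific three-dimensional subalgebra $\mathrm{span}\{i\gamma^5\gamma^0\gamma^j\}$. That is what forces the two extra observations — that $\det\Theta=1$ (a consequence of $\Theta$ commuting with $i\gamma^0$, so that $\Theta$ is $\mathbb{C}$-unitary on $\mathbb{C}^2$ and $\det_{\mathbb{R}}\Theta=|\det_{\mathbb{C}}\Theta|^2=1$) and that the group of orthogonal matrices commuting with $i\gamma^0$ and $i\gamma^5$ is connected — after which surjectivity of $\exp$ on a compact connected group closes the gap; the chain $U(2)=O(4)\cap\{[\,\cdot\,,i\gamma^0]=0\}$, then $SU(2)=U(2)\cap\{[\,\cdot\,,i\gamma^5]=0\}$, supplies both at once.
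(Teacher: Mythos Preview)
Your proof is correct and follows the same overall architecture as the paper's: work in a real Majorana basis, take the polar decomposition $S=\Theta\Pi$, and identify each factor. The boost part is handled identically in substance (symmetric log of $\Pi$, then pick out the $\Gamma$-components compatible with the commutation constraints), though you extract the constraints on $\Theta$ and $\Pi$ more directly, by noting that orthogonal conjugation sends polar decompositions to polar decompositions and invoking uniqueness; the paper instead first forms $S^\dagger S=e^{2B}$ to isolate $B$, and only afterward deduces that $\Theta\in Spin^+(3,1)$.

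The genuine difference is in the treatment of $\Theta$. The paper uses Remark~\ref{rem:factor}(3) to write $\Theta=e^{A}$ for \emph{some} skew-symmetric $A$, splits $A$ along the two commuting skew blocks $\{i\gamma^0,\gamma^0\gamma^5,i\gamma^5\}$ and $\{i\gamma^5\gamma^0\gamma^j\}$, so that $\Theta=e^{C}e^{\theta^j i\gamma^5\gamma^0\gamma^j}$, and then argues that $e^{C}$ commutes with every element of $\Gamma$ and hence equals $\pm1$, with the sign absorbed by shifting~$\theta$. Your route bypasses this: you observe that the orthogonal matrices commuting with the two anticommuting complex structures $i\gamma^0$ and $i\gamma^5$ form the compact connected group $Sp(1)\cong SU(2)$, whose Lie algebra is exactly $\mathrm{span}\{i\gamma^5\gamma^0\gamma^j\}$, and then use surjectivity of $\exp$ on a compact connected Lie group. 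This is cleaner and avoids the $e^{C}=\pm1$ bookkeeping and the explicit sign absorption; the paper's argument, on the other hand, stays entirely within the elementary $\Gamma$-matrix calculus and does not appeal to the $U(2)\supset Sp(1)$ identification.
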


\begin{proof}
In a Majorana basis, since $S$ is invertible and real, 
from point 1) in remark \ref{rem:factor},
there is an unique $\Theta\in O(4)$ and unique $\Pi\in SPD(4)$
such that $S=\Theta\Pi$.

From point 2) in remark \ref{rem:factor}, there is 
an unique symmetric $B$ such that $\Pi=e^B$.

Since $S,S^\dagger\in Spin^+(3,1)$, also $S^\dagger S=e^{2B}\in
Spin^+(3,1)$ and we have:
\begin{align}
(i\gamma^5)e^{2B}=e^{2B}(i\gamma^5),\ (i\gamma^0)e^{2B}=e^{-2B}(i\gamma^0)
\end{align}
From the uniqueness of $B$ we get $(i\gamma^5)B=B(i\gamma^5)$ and
$(i\gamma^0)B=-B(i\gamma^0)$. 
In a Majorana basis, the only symmetric matrices in $\Gamma$
satisfying the previous equations are $\gamma^0\gamma^j$,
$j=1,2,3$. Therefore, there are unique $b^j\in \mathbb{R}$, $j=1,2,3$,
such that $\Pi=e^{b^j \gamma^0\gamma^j}$. Since $\gamma^0\gamma^j$ is
traceless, $det(\Pi)=1$ and $\Pi\in Spin^+(3,1)$.

Since $det(S)=det(\Pi)=1$, also $det(\Theta)=1$. We can write:
\begin{align}
(i\gamma^5)\Theta e^B=\Theta e^B(i\gamma^5),\ (i\gamma^0)\Theta e^B=\Theta e^{-B}(i\gamma^0)
\end{align}
Multiplying the equations by $e^{-B}$ on the right, also $\Theta\in Spin^+(3,1)$:
\begin{align}
(i\gamma^5)\Theta=\Theta (i\gamma^5),\ (i\gamma^0)\Theta=\Theta (i\gamma^0)
\end{align}
From point 3) in remark \ref{rem:factor}, there is 
a skew-symmetric $A$ such that $\Theta=e^A$. In a Majorana basis, the
only skew-symmetric matrices in $\Gamma$ are in the commuting sets
$\{i\gamma^0,\gamma^0\gamma^5,i\gamma^5\}$ and $\{i\gamma^5\gamma^0\gamma^j:\ j=1,2,3\}$. 
Therefore, there are $\theta^j,a^j\in \mathbb{R}$, $j=1,2,3$,
such that $C\equiv a^1 i\gamma^0+a^2 \gamma^0\gamma^5+a^3 i\gamma^5$
and $\Theta=e^{C}e^{\theta^j i\gamma^5\gamma^0\gamma^j}$. We can write:
\begin{align}
(i\gamma^5)e^{C}e^{\theta^j
  i\gamma^5\gamma^0\gamma^j}=e^{C}e^{\theta^j i\gamma^5\gamma^0\gamma^j}(i\gamma^5),
\ (i\gamma^0)e^{C}e^{\theta^j
  i\gamma^5\gamma^0\gamma^j}=e^{C}e^{\theta^j i\gamma^5\gamma^0\gamma^j}(i\gamma^0)
\end{align}
Multiplying the equations by $e^{-\theta^j i\gamma^5\gamma^0\gamma^j}$
on the right we get:
\begin{align}
(i\gamma^5)e^C=e^C (i\gamma^5),\ (i\gamma^0)e^C=e^C (i\gamma^0)
\end{align}
The last equations imply that $e^{a^1 i\gamma^0+a^2
  \gamma^0\gamma^5+a^3 i\gamma^5}=e^{-a^1 i\gamma^0+a^2
  \gamma^0\gamma^5-a^3 i\gamma^5}$ and so 
$(i\gamma^j)e^C=e^C (i\gamma^j)$, for $j=1,2,3$. Then, $e^C$ commutes with all
matrices in $\Gamma$ and so it must be proportional to the
identity. From $det(e^C)=1$ we get that $e^C=\pm 1$. 

If $e^C=-1$, the signal can be absorbed. We define
$|\theta|\equiv \sqrt{\theta^j\theta^j}$. If $|\theta|$ is null, then $\Theta=e^{\pi
  i\gamma^5\gamma^0\gamma^1}$; if not, then 
$\Theta=e^{(1+\frac{\pi}{|\theta|})\theta^j i\gamma^5\gamma^0\gamma^j}$.

Checking that $e^{\theta^j
  i\gamma^5\gamma^0\gamma^j}, e^{b^j\gamma^0\gamma^j}\in Spin^+(3,1)$
for all $\theta^j, b^j\in \mathbb{R},\
j=1,2,3$, we have completed the prove.
\end{proof}

\begin{prop} 
$O(1,3)$ is a group, the Lorentz group, and the map
$\Lambda:Pin(3,1)\to O(1,3)$, defined in proposition \ref{prop:map},
is a group homomorphism.
\end{prop}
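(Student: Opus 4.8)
The plan is to verify the two assertions separately, leaning almost entirely on Proposition \ref{prop:map}, which already established that $\Lambda$ is a well-defined, two-to-one, surjective map from $Pin(3,1)$ into $O(1,3)$. For the group structure of $O(1,3)$, note that $O(1,3)$ is by definition a subset of $GL(4,\mathbb{R})$: it contains the identity; if $\Lambda_1^T g \Lambda_1 = g$ and $\Lambda_2^T g \Lambda_2 = g$ then $(\Lambda_1\Lambda_2)^T g (\Lambda_1\Lambda_2) = \Lambda_2^T \Lambda_1^T g \Lambda_1 \Lambda_2 = \Lambda_2^T g \Lambda_2 = g$; and from $\Lambda^T g \Lambda = g$ one gets $\det(\Lambda)^2 = 1$, so $\Lambda$ is invertible, and multiplying $\Lambda^T g \Lambda = g$ on the left by $(\Lambda^T)^{-1}$ and on the right by $\Lambda^{-1}$ gives $g = (\Lambda^{-1})^T g \Lambda^{-1}$. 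Hence $O(1,3)$ is closed under products and inverses and is a group.

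For the homomorphism claim, the key computation is that for $S_1, S_2 \in Pin(3,1)$,
\begin{align}
(\Lambda(S_1 S_2))^\mu_{\ \nu}\, i\gamma^\nu
 &= (S_1 S_2)^{-1}(i\gamma^\mu)(S_1 S_2)
  = S_2^{-1}\big(S_1^{-1}(i\gamma^\mu)S_1\big)S_2 \\
 &= S_2^{-1}\big((\Lambda(S_1))^\mu_{\ \alpha}\, i\gamma^\alpha\big)S_2
  = (\Lambda(S_1))^\mu_{\ \alpha}\, S_2^{-1}(i\gamma^\alpha)S_2 \\
 &= (\Lambda(S_1))^\mu_{\ \alpha}(\Lambda(S_2))^\alpha_{\ \nu}\, i\gamma^\nu
  = (\Lambda(S_1)\Lambda(S_2))^\mu_{\ \nu}\, i\gamma^\nu .
\end{align}
Since the Majorana matrices are a basis of the real vector space $Maj$ (used already to show $\Lambda$ is well-defined in Proposition \ref{prop:map}), comparing coefficients gives $\Lambda(S_1 S_2) = \Lambda(S_1)\Lambda(S_2)$. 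One should also note in passing that $Pin(3,1)$ is a group (established in the earlier proposition), so that $S_1 S_2 \in Pin(3,1)$ and the left-hand side makes sense, and that $\Lambda(1) = 1$ so identities are preserved.

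I expect no serious obstacle here: the only mild subtlety is being careful that the real-linear-independence of $\{i\gamma^\mu\}$ inside $Maj$ is exactly what licenses reading off the matrix identity from the spinor-space identity — the same step invoked in Proposition \ref{prop:map} — and that the manipulation $S_2^{-1}((\Lambda(S_1))^\mu_{\ \alpha} i\gamma^\alpha)S_2 = (\Lambda(S_1))^\mu_{\ \alpha} S_2^{-1}(i\gamma^\alpha)S_2$ is valid because the entries $(\Lambda(S_1))^\mu_{\ \alpha}$ are real scalars and hence commute past the conjugation. Everything else is bookkeeping that the preceding lemmas and propositions have already set up.
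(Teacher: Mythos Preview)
Your homomorphism computation is exactly the paper's: conjugate $i\gamma^\mu$ by $S_1S_2$, peel off $\Lambda(S_1)$ then $\Lambda(S_2)$, and read off the matrix identity using linear independence of the $i\gamma^\nu$. The one genuine difference is how you establish that $O(1,3)$ is a group. You verify the group axioms directly from the defining relation $\Lambda^T g\Lambda=g$ (closure, identity, inverse via $\det(\Lambda)^2=1$). The paper instead first proves the multiplicative property $\Lambda(SS')=\Lambda(S)\Lambda(S')$ and $\Lambda(S^{-1})=\Lambda(S)^{-1}$, then invokes the surjectivity of $\Lambda$ from Proposition~\ref{prop:map} to conclude that $O(1,3)$, being the image of the group $Pin(3,1)$ under a map respecting products and inverses, is itself a group. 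Your route is the standard textbook verification and is self-contained; the paper's route is slightly more economical in that it reuses surjectivity rather than touching the metric condition again, and it makes the logical dependence on Proposition~\ref{prop:map} do double duty. Both are correct and equally short.
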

\begin{proof}
The matrix product is associative and $\Lambda(1)=1\in O(1,3)$.

For all $S,S'\in Pin(3,1)$, we have:
\begin{align}
(\Lambda(S S'))^\mu_{\ \beta}i\gamma^\beta&=S'^{-1}
S(i\gamma^\mu)S S'=(\Lambda(S))^\mu_{\ \alpha}S'^{-1}
i\gamma^\alpha S'\\
&=(\Lambda(S))^\mu_{\ \alpha}(\Lambda(S'))^\alpha_{\ \beta}i\gamma^\beta
\end{align}
This implies that $\Lambda^{-1}(S)=\Lambda(S^{-1})\in O(1,3)$ and
$\Lambda(S)\Lambda(S')=\Lambda(S S')\in O(1,3)$. Since the map $\Lambda$ is
surjective, then $O(1,3)$ is a group and the map $\Lambda$ is a group homomorphism.
\end{proof}

\begin{defn}
The proper orthochronous Lorentz group $SO^+(1,3)$ is:
\begin{align}
SO^+(1,3)\equiv\{\Lambda(S): S\in Spin^+(1,3)\}
\end{align}
Where $\Lambda$ is the map defined in proposition \ref{prop:map}.
\end{defn}

Since there is a two-to-one surjective group homomorphism,
$Pin(3,1)$ is a double cover of $O(1,3)$ and $Spin^+(3,1)$ 
is a double cover of $SO^+(1,3)$.

In a Majorana basis, by identifying $i$ with $i\gamma^5$ and $\gamma^0\gamma^j$
with the Pauli matrices $\sigma^j$, we can see that $Spin^+(3,1)$ is
isomorphic to $SL(2,\mathbb{C})$.

\subsection{Majorana Spinor representation}
\begin{defn}
A representation $(M_G,V)$ of a group $G$ is defined by:

1) the representation space $V$, which is a vector space;

2) the representation map $M:G\to GL(V)$ from the
group elements to the automorphisms
of the representation space, verifying for $\Lambda_1,\Lambda_2\in G$:
\begin{align}
M(\Lambda_1)M(\Lambda_2)=M(\Lambda_1\Lambda_2)
\end{align}
\end{defn}

\begin{defn}
The Majorana spinor representation of $Pin(3,1)$ is defined by:

1) the representation space $V=Pinor$ is the space of Majorana spinors;

2) In a basis where the Majorana matrices are unitary, the representation map is:
\begin{align}
M(S)=S,\ S\in Pin(3,1)
\end{align}
\end{defn}

The Majorana spinor representation of the subgroup $Spin^+(3,1)\subset
Pin(3,1)$ is obtained from
the representation of $Pin(3,1)$ by restricting the domain of the
representation map to the subgroup $Spin^+(3,1)\subset Pin(3,1)$. 

\begin{defn}
Let $W$ be a subspace of $V$. $(M_G,W)$ is a subrepresentation of $(M_G,V)$
if $W$ is invariant under the group action, that is, for all $w\in W$:
$(M(g) w)\in W$, for all $g\in G$.
\end{defn}

\begin{defn}
$W^\bot$ is the orthogonal complement of the subspace $W$ of the
vector space $V$ if:

1) all $v\in V$ can be expressed as $v=w+x$, where $w\in W$ and $x\in
W^\bot$; 

2) if $w\in W$ and $x\in W^\bot$, then $x^\dagger w=0$.
\end{defn}

\begin{defn}
The representation $(M_G,V)$ is semi-simple if for all subrepresentation
$(M_G,W)$ of $(M_G,V)$ , $(M_G,W^\bot)$ is also a subrepresentation of
$(M_G,V)$, where $W^\bot$ is the orthogonal complement of the subspace $W$.
\end{defn}

\begin{lem}
\label{lem:orthogonal}
Consider a representation $(M_G,V)$ of a group $G$.
For all $g\in G$, if there is $h\in G$ such that
$M(h)=M^\dagger(g)$, then the representation $(M_G,V)$ is semi-simple.
\end{lem}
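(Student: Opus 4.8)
The plan is to fix an arbitrary subrepresentation $(M_G,W)$ of $(M_G,V)$ and prove directly that its orthogonal complement $W^\bot$ is invariant under the group action, which is precisely what semi-simplicity demands. Before the main argument I would record the consequences of the definition of $W^\bot$ that will be used: point~1 makes $V=W\oplus W^\bot$ a direct sum, point~2 gives the orthogonality $x^\dagger w=0$ (and hence also $w^\dagger x=0$) for $w\in W$, $x\in W^\bot$, and the Hermitian form $v\mapsto v^\dagger v$ is positive definite, so $v^\dagger v=0$ forces $v=0$. I would also use that, since $(M_G,V)$ is a representation, each $M(g)$ is an automorphism of $V$, so $M(g)x\in V$ for every $x$.

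The core computation is short. Fix $g\in G$ and $x\in W^\bot$, and let $h\in G$ be the element supplied by the hypothesis, so $M(h)=M^\dagger(g)$. For an arbitrary $w\in W$,
\begin{align}
w^\dagger\bigl(M(g)x\bigr)=\bigl(M^\dagger(g)w\bigr)^\dagger x=\bigl(M(h)w\bigr)^\dagger x=0,
\end{align}
where the last equality holds because $M(h)w\in W$ by invariance of $W$ and $x\in W^\bot$. Hence $M(g)x$ is orthogonal to every element of $W$.

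To conclude I would upgrade ``orthogonal to $W$'' to ``lies in $W^\bot$''. Using point~1, write $M(g)x=w_0+x_0$ with $w_0\in W$ and $x_0\in W^\bot$. Applying the displayed identity with $w=w_0$ and using $w_0^\dagger x_0=0$ gives $w_0^\dagger w_0=w_0^\dagger\bigl(M(g)x\bigr)-w_0^\dagger x_0=0$, so $w_0=0$ by positive definiteness and $M(g)x=x_0\in W^\bot$. Thus $W^\bot$ is invariant under $G$, i.e. $(M_G,W^\bot)$ is a subrepresentation; since the subrepresentation $(M_G,W)$ was arbitrary, $(M_G,V)$ is semi-simple. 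I do not expect a real obstacle here: the argument is essentially forced. The one place needing care is precisely this last step, since the definition of $W^\bot$ only furnishes a decomposition together with an orthogonality relation, so one must invoke positive definiteness of $v^\dagger v$ to turn ``$M(g)x$ is orthogonal to all of $W$'' into ``the $W$-component of $M(g)x$ vanishes'', and hence into membership in $W^\bot$.
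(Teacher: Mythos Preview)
Your proof is correct and follows essentially the same approach as the paper: compute the pairing between $M(g)x$ and an arbitrary $w\in W$, rewrite it as a pairing between $x$ and $M^\dagger(g)w=M(h)w\in W$, and conclude invariance of $W^\bot$. The only difference is cosmetic (you compute $w^\dagger(M(g)x)$ where the paper computes $(M(g)x)^\dagger w$) together with your added care in the final step, where you use positive definiteness of $v^\dagger v$ to pass from ``orthogonal to every element of $W$'' to ``lies in $W^\bot$''; the paper simply asserts this implication without spelling it out.
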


\begin{proof} 
Let $(M_G,W)$ be a subrepresentation of $(M_G,V)$.
$W^\bot$ is the orthogonal complement of $W$.

For all $x\in W^\bot$, $w\in W$ and $g\in G$, 
$(M(g)x)^\dagger w=x^\dagger (M^\dagger(g)w)$. 

Since $W$ is invariant and there is $h\in G$, such
that $M(h)=M^\dagger(g)$, then $w'\equiv (M^\dagger(g)w)\in W$.
 
Since $x\in W^\bot$ and $w'\in W$, then $x^\dagger w'=0$.

This implies that if $x$ is in the orthogonal complement of $W$ ($x\in W^\bot$), also
$M(g)x$ is in the orthogonal complement of $W$ ($M(g)x\in W^\bot$), for all $g\in G$.
\end{proof}

\begin{prop}
\label{prop:MajoranaSemisimple}
The Majorana spinor representation of $Spin^+(3,1)$ is semi-simple.
\end{prop}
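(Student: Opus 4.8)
The plan is to reduce the statement to a one-line application of Lemma~\ref{lem:orthogonal}. That lemma says a representation $(M_G,V)$ is semi-simple as soon as, for every $g\in G$, there exists $h\in G$ with $M(h)=M^\dagger(g)$. For the Majorana spinor representation of $Spin^+(3,1)$ the space is $V=Pinor$ and, in a basis where the Majorana matrices are unitary, the representation map is just $M(S)=S$. So the whole task is to produce, for each $S\in Spin^+(3,1)$, an element of $Spin^+(3,1)$ that equals $S^\dagger$.

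First I would fix a basis in which the Majorana matrices are unitary (a Majorana basis, where they are in fact real orthogonal). This is the basis in which both $Spin^+(3,1)$ and the representation map were defined; any two such bases are related by an orthogonal, hence unitary, change of variables, by the uniqueness up to a real factor of the similarity relating two real representations of the Majorana matrices, so the Hermitian inner product $x^\dagger w$ used to define orthogonal complements, and hence the notion of semi-simplicity, does not depend on which one we pick. Working in this basis, I would invoke the earlier proposition stating that $S\in Spin^+(3,1)$ implies $S^\dagger\in Spin^+(3,1)$. Thus, given $g=S\in Spin^+(3,1)$, the choice $h=S^\dagger$ lies in $Spin^+(3,1)$ and satisfies $M(h)=S^\dagger=M(S)^\dagger=M^\dagger(g)$, so the hypothesis of Lemma~\ref{lem:orthogonal} holds and the representation is semi-simple: for any subrepresentation carried by a subspace $W$, the orthogonal complement $W^\bot$ is again invariant under the group action.

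There is no real obstacle here; the only point requiring care is bookkeeping about the basis. Lemma~\ref{lem:orthogonal} is stated for a general representation, but the closure of $Spin^+(3,1)$ under Hermitian conjugation, the explicit form $M(S)=S$, and the inner product implicit in the orthogonal complement are all tied to a basis where the Majorana matrices are unitary, so one must make sure they are all read off in such a basis simultaneously. Once that is in place, the proof is immediate.
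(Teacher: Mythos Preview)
Your proposal is correct and follows essentially the same route as the paper: verify that $Spin^+(3,1)$ is closed under Hermitian conjugation in a unitary (Majorana) basis and then apply Lemma~\ref{lem:orthogonal}. The only cosmetic difference is that the paper cites Lemma~\ref{lem:SpinProperties} (whose decomposition $S=e^{\theta^j i\gamma^5\gamma^0\gamma^j}e^{b^j\gamma^0\gamma^j}$ makes $S^\dagger\in Spin^+(3,1)$ manifest) rather than the earlier proposition you invoke, which states $S^\dagger\in Spin^+(3,1)$ directly; either reference supplies the needed hypothesis.
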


\begin{proof}
From point 1) in lemma \ref{lem:SpinProperties} and lemma \ref{lem:orthogonal}.
\end{proof}

\begin{defn}
A representation $(M_G,V)$ is irreducible if their only 
sub-representations are the trivial sub-representations: 
$(M_G,V)$ and $(M_G,\{0\})$, where $\{0\}$ is
the null space.
\end{defn}

\begin{lem}
\label{lem:commuting}
Consider a semi-simple representation $(M_G,V)$ of a group $G$.
If the set of hermitian automorphisms of $V$ that
square to $1$ and commute with $M(g)$, for all $g\in G$, is $\{+1,-1\}$, 
then the representation $(M_G,V)$ is irreducible ($1$ is the identity matrix).
\end{lem}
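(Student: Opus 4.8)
The plan is to run the standard Schur-type argument in contrapositive form, the essential tool being the orthogonal projection onto a hypothetical proper invariant subspace. Suppose $(M_G,V)$ is \emph{not} irreducible, so there is a subrepresentation $(M_G,W)$ with $W\neq\{0\}$ and $W\neq V$. Semi-simplicity gives that $(M_G,W^\bot)$ is also a subrepresentation; and since the inner product is positive definite, any $u\in W\cap W^\bot$ satisfies $u^\dagger u=0$, hence $u=0$, so the decomposition $v=w+x$ (with $w\in W$, $x\in W^\bot$) guaranteed by the definition of $W^\bot$ is unique. This lets us define a linear map $P:V\to V$ by $P(w+x)\equiv w$.

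Then I would verify two properties of $P$. First, $P$ is hermitian: for $v_1=w_1+x_1$ and $v_2=w_2+x_2$ the cross terms $w_1^\dagger x_2$ and $x_1^\dagger w_2$ vanish, so $(Pv_1)^\dagger v_2=w_1^\dagger w_2=v_1^\dagger(Pv_2)$, i.e.\ $P^\dagger=P$. Second, $P$ commutes with $M(g)$ for every $g\in G$: on a vector $w\in W$ this is immediate because $W$ is invariant, giving $PM(g)w=M(g)w=M(g)Pw$; on a vector $x\in W^\bot$ both sides are $0$ because $W^\bot$ is invariant; by linearity $PM(g)=M(g)P$ on all of $V$. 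Now set $Q\equiv 2P-1$. From $P^2=P$ we get $Q^\dagger=Q$ and $Q^2=4P^2-4P+1=1$, so $Q$ is hermitian, squares to $1$, is its own inverse (hence an automorphism of $V$), and commutes with every $M(g)$. By the hypothesis of the lemma, $Q\in\{+1,-1\}$, so $P=1$ or $P=0$, that is $W=V$ or $W=\{0\}$, contradicting the choice of $W$. Hence $(M_G,V)$ must be irreducible.

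The argument is essentially routine; the two points that need a little care are the well-definedness of $P$ (which rests on $W\cap W^\bot=\{0\}$, and hence on positive-definiteness of the inner product used in the definition of the orthogonal complement) and the observation that it is $Q=2P-1$, rather than $P$ itself, that qualifies as a hermitian automorphism to which the hypothesis applies, since $P$ is not invertible unless $W=V$. I do not expect any genuine obstacle beyond these bookkeeping points.
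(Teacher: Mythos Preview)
Your argument is correct and essentially the same as the paper's: the paper defines directly the reflection $P(w+x)=w-x$ (your $Q=2P-1$) and checks it is hermitian, squares to $1$, and commutes with all $M(g)$, then concludes $W=V$ or $W=\{0\}$. The only difference is that you pass through the projection first before forming the reflection, which is a harmless presentational choice.
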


\begin{proof}
Let $(M_G,W)$ and $(M_G,W^\bot)$ be sub-representations of $(M_G,V)$,
where $W^\bot$, the orthogonal complement of $W$.

There is an automorphism $P: V\to V$, such that, 
for $w,w'\in W$, $x,x'\in W^\bot$, $P(w+x)=(w-x)$. $P^2=1$ and $P$ is
hermitian:
\begin{align}
&(w'+x')^\dagger (P (w+x))=w'^\dagger w-x'^\dagger x=
(P(w'+x'))^\dagger (w+x)
\end{align}
Let $w'\equiv M(g) w\in W$ and $x'\equiv M(g) x\in
W^\bot$:
\begin{align}
M(\Lambda)P(w+x)&=M(\Lambda)(w-x)=(w'-x')\\
P M(\Lambda)(w+x)&=P(w'+x')=(w'-x')
\end{align}
Which implies that $P$ commutes with $M(g)$ for all $g\in G$.

If $P=+1$, then $W=V$:
\begin{align}
+(w+x)=P(w+x)=(w-x) \implies x=0
\end{align}
If $P=-1$, then $W$ is the null space:
\begin{align}
-(w+x)=P(w+x)=(w-x) \implies w=0
\end{align}
\end{proof}

\begin{prop}
\label{prop:irreducible}
The Majorana spinor representation of $Spin^+(3,1)$ is irreducible.
\end{prop}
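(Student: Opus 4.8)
The plan is to deduce irreducibility from the machinery already assembled: Proposition~\ref{prop:MajoranaSemisimple} gives that the Majorana spinor representation of $Spin^+(3,1)$ is semi-simple, so by Lemma~\ref{lem:commuting} it suffices to show that the only hermitian $P\in End(Pinor)$ with $P^2=1$ commuting with $M(S)=S$ for every $S\in Spin^+(3,1)$ are $P=+1$ and $P=-1$. All work is thus reduced to computing (the hermitian, unipotent part of) the centralizer of $Spin^+(3,1)$ in $End(Pinor)$.

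First I would reduce the commutation condition to the generators. By Lemma~\ref{lem:SpinProperties} every element of $Spin^+(3,1)$ has the form $e^{\theta^j i\gamma^5\gamma^0\gamma^j}e^{b^j\gamma^0\gamma^j}$; in particular the one-parameter subgroups $e^{t\,\gamma^0\gamma^j}$ and $e^{t\,i\gamma^5\gamma^0\gamma^j}$ ($j=1,2,3$) lie in the group, so a matrix commuting with all of $Spin^+(3,1)$ commutes with each of these and hence with the six matrices $\gamma^0\gamma^j$ and $i\gamma^5\gamma^0\gamma^j$; conversely, a matrix commuting with these six generators commutes with $e^{\theta^j i\gamma^5\gamma^0\gamma^j}$, with $e^{b^j\gamma^0\gamma^j}$, and so with their product. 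Thus $P$ commutes with every $S\in Spin^+(3,1)$ iff it commutes with $\gamma^0\gamma^j$ and $i\gamma^5\gamma^0\gamma^j$ for $j=1,2,3$.

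Next I would use the basis $\Gamma$. Working in a Majorana basis, $End(Pinor)$ is the real span of $\Gamma$ (Proposition~\ref{prop:basis}), and any two elements of $\Gamma$ either commute or anticommute; writing $P=\sum_{A\in\Gamma}c_A\,A$ with $c_A\in\mathbb{R}$ and conjugating by any $B\in\Gamma$, linear independence forces $c_A=0$ whenever $A$ anticommutes with $B$. So $P$ is supported on the elements of $\Gamma$ commuting with all six generators. Now $i\gamma^5$ commutes with every $\gamma^0\gamma^j$ and with every $i\gamma^5\gamma^0\gamma^j$; and conversely, since $\gamma^0\gamma^1$ and $i\gamma^5\gamma^0\gamma^1$ commute with product $i\gamma^5$, Proposition~\ref{prop:commute} shows that any $A\in\Gamma$ commuting with both lies in $\{1,\gamma^0\gamma^1,i\gamma^5\gamma^0\gamma^1,i\gamma^5\}$, of which only $1$ and $i\gamma^5$ also commute with $\gamma^0\gamma^2$. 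Hence $P=a\cdot 1+b\cdot i\gamma^5$ with $a,b\in\mathbb{R}$.

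Finally, since $i\gamma^5\in\Gamma_-$ squares to $-1$, the condition $P^2=1$ reads $(a^2-b^2)\cdot 1+2ab\,i\gamma^5=1$, forcing $2ab=0$ and $a^2-b^2=1$; as $b^2\ge 0$ this gives $b=0$ and $a=\pm 1$, so $P=\pm 1$ (and these are indeed hermitian automorphisms squaring to $1$ and commuting with everything, so the set in question is exactly $\{+1,-1\}$). Lemma~\ref{lem:commuting} then yields irreducibility. The only genuine computation is the finite check that the centralizer of the six generators inside $\Gamma$ is $\{1,i\gamma^5\}$ --- that is the one place to be careful --- and the rest is bookkeeping; notably, hermiticity is not even needed, since the real span of $\{1,i\gamma^5\}$ already contains no unipotent elements besides $\pm 1$.
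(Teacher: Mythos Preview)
Your proof is correct and follows the same architecture as the paper's: invoke Proposition~\ref{prop:MajoranaSemisimple} for semi-simplicity, then feed into Lemma~\ref{lem:commuting} by computing the centralizer of $Spin^+(3,1)$ inside $End(Pinor)$. The only tactical difference is that the paper uses the hermiticity hypothesis first---restricting at once to the real span of $\Gamma_+=\{1,\gamma^0\gamma^j,i\gamma^j,\gamma^5\gamma^j\}$ and observing that the identity is the only element there commuting with all of $Spin^+(3,1)$---whereas you compute the full centralizer in $\Gamma$ (obtaining the span of $\{1,i\gamma^5\}$) and then use $P^2=1$ to eliminate the $i\gamma^5$ component. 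Both routes reach $\{\pm 1\}$; yours is slightly more explicit about the reduction to generators via Lemma~\ref{lem:SpinProperties} and the use of Proposition~\ref{prop:commute}, while the paper's is marginally shorter by exploiting hermiticity early.
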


\begin{proof}
The hermitian linear transformations from
and to Majorana spinors, are generated by the linear combinations with
real coefficients of the $10$ matrices in the basis
$\Gamma_S\equiv\{1, \gamma^0\gamma^j,i\gamma^j,\gamma^5\gamma^j\}$, where $j=1,2,3$.

The only matrix in $\Gamma_S$ commuting
with all $S\in Spin^+(3,1)$ is the identity
matrix. Therefore, the set of hermitian automorphisms 
of the Majorana spinors that square to $1$ and commute with
all $S\in Spin^+(3,1)$, is $\{+1,-1\}$. Applying
proposition \ref{prop:MajoranaSemisimple} and lemma
\ref{lem:commuting} the proposition is proved.
\end{proof}

The Majorana spinor representation of the group $Pin^+(3,1)$ is
also irreducible because it is already irreducible for the subgroup
$Spin^+(3,1)\subset Pin^+(3,1)$.

\section{Majorana spinor solutions of the free Dirac equation}
\begin{defn}
$L^2(\mathbb{R}^n)$ is the Hilbert space of real functions of $n$ real
variables whose square is Lebesgue integrable in $\mathbb{R}^n$.
The internal product is:
\begin{align}
<f,g>\equiv\int d^nx f(x)g(x),\ f,g \in L^2(\mathbb{R}^n)
\end{align} 
\end{defn}

\begin{rem}
\label{rem:fourier}
If $f\in L^2(\mathbb{R}^n)$, then $f_s, f_c\in L^2(\mathbb{R}^n)$:
\begin{align}  
f_c(p)&\equiv\int d^nx\ \cos(p\cdot x)f(x)\\
f_s(p)&\equiv \int d^nx\ \sin(p\cdot x)f(x)
\end{align}
The Dirac delta $\delta^n$ is a well defined operator of the Hilbert
space $L^2(\mathbb{R}^n)$:
\begin{align}
\delta^n(x)&\equiv \int \frac{d^np}{(2\pi)^n} \cos(p\cdot x)\\ 
f(0)&=\int d^nx\ \delta^n(x)f(x)
\end{align}
The domain of integration is $\mathbb{R}^n$.
\end{rem}

\begin{rem}
\label{rem:derivative}
The derivative $\partial_i$, $i=1,...,n$, is a skew-symmetric
operator of the Hilbert space $L^2(\mathbb{R}^n)$:
\begin{align}
\int d^nx (\partial_i f(x))g(x)=-\int d^nx
f(x)(\partial_i g(x)),\ f,g \in L^2(\mathbb{R}^n)
\end{align} 
\end{rem}

The free Dirac equation is:
\begin{align}
(i\gamma^\mu \partial_\mu-m)\Psi(x)=0
\end{align}
We are looking for solutions where $\Psi(x)\in Pinor\otimes
L^2(\mathbb{R}^4)$ is a Majorana
spinor, whose entries are square integrable functions of the space-time.

If we change $x^\mu\to (\Lambda(S))^\mu_{\ \nu}
x^\nu$ and $\Psi(x)\to S\Psi(x)$, where $S\in Pin(3,1)$, 
and we multiply the equation on the left by $S^{-1}$ we get:

\begin{align}
&S^{-1}(i\gamma^\mu g_{\mu\nu}  (\Lambda(S))^\nu_{\
  \alpha}g^{\alpha\beta}\partial_\beta-m)S\Psi(x)=\\
&=(i\gamma^\delta (\Lambda(S))^\mu_{\ \delta} g_{\mu\nu}  (\Lambda(S))^\nu_{\
  \alpha}g^{\alpha\beta}\partial_\beta-m)\Psi(x)=\\
&=(i\gamma^\delta \partial_\delta-m)\Psi(x)=0
\end{align}
The equation stays invariant.

If we multiply by $-i\gamma^0$ on the left, the equation can be rewritten as:
\begin{align}
(\partial_0+iH(\vec{x}))\Psi(x)&=0\\
iH(\vec{x})\equiv \gamma^0\gamma^j \partial_j-i\gamma^0 m&,\ j=1,2,3
\end{align}
The solution is:
\begin{align}
\Psi(x)=e^{-iH(\vec{x}) x^0}\psi(\vec{x})
\end{align}
Where $\psi(\vec{x})\in Pinor\otimes
L^2(\mathbb{R}^3)$ is a Majorana
spinor, whose entries are square integrable functions of the space coordinates.

Now we can write $\psi(\vec{x})=M(\vec{x})\chi$, where  
$M(\vec{x})\in End(Pinor)\otimes L^2(\mathbb{R}^3)$ is a Majorana
spinor endomorphism, whose entries are square integrable functions of
the space and $\chi\in Pinor$ is a Majorana spinor.
Suppose that for some $E\in \mathbb{R}$, $M$ verifies the equation:
\begin{align}
iH(\vec{x})M(\vec{x})=M(\vec{x})i\gamma^0E
\end{align} 
In the next two sections we will see that these matrices satisfying
the above equation have interesting properties. Now we have:
\begin{align}
\Psi(x)=M(\vec{x})e^{-i\gamma^0 E x^0}\chi
\end{align}
%We will see in section Note that $E$ is related with the usual energy of the Dirac spinor
%solutions. To see it, we just need to replace $\chi$ by a Dirac spinor
%$\chi_{\pm}\in M(4,1,\mathbb{C})$ verifying $\gamma^0\chi_\pm=\pm\chi_\pm$. 
%Then we have that the following Dirac spinor 
%is a solution of the free Dirac equation:
%%\begin{align}
%\Psi_\pm(x)=M(\vec{x})e^{\mp i E x^0}\chi_\pm
%\end{align}

Before moving to the next section, we will fix notation.
If $p, q$ are Lorentz vectors, we define $\slashed p=\gamma^\mu p_\mu$ and $p\cdot q=p^\mu q_\mu$.
Given a mass $m\geq 0$, we define:
\begin{align}
\vec{p}^j&=p^j,\ j=1,2,3\\
\vec{\slashed p}&=\vec{\gamma}\cdot\vec{p}\\
E_p&=\sqrt{\vec{p}^2+m^2}\\
\slashed p&=\gamma^0 p^0-\vec{\gamma}\cdot\vec{p}
% [\slashed p]&=\gamma^0 E_p-\vec{\gamma}\cdot\vec{p}
\end{align}
%Note that $\slashed p$ is not necessarily on-shell, while $[\slashed p]$ is on-shell, that is $([\slashed p])^2=m^2$. 
%Both $E_p$ and $[\slashed p]$  do not depend on $p^0$.

\section{Linear Momentum of Majorana spinors}
\begin{defn}
$L^2_4(\mathbb{R}^n)$ is the Hilbert space $Pinor\otimes L^2(\mathbb{R}^n)$, that is, Majorana spinors whose
entries are square integrable functions of $\mathbb{R}^n$. The internal product is:
\begin{align}
<\Psi,\Phi>\equiv\int d^n x\ \Psi^\dagger(x)\Phi(x),\ \Psi,\Phi \in L^2_4(\mathbb{R}^n)
\end{align} 
\end{defn}

\begin{defn}
The Fourier-Majorana Transform $\psi(\vec{p})$ of a Majorana spinor
$\Psi(\vec{x})\in L^2_4(\mathbb{R}^3)$ is the Majorana spinor:
\begin{align}
\psi(\vec{p})&\equiv\int d^3\vec{x}\ O(\vec{p},\vec{x})\Psi(\vec{x})\\
O(\vec{p},\vec{x})&\equiv e^{-i\gamma^0\vec{p} \cdot
\vec{x}}\frac{\slashed p \gamma^0+m}{\sqrt{E_p+m}\sqrt{2E_p}}
\end{align}
Where $m\geq 0$ $p^0=E_p=\sqrt{\vec{p}^2+m^2}$. 
\end{defn}

\begin{prop}
The Fourier-Majorana Transform $\psi(\vec{p})$ of a Majorana spinor
$\Psi(\vec{x})\in L^2_4(\mathbb{R}^3)$ is also in the Hilbert space $L^2_4(\mathbb{R}^3)$.
\end{prop}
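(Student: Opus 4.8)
The goal is to show that if $\Psi \in L^2_4(\mathbb{R}^3)$ then its Fourier-Majorana transform $\psi(\vec p)$ is again a Majorana spinor whose entries lie in $L^2(\mathbb{R}^3)$. There are really two things to check: that $\psi(\vec p)$ satisfies the Majorana reality condition pointwise in $\vec p$, and that each of its four entries is square integrable in $\vec p$.

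\medskip\noindent\textbf{Proof plan.}

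The plan is to work in a Majorana basis, where $i\gamma^\mu$ are real and a Majorana spinor is simply a real $4\times 1$ column. First I would verify that the kernel $O(\vec p,\vec x)$ is a \emph{real} matrix for each fixed $\vec p,\vec x$. The exponential $e^{-i\gamma^0 \vec p\cdot\vec x}$ is real because $i\gamma^0$ is real (it is one of the Majorana matrices), so the exponential of a real matrix times a real scalar is real. The remaining factor $\frac{\slashed p\,\gamma^0 + m}{\sqrt{E_p+m}\sqrt{2E_p}}$ is real because $\slashed p\,\gamma^0 = \gamma^\mu p_\mu \gamma^0 = -(i\gamma^\mu)(i\gamma^0)p_\mu$ is a real linear combination of products of Majorana matrices, and $m, E_p$ are real with $E_p + m > 0$ (using $m\ge 0$ and $E_p = \sqrt{\vec p^2+m^2} \ge m \ge 0$, with $E_p>0$ unless $m=0$ and $\vec p=0$, a null set). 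Hence $O(\vec p,\vec x)\Psi(\vec x)$ is a real column vector, and integrating over $\vec x$ preserves reality, so $\psi(\vec p)$ is a Majorana spinor provided the integral converges.

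\medskip For square integrability, the key observation is that each entry of $O(\vec p,\vec x)$ is, as a function of $\vec x$ for fixed $\vec p$, a bounded trigonometric polynomial: writing out $e^{-i\gamma^0\vec p\cdot\vec x}$ one gets entries that are real linear combinations of $\cos(\vec p\cdot\vec x)$ and $\sin(\vec p\cdot\vec x)$ with coefficients that are bounded functions of $\vec p$. Concretely, $(i\gamma^0)^2 = -1$, so $e^{-i\gamma^0\vec p\cdot\vec x} = \cos(\vec p\cdot\vec x)\,\mathbf{1} - i\gamma^0\sin(\vec p\cdot\vec x)$. Thus the $k$-th entry $\psi_k(\vec p)$ is a finite real linear combination, with coefficients depending on $\vec p$ through the bounded factor $\frac{\slashed p\gamma^0+m}{\sqrt{E_p+m}\sqrt{2E_p}}$, of the functions $\int d^3\vec x\,\cos(\vec p\cdot\vec x)\Psi_l(\vec x)$ and $\int d^3\vec x\,\sin(\vec p\cdot\vec x)\Psi_l(\vec x)$ over the four components $\Psi_l$ of $\Psi$. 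By Remark~\ref{rem:fourier}, since each $\Psi_l \in L^2(\mathbb{R}^3)$, both $(\Psi_l)_c$ and $(\Psi_l)_s$ are in $L^2(\mathbb{R}^3)$. So each $\psi_k$ is a finite sum of $L^2$ functions multiplied by bounded coefficient functions of $\vec p$; it remains to check those coefficients are genuinely bounded.

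\medskip\noindent\textbf{Main obstacle.} The one point needing care is the normalization factor $\frac{1}{\sqrt{E_p+m}\sqrt{2E_p}}$ near $\vec p = 0$ when $m=0$, where it blows up. The way I would handle this: the numerator $\slashed p\,\gamma^0 + m$ vanishes to matching order. Indeed $\slashed p\,\gamma^0 = \gamma^0 p^0\gamma^0 - \vec\gamma\cdot\vec p\,\gamma^0 = p^0 + (\text{terms linear in }\vec p)$ since $(\gamma^0)^2 = 1$ (from $\{i\gamma^0,i\gamma^0\}=-2g^{00}=-2$), so $\slashed p\gamma^0 + m = E_p + m + (\vec\gamma\cdot\vec p)\gamma^0$-type terms, and $\|\slashed p\gamma^0+m\| \lesssim E_p+m$ uniformly; dividing by $\sqrt{E_p+m}\sqrt{2E_p}$ and noting $E_p \ge \tfrac12(E_p+m)$ when... actually more simply, $\frac{E_p+m}{\sqrt{E_p+m}\sqrt{2E_p}} = \sqrt{\frac{E_p+m}{2E_p}} \le 1$ since $E_p \ge m$. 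So the coefficient matrix has operator norm bounded by a constant (one can check $\le 1$), uniformly in $\vec p$, and there is no singularity. With that bound in hand, each $\psi_k$ is a finite $\mathbb{R}$-linear combination of $L^2$ functions with bounded measurable coefficients, hence in $L^2(\mathbb{R}^3)$, and therefore $\psi \in L^2_4(\mathbb{R}^3)$.
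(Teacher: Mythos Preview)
Your proposal is correct and follows essentially the same route as the paper: work in a Majorana basis so that $O(\vec p,\vec x)$ is real, expand $e^{-i\gamma^0\vec p\cdot\vec x}=\cos(\vec p\cdot\vec x)\mathbf{1}-i\gamma^0\sin(\vec p\cdot\vec x)$, bound the entries of the coefficient matrix $\frac{\slashed p\gamma^0+m}{\sqrt{E_p+m}\sqrt{2E_p}}$ by $\sqrt{\tfrac{E_p+m}{2E_p}}\le 1$, and then invoke Remark~\ref{rem:fourier} on the cosine and sine transforms of each component $\Psi_j$. Your treatment is more detailed than the paper's (in particular you explicitly dispose of the apparent $m=0$, $\vec p=0$ singularity as a null set), but the structure and the key estimate are the same.
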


\begin{proof}
In the Majorana bases, $O(\vec{p},\vec{x})$ and $\Psi(\vec{x})$ are
real and so is $\psi(\vec{p})$.

We have:
\begin{align}
&|[\frac{\slashed p \gamma^0+m}{\sqrt{E_p+m}\sqrt{2E_p}}]_{ij}|^2\leq
\frac{E_p+m}{2E_p}\leq 1,\ i,j=1,2,3,4\\ 
&|\psi_i(\vec{p})|^2\leq \sum_{j=1}^4|\int d^3\vec{x}\ \cos(\vec{p}\cdot\vec{x})\Psi_j(\vec{x})|^2+|\int d^3\vec{x}\ \sin(\vec{p}\cdot\vec{x})\Psi_j(\vec{x})|^2
\end{align}
From remark \ref{rem:fourier}, we have that both $\int d^3\vec{x}\
\cos(\vec{p}\cdot\vec{x})\Psi_j(\vec{x})$
and $\int d^3\vec{x}\
\sin(\vec{p}\cdot\vec{x})\Psi_j(\vec{x})$ are square integrable and
therefore $|\psi_i(\vec{p})|^2$ is square integrable.
\end{proof}

\begin{prop}
The inverse Fourier-Majorana transform of $\psi(\vec{p})$ is:
\begin{align}
\Psi(\vec{x})&=\int \frac{d^3\vec{p}}{(2\pi)^3}
O^\dagger(\vec{p},\vec{x})\psi(\vec{p})\\
O^\dagger(\vec{p},\vec{x})&=\frac{\slashed p \gamma^0+m}{\sqrt{E_p+m}\sqrt{2E_p}}e^{i\gamma^0\vec{p} \cdot
\vec{x}}
\end{align}
$O^\dagger$ is the hermitian conjugate of $O$.
\end{prop}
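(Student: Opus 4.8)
The plan is to insert the definition of $\psi(\vec p)$ into the right-hand side of the claimed formula, interchange the order of the $\vec x$- and $\vec p$-integrations, and show that the resulting matrix kernel reduces to $\delta^3$ times the identity. Concretely, write $O(\vec p,\vec x)=e^{-i\gamma^0\vec p\cdot\vec x}F(\vec p)$ with $F(\vec p)\equiv\frac{\slashed p\gamma^0+m}{\sqrt{E_p+m}\sqrt{2E_p}}$. First I would check that $F(\vec p)$ is hermitian (so that $O^\dagger(\vec p,\vec x)=F(\vec p)e^{i\gamma^0\vec p\cdot\vec x}$ is indeed the stated form): in a basis where the Majorana matrices are unitary, $i\gamma^0$ is skew-hermitian while $i\gamma^j$ is hermitian, and since $i\gamma^j$ and $i\gamma^0$ anticommute, $\gamma^j\gamma^0=-(i\gamma^j)(i\gamma^0)$ is hermitian; hence $\slashed p\gamma^0=E_p-(\vec\gamma\cdot\vec p)\gamma^0$ and $F(\vec p)$ are hermitian, while $(e^{-i\gamma^0\vec p\cdot\vec x})^\dagger=e^{i\gamma^0\vec p\cdot\vec x}$ because $(i\gamma^0)^\dagger=-i\gamma^0$.

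The core of the argument is the evaluation of $O^\dagger(\vec p,\vec x)O(\vec p,\vec y)=F(\vec p)\,e^{i\gamma^0\vec p\cdot(\vec x-\vec y)}\,F(\vec p)$. Using $(i\gamma^0)^2=-1$ to expand $e^{i\gamma^0\phi}=\cos\phi+(i\gamma^0)\sin\phi$ with $\phi\equiv\vec p\cdot(\vec x-\vec y)$, this equals $\cos\phi\,F(\vec p)^2+\sin\phi\,F(\vec p)(i\gamma^0)F(\vec p)$, so everything hinges on two quadratic Clifford identities. Using only the anticommutation relations, $(\gamma^0)^2=1$, $(\vec\gamma\cdot\vec p)^2=-\vec p^2$ and the on-shell relation $\slashed p\slashed p=p^2=E_p^2-\vec p^2=m^2$, I expect to obtain $(\slashed p\gamma^0+m)^2=2(E_p+m)\,\slashed p\gamma^0$ and $(\slashed p\gamma^0+m)(i\gamma^0)(\slashed p\gamma^0+m)=2m(E_p+m)\,i\gamma^0$; dividing by the normalization $2E_p(E_p+m)$ gives $F(\vec p)^2=\frac{1}{E_p}\slashed p\gamma^0$ and $F(\vec p)(i\gamma^0)F(\vec p)=\frac{m}{E_p}i\gamma^0$. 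Substituting $\slashed p\gamma^0=E_p-(\vec\gamma\cdot\vec p)\gamma^0$ then yields
\begin{align}
O^\dagger(\vec p,\vec x)O(\vec p,\vec y)=\cos\phi\cdot 1-\frac{\cos\phi}{E_p}(\vec\gamma\cdot\vec p)\gamma^0+\frac{m\sin\phi}{E_p}i\gamma^0 .
\end{align}

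It remains to integrate over $\vec p$. The first term gives $\delta^3(\vec x-\vec y)\cdot 1$ by the definition of the Dirac delta in remark \ref{rem:fourier}; the other two terms are odd under $\vec p\mapsto-\vec p$ (the scalars $\cos\phi$ and $E_p$ are even, whereas $\vec\gamma\cdot\vec p$ and $\sin\phi$ are odd), so they integrate to zero over $\mathbb{R}^3$. Hence $\int\frac{d^3\vec p}{(2\pi)^3}O^\dagger(\vec p,\vec x)O(\vec p,\vec y)=\delta^3(\vec x-\vec y)\cdot 1$, and applying this under $\int d^3\vec y$ against $\Psi(\vec y)$ recovers $\Psi(\vec x)$. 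The main obstacles are: carrying out the two quadratic Clifford identities correctly, which requires careful bookkeeping of the anticommutator of $\gamma^0$ with $\vec\gamma\cdot\vec p$ together with the on-shell substitution $p^0=E_p$; and justifying the interchange of integrations and the delta manipulation. For the latter I would first work on the dense subspace of Majorana spinors with Schwartz-class entries, where the double integral converges absolutely so Fubini applies and remark \ref{rem:fourier} is literal, and then extend to all of $L^2_4(\mathbb{R}^3)$ by continuity, using that $O$ (hence $O^\dagger$) has uniformly bounded entries as established in the preceding proposition.
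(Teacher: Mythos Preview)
Your proof is correct. For the completeness identity $\int\frac{d^3\vec p}{(2\pi)^3}O^\dagger(\vec p,\vec x)O(\vec p,\vec y)=\delta^3(\vec x-\vec y)$ you follow essentially the same route as the paper: write $F\,e^{i\gamma^0\phi}F$, split into a cosine piece and two parity-odd pieces, and integrate. Your explicit derivation of the two quadratic identities $F^2=\slashed p\gamma^0/E_p$ and $F(i\gamma^0)F=(m/E_p)\,i\gamma^0$ is in fact cleaner than the paper's compressed step through $e^{i\frac{\slashed p}{m}\phi}\frac{\slashed p\gamma^0}{E_p}$, and has the incidental advantage of never dividing by $m$.

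The one substantive difference is that the paper also establishes the companion orthogonality relation
\[
\int d^3\vec x\,O(\vec q,\vec x)O^\dagger(\vec p,\vec x)=(2\pi)^3\delta^3(\vec q-\vec p),
\]
and does so by a genuinely different device: it notes that $O^\dagger(\vec p,\vec x)$ obeys $i\gamma^0(i\vec{\slashed\partial}-m)O^\dagger=-O^\dagger i\gamma^0 E_p$, invokes the skew-hermiticity of $i\gamma^0(i\vec{\slashed\partial}-m)$ to force the $i\gamma^r$-odd cross terms to vanish, and only then reduces to the scalar Fourier delta. You omit this direction. It is not required for the proposition as literally stated (recovering $\Psi$ from its transform $\psi$ needs only completeness), but it is what shows the forward map $\Psi\mapsto\psi$ is injective and hence that the word ``inverse'' is fully earned. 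If you want to match the paper's scope, you could add this; your Clifford-algebra method would handle it too, since $O(\vec q,\vec x)O^\dagger(\vec p,\vec x)=e^{-i\gamma^0\vec q\cdot\vec x}F(\vec q)F(\vec p)e^{i\gamma^0\vec p\cdot\vec x}$ and the $\vec x$-integral produces $\delta^3(\vec q-\vec p)$ on the even part and kills the odd part, after which $F(\vec p)^2=\slashed p\gamma^0/E_p$ is not quite the identity---so one still needs the intertwining with $e^{i\gamma^0\vec p\cdot\vec x}$ to finish, which is where the paper's differential-equation trick (or an equivalent commutation argument) enters.
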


\begin{proof}
The matrix $O^\dagger(\vec{p},\vec{x})$ verifies:
\begin{align}
O^\dagger(\vec{p},\vec{x})&=\frac{\slashed
  p}{m}O^\dagger(\vec{p},\vec{x})\gamma^0\\
i\gamma^0(i\vec{\slashed \partial}-m)O^\dagger(\vec{p},\vec{x})&=-\gamma^0\vec{\slashed
  p}O^\dagger(\vec{p},\vec{x})i\gamma^0-\gamma^0\slashed p O^\dagger(\vec{p},\vec{x})i\gamma^0\\
&=-O^\dagger(\vec{p},\vec{x})i\gamma^0E_p
\end{align}
From remark \ref{rem:derivative}, the operator
$i\gamma^0(i\vec{\slashed \partial}-m)$ is skew-hermitian and so:

\begin{align}
&\Big(\int d^3\vec{x}
O(\vec{p},\vec{x})i\gamma^0(i\vec{\slashed \partial}-m)O^\dagger(\vec{q},\vec{x})\Big)^\dagger=-\int d^3\vec{x}
O(\vec{q},\vec{x})i\gamma^0(i\vec{\slashed \partial}-m)O^\dagger(\vec{p},\vec{x})
\end{align}
Which implies:
\begin{align}
\int d^3\vec{x}
i\gamma^0E_qO(\vec{q},\vec{x})O^\dagger(\vec{p},\vec{x})&=\int d^3\vec{x}
O(\vec{q},\vec{x})O^\dagger(\vec{p},\vec{x})i\gamma^0E_p
\end{align}
Noting that $E_p+E_q>0$, this implies that:
\begin{align}
&\int d^3\vec{x}e^{-i\gamma^0\vec{q} \cdot
\vec{x}}\frac{\vec{\slashed q} \gamma^0(E_p+m)+\vec{\slashed p} \gamma^0(E_q+m)}{\sqrt{E_q+m}\sqrt{2E_q}\sqrt{E_p+m}\sqrt{2E_p}}e^{i\gamma^0\vec{p} \cdot
\vec{x}}=0
\end{align}
Therefore, we get:
\begin{align}
&\int d^3\vec{x}
O(\vec{q},\vec{x})O^\dagger(\vec{p},\vec{x})=
\int d^3\vec{x}e^{-i\gamma^0\vec{q} \cdot
\vec{x}}\frac{(E_p+m)(E_q+m)+\vec{\slashed q} \gamma^0\vec{\slashed p} \gamma^0}{\sqrt{E_q+m}\sqrt{2E_q}\sqrt{E_p+m}\sqrt{2E_p}}e^{i\gamma^0\vec{p} \cdot
\vec{x}}\\
&=\int d^3\vec{x}e^{-i\gamma^0(\vec{q}-\vec{p})\cdot
\vec{x}}\frac{(E_p+m)(E_q+m)+\vec{\slashed q} \gamma^0\vec{\slashed p}
\gamma^0}{\sqrt{E_q+m}\sqrt{2E_q}\sqrt{E_p+m}\sqrt{2E_p}}\\
&=(2\pi)^3\delta^3(\vec{q}-\vec{p})\frac{(E_p+m)(E_p+m)+\vec{p}^2}{(E_p+m)2E_p}\\
&=(2\pi)^3\delta^3(\vec{q}-\vec{p})\frac{(E_p+m)(E_p+m)+(E_p+m)(E_p-m)}{(E_p+m)2E_p}\\
&=(2\pi)^3\delta^3(\vec{q}-\vec{p})
\end{align}

The other way around:
\begin{align}
&\int
\frac{d^3\vec{p}}{(2\pi)^3}O^{\dagger}(\vec{p},\vec{y})O(\vec{p},\vec{x})=
\int \frac{d^3\vec{p}}{(2\pi)^3}\frac{\slashed p \gamma^0+m}{\sqrt{E_p+m}\sqrt{2E_p}}e^{i\gamma^0\vec{p} \cdot
(\vec{y}-\vec{x})}\frac{\slashed p
\gamma^0+m}{\sqrt{E_p+m}\sqrt{2E_p}}\\
&=\int
\frac{d^3\vec{p}}{(2\pi)^3} e^{i\frac{\slashed p}{m}\vec{p} \cdot
(\vec{y}-\vec{x})}\frac{\slashed p \gamma^0}{E_p}\\
&=\int \frac{d^3\vec{p}}{(2\pi)^3}\cos(\vec{p} \cdot(
\vec{y}-\vec{x}))+\\
&+\int \frac{d^3\vec{p}}{(2\pi)^3}(-\cos(\vec{p} \cdot(
\vec{y}-\vec{x}))\frac{\vec{\slashed p} \gamma^0}{E_p}+\sin(\vec{p} \cdot(
\vec{y}-\vec{x}))\frac{m i\gamma^0}{E_p}\\
&=\delta^3(\vec{y}-\vec{x})
\end{align}
Note that both $\cos(\vec{p} \cdot(
\vec{y}-\vec{x}))\frac{\vec{\slashed p} \gamma^0}{E_p}$ and $\sin(\vec{p} \cdot(
\vec{y}-\vec{x}))\frac{mi\gamma^0}{E_p}$ are odd in $\vec{p}$ and
therefore do not contribute to the integral.
\end{proof}

\section{Angular momentum of Majorana spinors}
\subsection{Majorana Spin}
\begin{defn}
The Majorana spin operators $\frac{1}{2}\sigma^k$ are defined as:
\begin{align}
\frac{1}{2}\sigma^k\equiv&\frac{1}{2}\gamma^k\gamma^5 ,\  \ k=1,2,3
\end{align}
\end{defn}

They verify the angular momentum algebra:
\begin{align}
[\frac{1}{2}\sigma^i,\frac{1}{2}\sigma^j]=&i\gamma^0\epsilon^{ijk}\frac{1}{2}\sigma^k
\end{align}
Where $\epsilon^{ijk}$ is the Levi-Civita symbol. Note that $i\gamma^0$ commutes with $\sigma^k$ and squares to $-1$,
so it plays the role of the imaginary unit in the angular momentum
algebra.

The eigenstates of $\frac{1}{2}\sigma^3$ are the Majorana spinors $\psi$ verifying:
\begin{align}
\psi_\pm=\frac{1\pm \sigma^3}{2}\psi_\pm
\end{align}
The eigenvalues are $\frac{1}{2}\sigma^3\psi_\pm=\pm\frac{1}{2}\psi_\pm$.

\subsection{Majorana orbital angular momentum}
\begin{defn}
A set $S$ of elements of an Hilbert space $H$ with internal product
$<,>$, is an orthonormal basis if:

1) For all $a\in S$: $<a,a>=1$;

2) (orthogonality) For all $a,b\in S$, with $a\neq b$:  $<a,b>=0$;

3) (completeness) For all $f,g\in H$, $<g,f>=\sum_{a\in S}<g,a><a,f>$.
\end{defn}

\begin{defn}
Let $\vec{x}\in \mathbb{R}^3$. The spherical coordinates
parametrization is:
\begin{align}
\vec{x}=r(\sin(\theta)\sin(\varphi)\vec{e_1}+\sin(\theta)\sin(\varphi)\vec{e_2}+\cos(\theta)\vec{e}_3)
\end{align}
where $\{\vec{e}_1,\vec{e}_2,\vec{e}_3\}$ is an orthonormal basis of
$\mathbb{R}^3$ and $r\in [0,+\infty[$ $\theta \in [0,\pi]$, $\varphi
\in [-\pi,\pi]$.
\end{defn}

\begin{defn}
$L^2(S^2)$ is the Hilbert space of real functions with domain $S^2\equiv
\{\vec{x}\in \mathbb{R}^3:|\vec{x}|=1\}$,
whose square is Lebesgue integrable in $S^2$.
The internal product is:
\begin{align}
<f,g>\equiv\int d(\cos\theta)d\varphi f(\theta,\varphi)g(\theta,\varphi),\ f,g \in L^2(S^2)
\end{align} 
\end{defn}

\begin{defn}
$L^2_4(S^2)$ is the Hilbert space of Majorana spinors whose
4 real components in the Majorana bases are in $L^2(S^2)$.
The internal product is:
\begin{align}
<\Psi,\Phi>\equiv\int d(\cos\theta)d\varphi\ \Psi^\dagger(\theta,\varphi)\Phi(\theta,\varphi),\ \Psi,\Phi \in L^2_4(S^2)
\end{align} 
\end{defn}

\begin{defn}
The Majorana angular momentum operators $\vec{L}_k$ are:
\begin{align}
\vec{L}_k\equiv \sum_{i,j=1,2,3}-i\gamma^0\epsilon_{ijk}x^i\partial_j,\ k=1,2,3
\end{align}
Where $\epsilon_{ijk}$ is the Levi-Civita symbol.
\end{defn}

The operators verify the angular momentum algebra:
\begin{align}
[\vec{L}_i,\vec{L}_j]=&i\gamma^0\epsilon_{ijk}\vec{L}_k
\end{align}
In spherical coordinates:
\begin{align}
i\gamma^0\vec{L}_3&=\partial_\varphi\\
(\vec{L})^2&=-\sin(\theta)\partial_\theta\Big(\sin(\theta)\partial(\theta)\Big)-\frac{1}{\sin^2(\theta)}\partial^2_\varphi
\end{align}
\begin{defn}
The cosine spherical harmonics $Y^c_{lm}$, sine spherical harmonics
$Y^s_{lm}$ and associated Legendre functions of the first kind
$P_{lm}$ are:
\begin{align}
Y^c_{lm}(\theta,\varphi)&\equiv\sqrt{\frac{2l+1}{4\pi}\frac{(l-m)!}{(l+m)!}}
P_{l}^m(\cos\theta)\cos(m \varphi)\\
Y^s_{lm}(\theta,\varphi)&\equiv\sqrt{\frac{2l+1}{4\pi}\frac{(l-m)!}{(l+m)!}}
P_{l}^m(\cos\theta)\sin(m \varphi)\\
P_{l}^m(\xi)&\equiv\frac{(-1)^{m}}{2^{l}l!}(1-\xi^{2})^{m/2}
\frac{\mathrm{d}^{l+m}}{\mathrm{d}\xi^{l+m}}(\xi^{2}-1)^{l}
\end{align}
where $\theta \in [0,\pi]$, $\varphi
\in [-\pi,\pi]$, $\xi\in [-1,1]$ and $l,m$ are integer numbers
$l\geq 0$, $-l\leq m\leq l$.
\end{defn}

The spherical harmonics verify \cite{harmonics}:
\begin{align}
\partial_\varphi Y^c_{lm}(\theta,\varphi)&=-m
Y^s_{lm}(\theta,\varphi)\\
\partial_\varphi Y^s_{lm}(\theta,\varphi)&=m
Y^c_{lm}(\theta,\varphi)\\
-\Big(\sin(\theta)\partial_\theta\Big(\sin(\theta)\partial_\theta\Big)
+\frac{1}{\sin^2(\theta)}\partial^2_\varphi\Big)Y^a_{lm}&=l(l+1)Y^a_{lm},\ a=c,s
\end{align}
\begin{rem}
\label{rem:harmonics}
The spherical harmonics verify:
\begin{align}  
<Y^{s}_{l'm'},Y^{c}_{lm}>&=0\\
<Y^{s}_{l'm'},Y^{s}_{lm}>+<Y^{c}_{l'm'},Y^{c}_{lm}>&=\delta_{l'l}\delta_{m'm}
\end{align}
For all $f,g\in L^2(S^2)$:
\begin{align}
<g,f>=\sum_{a=c,s,\ l\geq 0,\ -l\leq m\leq
  l} <g,Y_{lm}^a><Y_{lm}^a,f>
\end{align}
\end{rem}

\begin{defn}
The Majorana spherical harmonics $Y_{lm}$ are:
\begin{align}
Y_{lm}(\theta,\varphi)&\equiv Y^c_{lm}(\theta,\varphi)+i\gamma^0Y^s_{lm}(\theta,\varphi)\\
&=\sqrt{\frac{2l+1}{4\pi}\frac{(l-m)!}{(l+m)!}}
P_{l}^m(\cos\theta)e^{i\gamma^0 m \varphi}
\end{align}
\end{defn}

The Majorana spherical harmonics are similar to the standard Laplace spherical
harmonics definition, with $i\gamma^0$ in place of $i$. The properties
are also similar.

They verify:
\begin{align}
(\vec{L}_3-m) Y_{lm}(\vec{x})&=0\\
(\vec{L}^2-l(l+1)) Y_{lm}(\vec{x})&=0
\end{align}

\begin{prop}
The columns of the Majorana spherical harmonics matrices form an 
orthonormal basis of the Hilbert space $L^2_4(S^2)$.
\end{prop}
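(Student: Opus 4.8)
The plan is to establish directly the three conditions in the definition of orthonormal basis for the family of Majorana spinors $c_{lmk}(\theta,\varphi)$ defined as the $k$-th column of the matrix $Y_{lm}(\theta,\varphi)$, with $l\geq 0$, $-l\leq m\leq l$, $k=1,2,3,4$. First observe that each $c_{lmk}$ does lie in $L^2_4(S^2)$: in a Majorana basis $i\gamma^0$ is a fixed real $4\times 4$ matrix, so the entries of $Y_{lm}=Y^c_{lm}\,1+Y^s_{lm}\,i\gamma^0$ are real linear combinations of $Y^c_{lm}$ and $Y^s_{lm}$, each of which is a bounded continuous function on $S^2$ and hence an element of $L^2(S^2)$. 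Throughout I will use that in a Majorana basis $i\gamma^0$ is real, orthogonal, and satisfies $(i\gamma^0)^2=-1$, so that $(i\gamma^0)^\dagger=(i\gamma^0)^T=-i\gamma^0$; in particular $i\gamma^0$ is invertible.

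For orthogonality and normalization I would compute the $4\times 4$ matrix $\int_{S^2} Y_{l'm'}^\dagger\,Y_{lm}\, d(\cos\theta)\,d\varphi$, whose $(k',k)$ entry equals $<c_{l'm'k'},c_{lmk}>$ by definition of the inner product on $L^2_4(S^2)$. Using $Y_{lm}^\dagger=Y^c_{lm}\,1-Y^s_{lm}\,i\gamma^0$ and $(i\gamma^0)^2=-1$, expanding the product under the integral gives
\begin{align}
\int_{S^2}Y_{l'm'}^\dagger Y_{lm}=\big(<Y^c_{l'm'},Y^c_{lm}>+<Y^s_{l'm'},Y^s_{lm}>\big)\,1+\big(<Y^c_{l'm'},Y^s_{lm}>-<Y^s_{l'm'},Y^c_{lm}>\big)\,i\gamma^0.
\end{align}
By Remark \ref{rem:harmonics} the first bracket is $\delta_{l'l}\delta_{m'm}$ and each term in the second bracket vanishes, so the matrix equals $\delta_{l'l}\delta_{m'm}\,1$. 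Reading off the entries yields $<c_{l'm'k'},c_{lmk}>=\delta_{l'l}\delta_{m'm}\delta_{k'k}$, which is conditions 1) and 2) of the definition.

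For completeness I would show that the orthogonal complement of the set $\{c_{lmk}\}$ in $L^2_4(S^2)$ is trivial, which for an orthonormal set in a Hilbert space is equivalent to condition 3) (the Parseval identity). Suppose $\Phi\in L^2_4(S^2)$ is orthogonal to every $c_{lmk}$; this means that the Majorana spinor $\int_{S^2}Y_{lm}^\dagger\Phi=\big(\int_{S^2} Y^c_{lm}\Phi\big)-i\gamma^0\big(\int_{S^2}Y^s_{lm}\Phi\big)$ vanishes for every $l,m$ in the range. Since $Y^c_{l,-m}=(-1)^mY^c_{lm}$ and $Y^s_{l,-m}=-(-1)^mY^s_{lm}$, the index $-m$ (which is also in the range) gives $\big(\int_{S^2}Y^c_{lm}\Phi\big)+i\gamma^0\big(\int_{S^2}Y^s_{lm}\Phi\big)=0$ as well; adding and subtracting the two identities and using that $i\gamma^0$ is invertible yields $\int_{S^2}Y^c_{lm}\Phi=\int_{S^2}Y^s_{lm}\Phi=0$ for all $l,m$ (the case $m=0$ is immediate because $Y^s_{l0}=0$). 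Componentwise this says that each of the four functions $\Phi_a\in L^2(S^2)$ is orthogonal to every $Y^c_{lm}$ and every $Y^s_{lm}$, so the scalar completeness relation of Remark \ref{rem:harmonics} applied with $f=g=\Phi_a$ gives $<\Phi_a,\Phi_a>=0$, hence $\Phi_a=0$ and $\Phi=0$. Since $L^2_4(S^2)$ is (isomorphic to) a finite direct sum of copies of $L^2(S^2)$, it is a Hilbert space, and an orthonormal set with trivial orthogonal complement has dense linear span; the norm-convergent expansion $\Phi=\sum_{l,m,k}<c_{lmk},\Phi>c_{lmk}$ then holds for every $\Phi$ and, taking inner products, gives condition 3).

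The main obstacle is the completeness step. From a single vanishing matrix integral $\int Y_{lm}^\dagger\Phi=0$ one cannot separately conclude $\int Y^c_{lm}\Phi=0$ and $\int Y^s_{lm}\Phi=0$, because the cosine and sine parts are mixed by the matrix $i\gamma^0$; the trick is that the index $-m$ supplies the second, ``conjugate'', relation, and the parity relations of the scalar harmonics together with invertibility of $i\gamma^0$ then decouple the two parts so that the scalar completeness of $\{Y^c_{lm},Y^s_{lm}\}$ in $L^2(S^2)$ can be invoked on each of the four components. A secondary point requiring care is the algebra of $i\gamma^0$ in the Majorana basis ($(i\gamma^0)^\dagger=-i\gamma^0$ and $(i\gamma^0)^2=-1$), which is exactly what makes the off-diagonal terms in the orthonormality computation cancel and what makes the two $\pm m$ relations independent.
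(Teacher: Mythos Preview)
Your proof is correct and follows the same route as the paper: both the orthonormality relation $\int Y_{l'm'}^\dagger Y_{lm}=\delta_{l'l}\delta_{m'm}$ and completeness are deduced from the scalar facts collected in Remark~\ref{rem:harmonics}. The paper simply states the Parseval identity $<\Phi,\Psi>=\sum_{l,m}<\Phi,Y_{lm}\psi_{lm}>$ as following from that remark; your orthogonal-complement argument, with the $m\mapsto -m$ parity trick to decouple the $Y^c$ and $Y^s$ contributions, is precisely the detail one has to supply to justify that step, so you have filled in what the paper leaves implicit rather than taken a different path.
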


\begin{proof}
We apply the remark \ref{rem:harmonics} to directly obtain:
\begin{align}  
\int d(\cos\theta)d\varphi
Y^\dagger_{l'm'}(\theta,\varphi)Y_{lm}(\theta,\varphi)&=\delta_{l'l}\delta_{m'm}
\end{align}
For all $\Phi,\Psi \in L^2_4(S^2)$:
\begin{align}
<\Phi,\Psi>&=\sum_{l\geq 0,\ -l\leq m\leq
  l} <\Phi,Y_{lm}\psi_{lm}>\\
\psi_{lm}&\equiv \int d(\cos\theta)d\varphi
Y^\dagger_{lm}(\theta,\varphi)\Psi(\theta,\varphi)
\end{align}
\end{proof}

\subsection{Majorana total angular momentum space}
The operator $\vec{\sigma}\cdot \vec{L}$ is:
\begin{align}
\vec{\sigma}\cdot
\vec{L}&=-i\gamma^0\epsilon^{ij}_k\sigma^k x_i\partial_j\\
&=-\frac{[\sigma^i,\sigma^j]}{2} x_i\partial_j\\
&=\frac{\gamma^i\gamma^j-\gamma^j\gamma^i}{2}x_i \partial_j,\ i,j=1,2,3
\end{align}
In spherical coordinates:
\begin{align}
i\slashed \partial&=i\gamma^r (\partial_r-\frac{1}{r}\vec{\sigma}\cdot \vec{L})\\
\vec{\sigma}\cdot \vec{L}&=\gamma^\theta\gamma^r
\partial_\theta+\gamma^\varphi\gamma^r \frac{1}{sin
  \theta}\partial_\varphi
\end{align}
$\theta$ and $\varphi$ are the angles of $\vec{x}$ in spherical
coordinates, $r$ is the radius.

It verifies:
\begin{align}
\vec{\sigma}\cdot\vec{L}=(\vec{L}+\frac{1}{2}\vec{\sigma})^2-\vec{L}^2-\frac{3}{4}
\end{align}
The term $\vec{L}+\frac{1}{2}\vec{\sigma}$ is the sum of two angular
momentum operators of integer and one-half spin.

\begin{rem}
\label{rem:clebsh}
Let $\vec{L}$ be an integer spin angular
momentum operator, with orthonormal eigenstates $|l,m>$.
Let $\frac{1}{2}\vec{\sigma}$ be a spin one-half angular
momentum operator, with orthonormal eigenstates $|\frac{1}{2},s>$, where $s=\pm
\frac{1}{2}$.
Then, the orthonormal eigenstates of the operator
$\vec{L}+\frac{1}{2}\vec{\sigma}$, are given by \cite{harmonics}:
\begin{align}
|j,\mu,(j+1/2)>=&-\sqrt{\frac{j-\mu+1}{2j+2}}
|j+1/2,\mu-1/2>|\frac{1}{2},+\frac{1}{2}>\\
&+\sqrt{\frac{j+\mu+1}{2j+2}} |j+1/2,\mu+1/2>|\frac{1}{2},-\frac{1}{2}>
\end{align}

\begin{align}
|j,\mu,(j-1/2)>=&+\sqrt{\frac{j+\mu}{2j}}
|j-1/2,\mu-1/2>|\frac{1}{2},+\frac{1}{2}>\\
&+\sqrt{\frac{j-\mu-1}{2j}} |j-1/2,\mu+1/2>|\frac{1}{2},-\frac{1}{2}>
\end{align}
Where $j=\frac{1}{2},\frac{3}{2},...$ and $-j\leq \mu\leq j$.
They satisfy:
\begin{align}
(\vec{L}_3+\frac{\sigma^3}{2})|j,\mu,(j\pm 1/2)>&=\mu |j,\mu,(j\pm
1/2)>\\
(\vec{L}+\frac{\vec{\sigma}}{2})^2|j,\mu,(j\pm 1/2)>&=j(j+1) |j,\mu,(j\pm 1/2)>\\
\vec{\sigma} \cdot \vec{L} |j,\mu,(j\pm 1/2)>&=-(\pm(j+1/2)+2)|j,\mu,(j\pm 1/2)>\\
\sigma^r\ |j,\mu,(j+ 1/2)>&=-|j,\mu,(j-1/2)>
\end{align}
\end{rem}

\begin{defn}
The Majorana spherical matrices are:
\begin{align}
\Omega_{l \mu}(\theta,\varphi)
&\equiv \Big(-\sqrt{\frac{l-\mu}{2l+1}}
Y_{l,\mu}(\theta,\varphi)+\sqrt{\frac{l+\mu+1}{2l+1}}
Y_{l,\mu+1}(\theta,\varphi)\sigma^1\Big)\frac{1+\sigma^3}{2}\\
&+\Big(\sqrt{\frac{l+\mu}{2l-1}}
Y_{l-1,\mu}(\theta,\varphi)\sigma^1+\sqrt{\frac{l-\mu-1}{2l-1}}
Y_{l-1,\mu+1}(\theta,\varphi)\Big)\frac{1-\sigma^3}{2}
\end{align}
with the integers $l\geq 1$ and $-l\leq \mu \leq l-1$. $Y_{l\mu}$ the Majorana spherical harmonics.
\end{defn}

\begin{prop}
The columns of the Majorana spherical harmonics matrices form a
complete orthonormal basis of the Hilbert space $L^2_4(S^2)$.
\end{prop}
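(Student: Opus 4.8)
\noindent\emph{Proof plan.} It suffices to establish the two kernel identities
\begin{align}
\int d(\cos\theta)\,d\varphi\ \Omega^\dagger_{l'\mu'}(\theta,\varphi)\,\Omega_{l\mu}(\theta,\varphi)&=\delta_{l'l}\,\delta_{\mu'\mu}\ ,\\
\sum_{l\geq 1}\ \sum_{\mu=-l}^{l-1}\Omega_{l\mu}(\theta,\varphi)\,\Omega^\dagger_{l\mu}(\theta',\varphi')&=\delta(\cos\theta-\cos\theta')\,\delta(\varphi-\varphi')\ ,
\end{align}
both right-hand sides multiplying the $4\times 4$ identity; the first says the columns of the $\Omega_{l\mu}$ are orthonormal, the second that they are complete. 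The only external inputs I would use are the orthonormality and completeness of the Majorana spherical harmonics from the preceding proposition, $\int Y^\dagger_{l'm'}Y_{lm}=\delta_{l'l}\delta_{m'm}$ and $\sum_{l\geq 0}\sum_{|m|\leq l}Y_{lm}(\theta,\varphi)Y^\dagger_{lm}(\theta',\varphi')=\delta(\cos\theta-\cos\theta')\delta(\varphi-\varphi')$; the projector algebra of $\tfrac{1\pm\sigma^3}{2}$ (orthogonal, summing to $1$) together with $(\sigma^1)^2=1$, $\sigma^1\tfrac{1\pm\sigma^3}{2}=\tfrac{1\mp\sigma^3}{2}\sigma^1$, and the fact that $i\gamma^0$ commutes with every $\gamma^j\gamma^5$ and hence with $\sigma^1$, $\sigma^3$ and the projectors; and the Clebsch--Gordan normalizations $\tfrac{l-\mu}{2l+1}+\tfrac{l+\mu+1}{2l+1}=1$, $\tfrac{l+\mu}{2l-1}+\tfrac{l-\mu-1}{2l-1}=1$. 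Conceptually the four columns of $\Omega_{l\mu}$ are the total-angular-momentum eigenstates of remark~\ref{rem:clebsh} with $j=l-\tfrac12$, $\mu+\tfrac12$, orbital $l$ and $l-1$, re-encoded as a real matrix, and $\Omega_{l\mu}$ is a Majorana-spinor endomorphism, so the transform stays in $L^2_4(S^2)$.

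For orthonormality I would write $\Omega_{l\mu}=C_{l\mu}\tfrac{1+\sigma^3}{2}+D_{l\mu}\tfrac{1-\sigma^3}{2}$ with $C_{l\mu}=-\sqrt{\tfrac{l-\mu}{2l+1}}\,Y_{l\mu}+\sqrt{\tfrac{l+\mu+1}{2l+1}}\,Y_{l,\mu+1}\sigma^1$ and $D_{l\mu}=\sqrt{\tfrac{l+\mu}{2l-1}}\,Y_{l-1,\mu}\sigma^1+\sqrt{\tfrac{l-\mu-1}{2l-1}}\,Y_{l-1,\mu+1}$, and expand $\int\Omega^\dagger_{l'\mu'}\Omega_{l\mu}$ into its sixteen terms. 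Each is a number (nonzero for a single $(l',\mu')$, by $\int Y^\dagger Y=\delta\delta$) times one of $1,\sigma^1,i\gamma^0,i\gamma^0\sigma^1$ sandwiched between two of the projectors. Since $\tfrac{1\pm\sigma^3}{2}\,M\,\tfrac{1\pm\sigma^3}{2}=0$ for $M\in\{\sigma^1,i\gamma^0\sigma^1\}$ and $\tfrac{1\pm\sigma^3}{2}\,M\,\tfrac{1\mp\sigma^3}{2}=0$ for $M\in\{1,i\gamma^0\}$, every term carrying the ``wrong'' matrix for its pair of projectors vanishes. What survives are the diagonal terms at $(l',\mu')=(l,\mu)$, which add up to $\bigl(\tfrac{l-\mu}{2l+1}+\tfrac{l+\mu+1}{2l+1}\bigr)\tfrac{1+\sigma^3}{2}+\bigl(\tfrac{l+\mu}{2l-1}+\tfrac{l-\mu-1}{2l-1}\bigr)\tfrac{1-\sigma^3}{2}=1$, and the $C$--$D$ cross contribution at $(l',\mu')=(l-1,\mu)$ (and its transpose at $(l',\mu')=(l+1,\mu)$), whose total coefficient is $-\sqrt{\tfrac{(l-1-\mu)(l+\mu)}{(2l-1)^2}}+\sqrt{\tfrac{(l+\mu)(l-\mu-1)}{(2l-1)^2}}=0$, since $l-1-\mu=l-\mu-1$. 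Hence $\int\Omega^\dagger_{l'\mu'}\Omega_{l\mu}=\delta_{l'l}\delta_{\mu'\mu}\,1$, which is orthonormality of all four columns of every $\Omega_{l\mu}$.

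For completeness I would compute the closure kernel directly. Because $\tfrac{1+\sigma^3}{2}\tfrac{1-\sigma^3}{2}=0$, one has $\Omega_{l\mu}(\theta,\varphi)\Omega^\dagger_{l\mu}(\theta',\varphi')=C_{l\mu}\tfrac{1+\sigma^3}{2}C^\dagger_{l\mu}+D_{l\mu}\tfrac{1-\sigma^3}{2}D^\dagger_{l\mu}$ (left factors at $(\theta,\varphi)$, right factors at $(\theta',\varphi')$). Expanding and moving the projectors and the $\sigma^1$ past the $Y_{lm}$'s (which commute with both, being real combinations of $1$ and $i\gamma^0$), one gets ``diagonal'' pieces of the form $\tfrac{1\pm\sigma^3}{2}Y_{lm}(\theta,\varphi)Y^\dagger_{lm}(\theta',\varphi')$ and ``off-diagonal'' pieces carrying one extra factor $\sigma^1$. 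Summing over $l\geq 1$, $-l\leq\mu\leq l-1$ and re-indexing the $D$-part by $l\mapsto l+1$: the off-diagonal pieces from $C_{l\mu}$ and from $D_{l+1,\mu}$ match term by term with coefficients $-\sqrt{\tfrac{(l-\mu)(l+\mu+1)}{(2l+1)^2}}$ and $+\sqrt{\tfrac{(l-\mu)(l+1+\mu)}{(2l+1)^2}}$, which are opposite and cancel, the boundary values $\mu=\pm l$ and $l=0$ created by the shift carrying a vanishing weight; and the diagonal coefficient of $\tfrac{1+\sigma^3}{2}Y_{lm}(\theta,\varphi)Y^\dagger_{lm}(\theta',\varphi')$ equals $\tfrac{l-m}{2l+1}+\tfrac{l+1+m}{2l+1}=1$ for every $l\geq 0$, $|m|\leq l$ (a summand being dropped when its label is out of range, in which case its weight is already $0$ and the other summand alone gives $1$), and likewise for $\tfrac{1-\sigma^3}{2}$. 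Therefore the closure kernel equals $\bigl(\tfrac{1+\sigma^3}{2}+\tfrac{1-\sigma^3}{2}\bigr)\sum_{l\geq 0}\sum_{|m|\leq l}Y_{lm}(\theta,\varphi)Y^\dagger_{lm}(\theta',\varphi')=\delta(\cos\theta-\cos\theta')\delta(\varphi-\varphi')$ by the completeness half of the preceding proposition.

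The step I expect to be the main obstacle is the bookkeeping in the completeness computation: keeping the $\tfrac{1\pm\sigma^3}{2}$ and $\sigma^1$ factors straight while the $(l,\mu)$ sums are re-indexed, and verifying that the finitely many boundary terms ($\mu=\pm l$, $l=0$) genuinely carry vanishing Clebsch--Gordan weights, so that the telescoping is clean rather than leaving a residue. Everything else --- orthonormality and the reduction to the already-proved completeness of the Majorana spherical harmonics --- is routine once the two normalization identities and the projector algebra are in hand.
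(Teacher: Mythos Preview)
Your proposal is correct and follows essentially the same route as the paper: both establish the orthonormality relation $\int\Omega^\dagger_{l'\mu'}\Omega_{l\mu}=\delta_{l'l}\delta_{\mu'\mu}$ and the completeness relation by reducing to the already-proved orthonormality and completeness of the Majorana spherical harmonics $Y_{lm}$ together with the Clebsch--Gordan combinatorics of remark~\ref{rem:clebsh}. The paper's proof is a single sentence (``Using remark~\ref{rem:clebsh}, after some calculations\ldots''), and what you have written is precisely those omitted calculations --- the projector algebra, the pairwise cancellation of the $\sigma^1$ cross terms, and the re-indexing $l\mapsto l+1$ that matches the $D$-block to the $C$-block --- so your level of detail is considerably greater than the paper's, but the underlying argument is the same.
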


\begin{proof}
Using remark \ref{rem:clebsh}, after some calculations, we get:
\begin{align}  
\int d(\cos\theta)d\varphi \Omega^\dagger_{l'\mu'}(\theta,\varphi)\Omega_{l\mu}(\theta,\varphi)&=\delta_{l'l}\delta_{\mu'\mu}\\
\sum_{l\geq 1,\ -l\leq \mu\leq l-1}\int d(\cos\theta)d\varphi\ \Phi^\dagger(\theta,\varphi) \Omega_{l\mu}(\theta,\varphi)\psi_{l\mu}&=\int d(\cos\theta)d\varphi\ \Phi^\dagger(\theta,\varphi) \Psi(\theta,\varphi)
\end{align}
For all $\Phi \in L^2_4(S^2)$.
\end{proof}

Using remark \ref{rem:clebsh}, the Majorana spherical matrices verify:
\begin{align}
(\vec{L}^3+\frac{\sigma^3}{2})\
\Omega_{l \mu}&=(\mu+\frac{1}{2})\Omega_{l \mu}\\
\vec{\sigma} \cdot \vec{L}\
\Omega_{l \mu}&=-\Omega_{l \mu}(l \sigma^3+1)\\
\sigma^r\ \Omega_{l \mu}&=-\Omega_{l \mu}\sigma^1\\
i\gamma^r\Omega_{l \mu}&=(-1)^\mu\Omega_{l,-\mu-1}i\gamma^5\\
\vec{\sigma} \cdot \vec{L}i\gamma^r\
\Omega_{l \mu}&=i\gamma^r\Omega_{l \mu}(l\sigma^3-1)
\end{align}

\subsection{Radial Momentum Space}
\begin{rem}
The spherical Bessel functions of the first kind, $j_l:\mathbb{R}_+\to
\mathbb{R}$ with the
integer $l\geq 0$, verify:
\begin{align}
&(\partial^2_r+\frac{2}{r}\partial_r-\frac{l(l+1)}{r^2})j_l(pr)=-p^2j_l(pr)\\
&\int_0^{+\infty}dr\ r^2
j_l(pr)j_l(p'r)=\frac{\pi\delta(p-p')}{2 p^2}\\
&\int_0^{+\infty}\frac{dp\ 2p^2}{\pi} j_l(pr)j_l(p r')=\frac{\delta(r-r')}{r^2}
\end{align}
Where the Dirac delta $\delta$ is such that for all $f\in L^2(\mathbb{R})$:
\begin{align}
f(0)&=\int dx\ \delta(x)f(x)
\end{align}
\end{rem}

\begin{defn}
The Hankel-Majorana Transform $\psi(p,l,\mu)$ of a Majorana spinor
$\Psi(\vec{x})\in L^2_4(\mathbb{R}^3)$ is the Majorana spinor:
\begin{align}
\psi(p,l,\mu)&\equiv \int dr d(cos\theta)d\varphi r^2
\Lambda^\dagger(p,l,\mu,r,\theta,\varphi)\Psi(r,\theta,\varphi)\\
\Lambda(p,l,\mu,r,\theta,\varphi)&\equiv \Big(p
j_l(pr)+(E_p-m)j_{l-1}(pr)i\gamma^r\Big)
\Omega_{l \mu}(\theta,\varphi)\frac{1+\sigma^3}{2}\\
&+\Big(p j_{l-1}(pr)-(E_p-m) j_{l}(pr)i\gamma^r\Big)\Omega_{l \mu}(\theta,\varphi)\frac{1-\sigma^3}{2}
\end{align}
Where $\Lambda$ are the Hankel-Majorana matrices, $m,p\geq 0$,
$E_p=\sqrt{p^2+m^2}$ and the integers $l\geq 1, -l\leq \mu\leq l$.
\end{defn}

\begin{prop}
\label{prop:Hankel}
Let $\psi(p,l,\mu)$ be the Hankel-Majorana Transform of a Majorana spinor
$\Psi\in L^2_4(\mathbb{R}^3)$. The inverse Hankel-Majorana Transform of
$\psi(p,l,\mu)$ is:
\begin{align}
\Psi'(r,\theta,\varphi)\equiv \sum_{l\geq 1, -l\leq \mu\leq l}\int_0^{+\infty} \frac{dp\ (E_p+m)}{E_p\pi}\Lambda(p,l,\mu,r,\theta,\varphi)\psi(p,l,\mu)
\end{align}
It verifies, for all $\Phi\in L^2_4(\mathbb{R}^3)$:
\begin{align}
\int d(cos\theta)d\varphi\ dr\ r^2\
\Phi^\dagger(r,\theta,\varphi)\Psi'(r,\theta,\varphi)= 
\int d(cos\theta)d\varphi\ dr\ r^2\Phi^\dagger(r,\theta,\varphi)\Psi(r,\theta,\varphi)
\end{align}
\end{prop}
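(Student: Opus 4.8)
\emph{Proof proposal.} The plan is to imitate, partial wave by partial wave, the argument used in the previous section for the inverse Fourier-Majorana transform. Substituting the definition of $\psi(p,l,\mu)$ into the definition of $\Psi'$ and interchanging the $p$-integral and the $(l,\mu)$-sum with the primed integration turns $\Psi'$ into an integral operator acting on $\Psi$:
\begin{align}
\Psi'(r,\theta,\varphi)&=\int dr'\,d(\cos\theta')\,d\varphi'\;r'^2\;K\;\Psi(r',\theta',\varphi'),\\
K&\equiv\sum_{l\ge 1,\,\mu}\int_0^{+\infty}\frac{dp\,(E_p+m)}{E_p\pi}\;\Lambda(p,l,\mu,r,\theta,\varphi)\,\Lambda^\dagger(p,l,\mu,r',\theta',\varphi').
\end{align}
It then suffices to prove that, as a distribution, $K$ equals $\dfrac{\delta(r-r')}{r^2}\,\delta(\cos\theta-\cos\theta')\,\delta(\varphi-\varphi')$ times the $4\times4$ identity matrix, since the primed integration then collapses to $\Psi'=\Psi$, so in particular $<\Phi,\Psi'>=<\Phi,\Psi>$ for every $\Phi\in L^2_4(\mathbb R^3)$.

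To evaluate $K$ I would use the following facts, all immediate from the Majorana algebra in a Majorana basis (a prime denoting, for $\Omega$ and $F$, evaluation at the primed coordinates): $Q_\pm\equiv\tfrac12(1\pm\sigma^3)$ are Hermitian projectors; $i\gamma^r$ is real symmetric with $(i\gamma^r)^2=1$; $i\gamma^5$ is real skew-symmetric with $i\gamma^5 Q_\pm=Q_\mp i\gamma^5$ and $i\gamma^5 Q_\pm i\gamma^5=-Q_\mp$; the two radial factors $F_\pm$ occurring in $\Lambda$ (of the form $p\,j_l(pr)+(E_p-m)j_{l-1}(pr)\,i\gamma^r$, with $l\leftrightarrow l-1$) are Hermitian; and, besides the given $i\gamma^r\Omega_{l\mu}=(-1)^\mu\Omega_{l,-\mu-1}i\gamma^5$, its adjoint $\Omega_{l\mu}^\dagger i\gamma^r=-(-1)^\mu i\gamma^5\Omega_{l,-\mu-1}^\dagger$. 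Writing $\Lambda=F_+\Omega_{l\mu}Q_++F_-\Omega_{l\mu}Q_-$, the $Q_+Q_-$ cross terms in $\Lambda\Lambda^\dagger$ drop, so $\Lambda\Lambda^\dagger=F_+\Omega_{l\mu}Q_+\Omega_{l\mu}^{\prime\dagger}F_+^{\prime}+F_-\Omega_{l\mu}Q_-\Omega_{l\mu}^{\prime\dagger}F_-^{\prime}$; each sector then expands into four terms according to whether a factor $i\gamma^r$ (at $r$) and/or $i\gamma^{r'}$ (at $r'$) is present.

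For the two ``even'' terms of each sector (scalar$\times$scalar and $i\gamma^r\times i\gamma^{r'}$) I would use $i\gamma^r\Omega_{l\mu}Q_\pm\Omega_{l\mu}^{\prime\dagger}i\gamma^{r'}=\Omega_{l,-\mu-1}Q_\mp\Omega_{l,-\mu-1}^{\prime\dagger}$ (which follows from the two identities above and $i\gamma^5Q_\pm i\gamma^5=-Q_\mp$) to turn the $i\gamma^r\times i\gamma^{r'}$ term of the $Q_+$-sector into a scalar$\times$scalar term of $Q_-$-type and vice versa, then relabel $\mu\mapsto-\mu-1$, an involution of the index range. The radial coefficients group as $p^2+(E_p-m)^2=2E_p(E_p-m)$, and the transform weight produces $\tfrac{(E_p+m)}{E_p\pi}\,2E_p(E_p-m)=\tfrac{2p^2}{\pi}$ --- exactly the weight of the spherical Bessel completeness relation $\int_0^{+\infty}\tfrac{2p^2\,dp}{\pi}\,j_l(pr)\,j_l(pr')=\tfrac{\delta(r-r')}{r^2}$, which it consumes at both orders $l$ and $l-1$. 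What survives, $\tfrac{\delta(r-r')}{r^2}\sum_{l,\mu}\Omega_{l\mu}(Q_++Q_-)\Omega_{l\mu}^{\prime\dagger}=\tfrac{\delta(r-r')}{r^2}\sum_{l,\mu}\Omega_{l\mu}\Omega_{l\mu}^{\prime\dagger}$, is by the orthonormal completeness of the columns of the Majorana spherical matrices in $L^2_4(S^2)$ (the preceding proposition) equal to $\tfrac{\delta(r-r')}{r^2}\,\delta(\cos\theta-\cos\theta')\,\delta(\varphi-\varphi')$ times the identity, the claimed kernel. For the four ``odd'' terms (those linear in $i\gamma^r$ or $i\gamma^{r'}$) I would bring each, with the same identities, to a term of the form $(-1)^\mu p\,(E_p-m)\,j_{\bullet}(pr)\,j_{\bullet}(pr')\,\Omega_{l\mu}\,i\gamma^5 Q_\pm\,\Omega_{l,-\mu-1}^{\prime\dagger}$ (up to an overall sign) and check that, after the relabelling $\mu\mapsto-\mu-1$, the $Q_+$-sector odd terms cancel the $Q_-$-sector odd terms in pairs, contributing nothing.

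I do not expect a genuine conceptual obstacle here: once these algebraic identities are set up the computation is essentially bookkeeping. The delicate points are rather (i) justifying the interchange of the $p$-integral and $l$-sum with the primed integration and treating $K$, the Bessel completeness relation and the angular completeness relation as distributional identities tested against $\Phi\in L^2_4(\mathbb R^3)$ --- as in section~5 this is ultimately controlled by the fact that $\Lambda(p,l,\mu,\cdot)$ solves the free radial Dirac equation with eigenvalue $E_p$ --- and (ii) not losing track of the signs $(-1)^\mu$, of which projector $Q_\pm$ a given $i\gamma^5$ lands on, and of which Bessel order ($l$ or $l-1$) each of the eight sub-terms carries. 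A useful preliminary consistency check is the companion orthogonality relation
\begin{align}
\int dr\,d(\cos\theta)\,d\varphi\;r^2\;\Lambda^\dagger(p',l',\mu',\cdot)\,\Lambda(p,l,\mu,\cdot)=\frac{E_p\pi}{E_p+m}\,\delta(p-p')\,\delta_{l'l}\,\delta_{\mu'\mu}\cdot 1,
\end{align}
which follows from the Bessel orthogonality $\int_0^{+\infty}dr\,r^2 j_l(pr)j_l(p'r)=\tfrac{\pi\delta(p-p')}{2p^2}$, the $\Omega_{l\mu}$ orthonormality, and the vanishing of $\int d(\cos\theta)\,d\varphi\,\Omega_{l\mu}^\dagger\,i\gamma^r\,\Omega_{l\mu}$, and which pins down the normalization $\tfrac{(E_p+m)}{E_p\pi}$ of the transform weight.
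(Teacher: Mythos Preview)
Your proposal is correct and follows essentially the same route as the paper. Both arguments rest on the identity $i\gamma^r\Omega_{l\mu}=(-1)^\mu\Omega_{l,-\mu-1}i\gamma^5$ (and its adjoint), the relabelling $\mu\mapsto -\mu-1$, the radial identity $p^2+(E_p-m)^2=2E_p(E_p-m)$ that converts the transform weight into the Bessel completeness weight $2p^2/\pi$, and finally the completeness of the columns of $\Omega_{l\mu}$ in $L^2_4(S^2)$. The only organizational difference is that the paper first proves the orthogonality relation (using the skew-Hermiticity of $i\gamma^0(i\vec{\slashed\partial}-m)$ to kill the terms odd in $i\gamma^r$, rather than your direct observation that $\int\Omega_{l\mu}^\dagger i\gamma^r\Omega_{l\mu}\propto\delta_{\mu,-\mu-1}=0$) and then, for completeness, projects $\Psi'$ onto a fixed $\Omega_{l\mu}$ before summing and integrating, whereas you assemble the full kernel $K=\sum_{l,\mu}\int dp\,\tfrac{E_p+m}{E_p\pi}\Lambda\Lambda^\dagger$ and reduce it to the delta in one pass; the underlying algebra is the same.
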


\begin{proof}
The following equation is verified:
\begin{align}
i\gamma^0(i\vec{\slashed \partial}-m)\Lambda(p,l,\mu)=E_p\Lambda(p,l,\mu)i\gamma^0
\end{align}
Since the operator $i\gamma^0(i\vec{\slashed \partial}-m)$ is skew-Hermitic the equation above implies that:
\begin{align}
&i\gamma^0 E_{p'}I=I i\gamma^0 E_{p}\\
&I\equiv \int d(cos\theta)d\varphi\ dr\ r^2\ 
\Lambda^\dagger(p',l',\mu',r,\theta,\varphi)\Lambda(p,l,\mu,r,\theta,\varphi)
\end{align}
As $E_p+E_{p'}>0$, in the integral $I$ the terms odd in $i\gamma^r$
are null. From the orthogonality of the spherical matrices, we get
that the $\Lambda$ matrices are orthogonal:
\begin{align}
I&=\delta_{l'l}\delta_{\mu'\mu}\int d(cos\theta)d\varphi\ dr\ r^2\\
&\Big( p' j_l(p'r)p
j_l(pr)+(E_{p'}-m)j_{l-1}(p'r)(E_p-m)j_{l-1}(pr)\frac{1+\sigma^3}{2}\\
&+p' j_{l-1}(p'r)p
j_{l-1}(pr)+(E_{p'}-m)j_{l}(p'r)(E_p-m)j_{l}(pr)\frac{1-\sigma^3}{2}\Big)\\
&=\delta_{l'l}\delta_{\mu'\mu}\frac{\pi\delta(p-p')}{2
  p^2}(E_p-m)2E_p=
\delta_{l'l}\delta_{\mu'\mu}\frac{\pi E_p\delta(p-p')}{E_p+m}
\end{align}
To show completeness, using $i\gamma^r\Omega_{l
  \mu}=(-1)^\mu\Omega_{l,-\mu-1}i\gamma^5$, we first show that:
\begin{align}
&\sum_{l'\mu'} \int d(cos\theta)d\varphi\
\psi^\dagger(p,l',\mu')\Lambda^\dagger(p,l',\mu',r,\theta,\varphi)\Omega_{l\mu}(\theta,\varphi)=\\
&=\psi^\dagger(p,l,\mu)p
(j_l(pr)\frac{1+\sigma^3}{2}+j_{l-1}(pr)\frac{1-\sigma^3}{2})\\
&+\psi^\dagger(p,l,-\mu-1)(-1)^\mu(E_p-m)
(-j_l(pr)\frac{1-\sigma^3}{2}+j_{l-1}(pr)\frac{1+\sigma^3}{2})i\gamma^5
\end{align}
\begin{align}
&=\int d(cos\theta')d\varphi' dr' (r')^2\
\Psi^\dagger(r',\theta',\varphi')\Big(\\
& pj_l(pr')\Big(p j_l(pr)+(E_p-m)j_{l-1}(pr)i\gamma^r\Big)\Omega_{l
  \mu}(\theta',\varphi')\frac{1+\sigma^3}{2}\\
&+pj_{l-1}(pr')\Big(p j_{l-1}(pr)-(E_p-m)
j_{l}(pr)i\gamma^r\Big)\Omega_{l \mu}(\theta',\varphi')\frac{1-\sigma^3}{2}\\
&(-1)^\mu(E_p-m)j_{l-1}(pr')\Big(p j_l(pr)+(E_p-m)j_{l-1}(pr)i\gamma^r\Big)\Omega_{l,
  -\mu-1}(\theta',\varphi')\frac{1+\sigma^3}{2}i\gamma^5\\
&-(-1)^\mu(E_p-m)j_{l}(pr')\Big(p j_{l-1}(pr)-(E_p-m)
j_{l}(pr)i\gamma^r\Big)\Omega_{l, -\mu-1}(\theta',\varphi')\frac{1-\sigma^3}{2}i\gamma^5
\end{align}

\begin{align}
&=\int d(cos\theta')d\varphi' dr' (r')^2\
\Psi^\dagger(r',\theta',\varphi')\Big(\\
& pj_l(pr')\Big(p j_l(pr)+(E_p-m)j_{l-1}(pr)i\gamma^r\Big)\Omega_{l
  \mu}(\theta',\varphi')\frac{1+\sigma^3}{2}\\
&+pj_{l-1}(pr')\Big(p j_{l-1}(pr)-(E_p-m)
j_{l}(pr)i\gamma^r\Big)\Omega_{l \mu}(\theta',\varphi')\frac{1-\sigma^3}{2}\\
&(E_p-m)j_{l-1}(pr')\Big(p j_l(pr)+(E_p-m)j_{l-1}(pr)i\gamma^r\Big)\Omega_{l,\mu}(\theta',\varphi')\frac{1-\sigma^3}{2}\\
&-(E_p-m)j_{l}(pr')\Big(p j_{l-1}(pr)-(E_p-m)
j_{l}(pr)i\gamma^r\Big)\Omega_{l, \mu}(\theta',\varphi')\frac{1+\sigma^3}{2}
\end{align}
\begin{align}
&=\int d(cos\theta')d\varphi' dr' (r')^2\
\Psi^\dagger(r',\theta',\varphi')\Omega_{l
  \mu}\frac{2p^2E_p}{E_p+m}\Big(\\
&j_l(pr')j_l(pr)\frac{1+\sigma^3}{2}+j_{l-1}(pr')j_{l-1}(pr)\frac{1-\sigma^3}{2}\Big)
\end{align}
If we integrate on $p$ and use the completeness of the spherical
Bessel functions, we get:
\begin{align}
\int d(cos\theta)d\varphi
\Psi^{'\dagger}(r,\theta,\varphi)\Omega_{l\mu}(\theta,\varphi)=
\int d(cos\theta)d\varphi \Psi^{\dagger}(r,\theta,\varphi)\Omega_{l\mu}(\theta,\varphi)
\end{align}
Since the columns of the spherical matrices $\Omega_{l\mu}$ are a complete basis, we
have shown the completeness of the Hankel-Majorana transform:
\begin{align}
\int d(cos\theta)d\varphi dr\ r^2\ 
\Psi^{'\dagger}(r,\theta,\varphi)\Phi(r,\theta,\varphi)=
\int d(cos\theta)d\varphi dr\ r^2 \Psi^{\dagger}(r,\theta,\varphi)\Phi(r,\theta,\varphi)
\end{align}
For all $\Phi \in L^2_4(\mathbb{R}^3)$.
\end{proof}

\section{Relation between the Dirac and Majorana Momentums}
The Dirac equation for the free fermion can be written as:
\begin{equation}
i\gamma^0(i\slashed \partial-m)\Psi(x)=0
\end{equation}
Where $\Psi$ is a spinor. Note that the equation contains only
Majorana matrices. The Fourier or Hankel Transforms of the equation are:
\begin{align}
(\partial_0+i\gamma^0E_p)\Psi(x^0,p)=0
\end{align}
The solutions can be written as:
\begin{align}
\Psi(x)=\int \frac{d^3\vec{p}}{(2\pi)^3} \frac{\slashed p
  \gamma^0+m}{\sqrt{E_p+m}\sqrt{2E_p}}e^{-i\gamma^0 p \cdot x}\psi(\vec{p})
\end{align}
Where $p^0=E_p$ and $\psi(\vec{p})$ is an arbitrary spinor. If $\psi(\vec{p})$ is
a Majorana spinor, then the solution $\Psi(x)$ is also a Majorana
spinor.

The solutions can also be written as:
\begin{align}
\Psi(x^0, r,\theta,\varphi)= \sum_{l\geq 1, -l\leq \mu\leq
  l-1}\int_0^{+\infty} \frac{dp(E_p+m)}{E_p\pi}\Lambda(p,l,\mu,r,\theta,\varphi)e^{-i\gamma^0 E_p \cdot
  x^0} \psi(p,l,\mu)
\end{align}
Where $\psi(p,l,\mu)$ is an arbitrary spinor and $\Lambda$ are the
Hankel-Majorana matrices.

The set of quantum numbers $(\vec{p})$ and $(p,l,\mu)$ are related
with the linear and spherical momentums of Dirac spinors.
The Majorana spin is related with the Dirac spin. 
For instance, to obtain the Dirac spinor solution for the free electron, we just set
$\psi_e(\vec{p})=\frac{1+\gamma^0}{2}\psi_e(\vec{p})$ and we get:
\begin{align}
\Psi_e(x)=\int \frac{d^3\vec{p}}{(2\pi)^3} \frac{\slashed p+m}{\sqrt{E_p+m}\sqrt{2E_p}}e^{-i p \cdot x}\frac{1+\gamma^0}{2}\psi_e(\vec{p})
\end{align}
The matrix $\gamma^0$ was replaced by the identity matrix $1$, due to the
presence of the projector. The same thing happens with the spherical
solution and with the spin.

To obtain the Dirac spinor solution for the free positron, we just set
$\psi_p(\vec{p})=\frac{1-\gamma^0}{2}\psi_p(\vec{p})$ and the matrix
$\gamma^0$ gets replaced by $-1$.

\section{Energy-momentum space}
Now we can extend our transforms to define an energy-momentum
space. We will use the notation:
\begin{align}
[\slashed p]&=\gamma^0 E_p-\vec{\gamma}\cdot\vec{p}
\end{align}
Note that $\slashed p$ is not necessarily on-shell, while $[\slashed
p]$ is on-shell, that is $([\slashed p])^2=m^2$. 
Both $E_p$ and $[\slashed p]$  do not depend on $p^0$.

\begin{defn}
Given a Majorana spinor $\Psi \in L^2_4(\mathbb{R}^4)$, the Fourier-Majorana transform
in space-time is defined as:
\begin{align}
\psi(p)&\equiv \int d^4x O(p,x)\Psi(x)
\end{align}
Where $O(p,x)$ is:
\begin{align}
O(p,x)&\equiv e^{i\gamma^0 p^0 x^0}O(\vec{p},\vec{x})=e^{i\gamma^0 p \cdot
  x}\frac{[\slashed p] \gamma^0+m}{\sqrt{E_p+m}\sqrt{2E_p}}
\end{align}
Note that $E_p$ and $[\slashed p]=\gamma^0 E_p-\vec{\gamma}\cdot\vec{p}$ don't depend on $p^0$, but
$p\cdot x=p^0x^0-\vec{p}\cdot\vec{x}$ does.
\end{defn}

\begin{prop}
The inverse Fourier-Majorana transform in space-time is given by:
\begin{align}
\Psi(x)&=\int \frac{d^4p}{(2\pi)^4}O^\dagger(p,x)\psi(p)
\end{align}
Where $O^{\dagger}$ is the hermitian conjugate of $O$, given by:
\begin{align}
O^\dagger(p,x)=O^\dagger(\vec{p},\vec{x})e^{-i\gamma^0 p^0 \cdot
  x^0}=\frac{[\slashed p]
  \gamma^0+m}{\sqrt{E_p+m}\sqrt{2E_p}}e^{-i\gamma^0 p \cdot x}
\end{align}
\end{prop}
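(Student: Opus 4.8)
The plan is to reduce the space-time inversion to the purely spatial Fourier-Majorana inversion already proved, supplemented by an elementary one-dimensional Fourier inversion in the time coordinate in which $i\gamma^0$ takes the place of the imaginary unit. Concretely, it suffices to establish the two kernel identities
\begin{align}
\int \frac{d^4p}{(2\pi)^4}\, O^\dagger(p,y)\,O(p,x) &= \delta^4(y-x),\\
\int d^4x\, O(q,x)\,O^\dagger(p,x) &= (2\pi)^4\,\delta^4(q-p),
\end{align}
after which composing with $\psi(p)=\int d^4x\,O(p,x)\Psi(x)$ and integrating the delta gives $\int \frac{d^4p}{(2\pi)^4}O^\dagger(p,x)\psi(p)=\Psi(x)$, exactly as in the spatial case.

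For the first identity I would use the factorizations $O(p,x)=e^{i\gamma^0 p^0 x^0}O(\vec p,\vec x)$ and $O^\dagger(p,y)=O^\dagger(\vec p,\vec y)e^{-i\gamma^0 p^0 y^0}$, where $O(\vec p,\vec x)$ and $O^\dagger(\vec p,\vec y)$ are independent of $p^0,x^0,y^0$. Since the two time exponentials are then adjacent and are both functions of $i\gamma^0$, they combine into the single scalar-matrix $e^{i\gamma^0 p^0(x^0-y^0)}$, which may be pulled out of the $p^0$-integral:
\begin{align}
\int \frac{dp^0}{2\pi}\, O^\dagger(\vec p,\vec y)\,e^{i\gamma^0 p^0(x^0-y^0)}\,O(\vec p,\vec x)
= O^\dagger(\vec p,\vec y)\Big(\int \frac{dp^0}{2\pi}\, e^{i\gamma^0 p^0(x^0-y^0)}\Big) O(\vec p,\vec x).
\end{align}
Writing $e^{i\gamma^0 t}=\cos(t)+i\gamma^0\sin(t)$, Remark \ref{rem:fourier} gives $\int\frac{dp^0}{2\pi}\cos(p^0 t)=\delta(t)$ while the sine term is odd in $p^0$ and drops, so $\int\frac{dp^0}{2\pi}e^{i\gamma^0 p^0(x^0-y^0)}=\delta(x^0-y^0)$ times the identity. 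The remaining $d^3\vec p$ integral of $O^\dagger(\vec p,\vec y)O(\vec p,\vec x)$ is precisely the one computed in the proof of the spatial inverse Fourier-Majorana transform and equals $\delta^3(\vec y-\vec x)$; multiplying by $\delta(x^0-y^0)$ yields $\delta^4(y-x)$. For the second identity the order of integration is the point to watch: I would do the $d^3\vec x$ integral first, which by the spatial orthogonality relation turns $O(\vec q,\vec x)O^\dagger(\vec p,\vec x)$ into $(2\pi)^3\delta^3(\vec q-\vec p)$; this is a scalar and commutes through the time exponentials $e^{i\gamma^0 q^0 x^0}$ and $e^{-i\gamma^0 p^0 x^0}$, which then combine to $e^{i\gamma^0(q^0-p^0)x^0}$, so the subsequent $dx^0$ integral produces $(2\pi)\delta(q^0-p^0)$ and hence $(2\pi)^4\delta^4(q-p)$.

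The manipulations involving Dirac deltas and the interchange of integrations are of exactly the same nature as those already used (and justified, via skew-hermiticity of $i\gamma^0(i\vec{\slashed \partial}-m)$ and boundedness of the kernel entries in the Majorana bases) in the proof of the spatial inverse transform, so I would simply invoke that argument; no new analytic input is needed. The only genuinely new observation is that the time dependence enters solely through the $i\gamma^0$-valued phase $e^{i\gamma^0 p^0 x^0}$, so that the time slot decouples from the spinor structure and behaves like an ordinary scalar Fourier transform with $i\to i\gamma^0$. The main thing to be careful about is therefore the placement of these phase factors: in the inverse$\circ$forward composition they sit adjacently and combine before the $p^0$-integration, whereas in the forward$\circ$inverse composition one must perform the spatial integral first so that the resulting scalar delta can slide past them.
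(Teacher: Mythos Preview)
Your proposal is correct and follows essentially the same route as the paper: factor $O(p,x)=e^{i\gamma^0 p^0 x^0}O(\vec p,\vec x)$, do the $p^0$-integral of the adjacent time phases to get $\delta(y^0-x^0)$ and then invoke the spatial completeness $\int\frac{d^3\vec p}{(2\pi)^3}O^\dagger(\vec p,\vec y)O(\vec p,\vec x)=\delta^3(\vec y-\vec x)$; for the second identity, do the $d^3\vec x$ integral first to obtain $(2\pi)^3\delta^3(\vec q-\vec p)$, slide it through the time phases, and then integrate $dx^0$. One small wording slip: in your sentence ``which may be pulled out of the $p^0$-integral'' the referent should be the spatial factors $O^\dagger(\vec p,\vec y)$ and $O(\vec p,\vec x)$, not the combined phase $e^{i\gamma^0 p^0(x^0-y^0)}$ (which of course depends on $p^0$); your displayed equation already has this right.
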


\begin{proof}
\begin{align}
\int
\frac{d^4p}{(2\pi)^4}O^{\dagger}(p,y)O(p,x)&=\int
\frac{d^3\vec{p}}{(2\pi)^3}O^{\dagger}(\vec{p},\vec{y})\Big(\int
\frac{dp^0}{2\pi} e^{-i\gamma^0p^0(y^0-x^0)}\Big)O(\vec{p},\vec{x})\\
&=\delta(y^0-x^0)\int
\frac{d^3\vec{p}}{(2\pi)^3}O^{\dagger}(\vec{p},\vec{y})O(\vec{p},\vec{x})\\
&=\delta^4(y-x)
\end{align}
\begin{align}
\int d^4x O(q,x)O^{\dagger}(p,x)&=\int dx^0 e^{i\gamma^0 q^0 x^0}\Big(\int
d^3\vec{x}O(\vec{q},\vec{x})O^{\dagger}(\vec{p},\vec{x})\Big)e^{-i\gamma^0 p^0
  x^0}\\
&=(2\pi)^3\delta^3(\vec{q}-\vec{p})\int dx^0 e^{i\gamma^0
  (q^0-p^0)x^0}\\
&=(2\pi)^4\delta^4(q-p)
\end{align}
\end{proof}

\begin{defn}
The Hankel-Majorana transform in space-time of a Majorana spinor $\Psi
\in L^2_4(\mathbb{R}^4)$ is:
\begin{align}
\psi'(p^0,p,l,\mu)&\equiv \int dx^0 e^{i\gamma^0 p^0 x^0} \psi(x^0,p,l,\mu)
\end{align}
Where $\psi(x^0,p,l,\mu)$ is the Hankel-Majorana transform in space of $\Psi$.
\end{defn}

\begin{prop}
Let $\psi(p^0,p,l,\mu)$ be the Hankel-Majorana Transform in space-time
of a Majorana spinor $\Psi\in L^2_4(\mathbb{R}^4)$. The inverse Hankel-Majorana Transform of
$\psi(p^0,p,l,\mu)$ is:
\begin{align}
\Psi'(x^0, r,\theta,\varphi)\equiv \sum_{l\geq 1, -l\leq \mu\leq
  l}\int_0^{+\infty} \frac{dp(E_p+m)}{E_p\pi}\int_{-\infty}^{+\infty}\frac{dp^0}{2\pi}\Lambda(p,l,\mu,r,\theta,\varphi)e^{-i\gamma^0 p^0 \cdot
  x^0} \psi(p^0,p,l,\mu)
\end{align}
It verifies, for all $\Phi\in L^2_4(\mathbb{R}^4)$:
\begin{align}
&\int dx^0 d(cos\theta)d\varphi\ dr\ r^2\
\Phi^\dagger(x^0,r,\theta,\varphi)\Psi'(x^0,r,\theta,\varphi)=\\ 
&=\int dx^0 d(cos\theta)d\varphi\ dr\ r^2\Phi^\dagger(x^0,r,\theta,\varphi)\Psi(x^0,r,\theta,\varphi)
\end{align}
\end{prop}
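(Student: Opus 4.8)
The plan is to reduce this proposition to the space Hankel-Majorana transform of Proposition~\ref{prop:Hankel} combined with the elementary Fourier integral in the time variable, in exactly the way the space-time Fourier-Majorana transform was reduced to its space counterpart earlier in the paper. Write $\psi(y^0,p,l,\mu)$ for the space Hankel-Majorana transform of $\Psi(y^0,\cdot)$, so that by definition $\psi(p^0,p,l,\mu)=\int dy^0\,e^{i\gamma^0 p^0 y^0}\psi(y^0,p,l,\mu)$. Substituting this into the claimed inverse formula for $\Psi'$ and, as in the earlier propositions, pairing first against an arbitrary test spinor $\Phi\in L^2_4(\mathbb{R}^4)$ so that every manipulation takes place between genuine $L^2$ functions, I would interchange the $p^0$-, $y^0$- and spatial integrations (Fubini) and carry out the $p^0$-integral first. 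Since the Hankel-Majorana matrices $\Lambda(p,l,\mu,r,\theta,\varphi)$ do not depend on $p^0$, they factor out of that integral, leaving
\[
\int_{-\infty}^{+\infty}\frac{dp^0}{2\pi}\,e^{-i\gamma^0 p^0 x^0}e^{i\gamma^0 p^0 y^0}=\int_{-\infty}^{+\infty}\frac{dp^0}{2\pi}\,e^{-i\gamma^0 p^0(x^0-y^0)} .
\]
Because $(i\gamma^0)^2=-1$ one has $e^{-i\gamma^0 p^0 t}=\cos(p^0 t)-i\gamma^0\sin(p^0 t)$; the sine term is odd in $p^0$ and integrates to zero, while the cosine term gives the Dirac delta $\delta(x^0-y^0)$ of Remark~\ref{rem:fourier}.

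The $y^0$-integral then collapses and what remains is
\[
\Psi'(x^0,r,\theta,\varphi)=\sum_{l,\mu}\int_0^{+\infty}\frac{dp\,(E_p+m)}{E_p\pi}\,\Lambda(p,l,\mu,r,\theta,\varphi)\,\psi(x^0,p,l,\mu) ,
\]
which is precisely the inverse space Hankel-Majorana transform of $\Psi(x^0,\cdot)$, for each fixed $x^0$. Applying Proposition~\ref{prop:Hankel} to the slice $\Psi(x^0,\cdot)$, which lies in $L^2_4(\mathbb{R}^3)$ for almost every $x^0$ since $\Psi\in L^2_4(\mathbb{R}^4)$, and likewise using $\Phi(x^0,\cdot)\in L^2_4(\mathbb{R}^3)$, gives
\[
\int d(\cos\theta)\,d\varphi\,dr\,r^2\,\Phi^\dagger(x^0,\cdot)\Psi'(x^0,\cdot)=\int d(\cos\theta)\,d\varphi\,dr\,r^2\,\Phi^\dagger(x^0,\cdot)\Psi(x^0,\cdot)
\]
for each such $x^0$; integrating both sides over $x^0$ (Fubini again) yields the asserted identity.

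The routine parts are the bookkeeping of the spherical-matrix and radial-Bessel structure, all of which is already packaged inside Proposition~\ref{prop:Hankel}, so this proof does not need to reopen it. The only point that needs care is that the formal interchanges of integration and the use of $\delta(x^0-y^0)$ are meaningful only within the Hilbert-space framework of Remark~\ref{rem:fourier}; this is why the argument must proceed by testing against $\Phi$ before any integral is evaluated, exactly as in the proof of the space-time Fourier-Majorana transform. A minor subtlety worth noting is that $e^{-i\gamma^0 p^0 x^0}$ does not commute with $\Lambda$, but this never causes trouble: $\Lambda$ is $p^0$-independent, so it can be pulled outside the $p^0$-integral, after which the surviving factor $\delta(x^0-y^0)$ times the identity is scalar and commutes with everything.
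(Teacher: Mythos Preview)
Your proof is correct and follows essentially the same strategy as the paper: separate the time variable via the $p^0$-Fourier integral, collapse it to a $\delta(x^0-y^0)$ using $(i\gamma^0)^2=-1$ and Remark~\ref{rem:fourier}, and then invoke the spatial result of Proposition~\ref{prop:Hankel} at each fixed $x^0$. The paper additionally records the orthogonality relation for the matrices $\Lambda\,e^{-i\gamma^0 p^0 x^0}$ and runs the completeness argument through that orthogonality rather than quoting Proposition~\ref{prop:Hankel} directly, but this is an organizational difference, not a different idea.
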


\begin{proof}
The matrices $\Lambda(p,l,\mu,r,\theta,\varphi)e^{-i\gamma^0 p^0 \cdot
  x^0}$ are orthogonal:
\begin{align}
&\int dx^0 d(cos\theta)d\varphi\ dr\ r^2\ 
e^{i\gamma^0 p'^{0} \cdot
  x^0}\Lambda^\dagger(p',l',\mu',r,\theta,\varphi)\Lambda(p,l,\mu,r,\theta,\varphi)e^{-i\gamma^0 p^0 \cdot
  x^0}=\\
&=\delta_{l'l}\delta_{\mu'\mu}\frac{\pi E_p\delta(p-p')}{E_p+m}\int dx^0e^{i\gamma^0 p'^{0} \cdot
  x^0}e^{-i\gamma^0 p^0 \cdot
  x^0}=\delta_{l'l}\delta_{\mu'\mu}\frac{\pi E_p\delta(p-p')}{E_p+m}2\pi\delta(p'^{0}-p^0)
\end{align}

To show completeness, we first show that:
\begin{align}
&\sum_{l'\mu'}\int_0^{+\infty}\frac{dp'\ (E_p'+m)}{E_p'\pi}\int d(cos\theta)d\varphi dr\ r^2\\
&\psi^\dagger(p^0,p',l',\mu')e^{i\gamma^0 p^{0} \cdot
  x^0}\Lambda^\dagger(p',l',\mu',r,\theta,\varphi)\Lambda(p,l,\mu,r,\theta,\varphi)=\\
&=\psi^\dagger(p^0,p,l,\mu)e^{i\gamma^0 p^{0} \cdot
  x^0}\\
&=\int dx'^0 d(cos\theta)d\varphi dr\ r^2 \Psi^\dagger(x'^0,r,\theta,\varphi)\Lambda(p,l,\mu,r,\theta,\varphi) e^{-i\gamma^0 p^0 x'^0}e^{i\gamma^0 p^{0} \cdot
  x^0}
\end{align}
If we integrate on $p^0$, we get:
\begin{align}
\int d(cos\theta)d\varphi dr\ r^2 
\Psi^{'\dagger}(x^0,r,\theta,\varphi)\Lambda(p,l,\mu,r,\theta,\varphi)=
\int d(cos\theta)d\varphi dr\ r^2 \Psi^{\dagger}(r,\theta,\varphi)\Lambda(p,l,\mu,r,\theta,\varphi)
\end{align}
Since the columns of the Hankel matrices
$\Lambda(p,l,\mu,r,\theta,\varphi)$ are a complete basis, we
have shown the completeness of the Hankel-Majorana transform in space-time:
\begin{align}
&\int dx^0 d(cos\theta)d\varphi dr\ r^2\ 
\Psi^{'\dagger}(x^0,r,\theta,\varphi)\Phi(x^0,r,\theta,\varphi)=\\
&=\int dx^0 d(cos\theta)d\varphi dr\ r^2 \Psi^{\dagger}(x^0,r,\theta,\varphi)\Phi(x^0,r,\theta,\varphi)
\end{align}
For all $\Phi \in L^2_4(\mathbb{R}^4)$.
\end{proof}

\section{Conclusion}

We fulfilled our goal to show that (without second quantization operators)
all the kinematic properties of a free spin 1/2
particle with mass are present in the real solutions of the
real free Dirac equation.

Since we live in a world where the Lorentz symmetries are important,
we hope that the Majorana transforms can have some applications. 
I personally think that the study of the Majorana spinor properties
will be useful in our understanding of the Standard Model. 
In particular, since the Majorana spinors are an irreducible
representation of the double cover of the proper orthochronous Lorentz
group, like the Weyl spinor, as well as the full Lorentz group, unlike the Weyl spinor,
I think that their study might improve our knowledge about the
discrete symmetries of the Lorentz group and the interactions that
violate them.

\section*{Acknowledgements}
The work of Leonardo Pedro was supported by FCT under contract
SFRH/BD/70688/2010.

\addcontentsline{toc}{section}{References}
\bibliography{qed}{}
\bibliographystyle{utphys}

\end{document}